\newtheorem{lemma}{Lemma}
\newtheorem{thm}{Theorem}
\newtheorem{rem}{Remark}
\newtheorem{ex}{Example}
\begin{document}

\title{The diversity-multiplexing tradeoff of the MIMO Z interference channel}

\author{{\Large Sanjay ~Karmakar ~~~~~~Mahesh~ K. ~Varanasi}
\thanks{The authors are with the Department
of Electrical Computer and Energy Engineering, University of Colorado at Boulder, Boulder,
CO, 30809 USA e-mail: (sanjay.karmakar@colorado.edu, varanasi@colorado.edu). This work was presented in part at the  2010 IEEE International Symposium on Information Theory (ISIT), Jun. 2010, Austin, TX, USA \cite{Karmakar-Varanasi-Z-IC-ISIT}}
}

%

\maketitle

\begin{abstract}
The fundamental generalized diversity-multiplexing tradeoff (GDMT) of the quasi-static fading MIMO Z interference channel (Z-IC)  is established for the general Z-IC with an arbitrary number of antennas at each node under the assumptions of full channel state information at the transmitters (CSIT) and a short-term average power constraint. In the GDMT framework, the direct link signal-to-noise ratios (SNR) and cross-link interference-to-noise ratio (INR) are allowed to vary so that their ratios relative to a nominal SNR in the dB scale, i.e., the SNR/INR exponents, are fixed. It is shown that a simple Han-Kobayashi message-splitting/partial interference decoding scheme that uses only partial CSIT -- in which the second transmitter's signal depends only on its cross-link channel matrix and the first user's transmit signal doesn't need any CSIT whatsoever -- can achieve the full-CSIT GDMT of the MIMO Z-IC. The GDMT of the MIMO Z-IC under the No-CSIT assumption is also obtained for some range of multiplexing gains. 
The size of this range depends on the numbers of antennas at the four nodes and the SNR and INR exponents of the direct and cross links, respectively. For certain classes of channels including those in which the interfered receiver has more antennas than do the other nodes, or when the INR exponent is greater than a certain threshold, the GDMT of the MIMO Z-IC under the No-CSIT assumption is completely characterized. 
\end{abstract}

\begin{IEEEkeywords}
Channel state information, Diversity-multiplexing gain, Interference channel, MIMO, Quasi-static fading, Z channel.
\end{IEEEkeywords}
%

\IEEEpeerreviewmaketitle

\newpage

\section{Introduction}
An information theoretic consideration of a multiuser wireless network begins by allowing the transmitters and receivers to function in arbitrarily complex ways and proceeds to narrow down and perhaps even identify specific communication schemes that are optimal in a capacity or an approximated capacity sense. The engineering of wireless networks has however proceeded to a large extent on a common sense approach of restricting different groups of users to operate in weakly interfering or disjoint signal spaces, as in cellular networks. This has, at least in part, lead information theorists to focus considerable efforts on investigating the fundamental limits of point-to-point links and multiple-access (many-to-one) and broadcast (one-to-many) sub-networks as building blocks of larger wireless networks. However, new communication schemes that have emerged out of recent information theoretic research on multiple uni/multi-cast interference networks -- such as message splitting/partial interference decoding (cf. \cite{Han_Kobayashi,CMG,ETW1,TT,Sanjay_Varanasi_constant_gap_to_capacity}) and interference alignment (cf. \cite{MMK-IA,Cadambe-Jafar-IA,Vaze_Varanasi_MIMOIC_Delayed,Wang_Gou_Jafar}) -- suggest that the idea of limiting {\em a priori} the design of wireless networks by partitioning them into only those sub-networks may be highly suboptimal in terms of throughput performance, and must therefore be revisited.


In this paper, we study a two uni-cast network model known as the one-sided interference channel or the Z-interference channel (Z-IC). In the Z-IC, two transmitters communicate to their corresponding receivers in the same signal space but only one of the transmitters interferes with its unpaired receiver (see Fig.~\ref{channel_model_Z_IC}). Indeed, the Z-IC is the simplest interference network model that simultaneously contains the two important phenomenon in wireless networks that occur separately in multiple-access and broadcast networks, namely {\it superposition} and {\it broadcast}, respectively. Besides being a basic building block of general interference networks, Z-ICs are also the natural information theoretic model for various practical wireless communication scenarios such as femto-cells~\cite{Femto} or a line network of four nodes
where the two transmitters $(Tx_1,~Tx_2)$ and their corresponding receivers $(Rx_1,~Rx_2)$ are interconnected by the links described by using the arrows in $Tx_1~\rightarrow~ Rx_1~\leftarrow ~Tx_2~\rightarrow~ Rx_2$. Moreover, the Z-IC is a special case of the $2$-user interference channel (IC) where both transmitters interfere at their respective unpaired receivers. Thus optimal coding and decoding schemes (with optimality defined in an exact or some approximated capacity sense) on a Z-IC may reveal useful insights into the $2$-user IC as well. 


In spite of considerable research on the Z-IC, including the important cases of the Gaussian SISO (single-input, single-output) or MIMO Z-IC, its capacity region remains unknown in general. The capacity region has however been found for various special cases with restricted values of the channel parameters in  \cite{Sato78,Sato,Costa85,Sason,Liu-Ulukus,SCKP,Motahari_Khandani,Liu_Goldsmith}. Moreover, all of those results on exact capacity are derived under the assumption that the channel coefficients are time-invariant. More recently, the characterization of the capacity region of the ergodic fading SISO Z-IC was given to within a bounded gap of 12.8 bits in \cite{Zhu-Guo-ZIC} independently of the SNR and fading statistics in the ``No-CSIT" scenario wherein the transmitters have knowledge of the statistics of of the fading channel coefficients but not their realizations. However, the central enabling idea of this result, namely that of approximating the fading channel by time-varying deterministic channels, otherwise known as the layered erasure model, can not be extended to the MIMO case (see Section II-E of \cite{Avestimehr-net-inf-flow} for more details on this point). At the same time, the enormous potential throughput/reliability benefits of employing multiple antenna nodes cannot be ignored in many present and most future wireless networks. These considerations motivate the study of the fading MIMO Z-IC in this work under the assumptions of full CSIT as well as No-CSIT. Under the No-CSIT assumption the fast fading MIMO IC has been studied recently in terms of the generalized degrees of freedom (GDoF) in \cite{Chinmay_Sanjay_Varanasi_isit2011} and in terms of the degrees of freedom (DoF) in \cite{Ke-Wang} when the transmitters have reconfigurable antennas. While the DoF (GDoF) metrics are important in that they capture the simultaneously available time/frequency/space (and signal-level) dimensions in a network, and hence characterize the so-called ``pre-log" factors in the manner in which capacity regions scale with increasing SNR, they do not reveal any information about the other crucial aspect of practical networks, namely, {\em reliability} of communication, which traditionally would be revealed for fixed channels through the characterization of (or bounds on) error exponents (cf. \cite{GallagerRG:info,GallagerRG:multi,GuessT:errorExp}) for rate-tuples within the capacity region.



In this paper, we study the MIMO Z-IC under quasi-static Rayleigh fading in which codeword lengths are comparable to the coherence time of the channel, and hence experience a single fading realization, in the limit of large block length. Since any non-zero rate pair, no matter how small, cannot be supported by some set of fading realizations with positive probability, the Shannon theoretic capacity region of this network is meaningless. An outage view is therefore adopted in which the probability of the set of channel realizations that cannot support a given rate pair (no matter what admissible communication scheme is used) is called the outage probability. Clearly, there is a tradeoff between the achievable rate and outage probability. The ``larger" the rate pair, the larger is the set of channel realizations for which that rate pair cannot be reliably supported, and hence larger the outage probability. The well-known diversity-multiplexing tradeoff (DMT) -- introduced in the seminal work of Zheng and Tse in the context of a point-to-point MIMO channel \cite{tse1} 
-- succinctly captures this tradeoff in the high SNR regime. However, unlike the DMT framework in a point-to-point channel where the communication link can be characterized by a single SNR, in a multiuser setting such as the MIMO Z-IC, there are multiple links (two direct links and one cross-link in the MIMO Z-IC) and it is more the rule rather than the exception in practical scenarios that signals and interferences are received at a terminal at strengths that can be significantly different. Consequently, there is much additional insight to be gained by allowing the two direct-link SNRs and the cross-link interference-to-noise ratio (INR) to vary with a nominal SNR, denoted as $\rho$, in such a manner that their ratio relative to the nominal SNR {\em in the dB scale} are taken to be some fixed but arbitrary numbers (these are henceforth called SNR/INR exponents). This important idea was introduced in the DoF study of the two-user time-invariant SISO IC in \cite{ETW1} resulting in its so-called Generalized DoF (GDoF) region which was subsequently generalized by the authors of this paper to the two-user time-invariant MIMO IC in \cite{Karmakar-Varanasi-GDoF}. The works of \cite{ArPv} and \cite{ABarxiv} later adopted the model of scaling SNRs and INRs in the manner of \cite{ETW1} within the DMT framework for quasi-static fading SISO ICs. The resulting DMT which is a function of not only the multiplexing gains but also the SNR and INR exponents is referred to as the generalized DMT (GDMT) to distinguish it from the usual DMT (used for example in the case of the MIMO multiple-access channel (MAC) in \cite{TseD:MAC-DMT,PrasadN:VB:Aller03}) in which all SNR/INR exponents are set equal to unity.

Other than the conference version of this work \cite{Karmakar-Varanasi-Z-IC-ISIT}, prior results on the DMT of the Z-IC can be divided into two classes depending on the assumption on the availability of channel state information at the transmitters (CSIT) or its complete lack thereof (No-CSIT). In \cite{ABK}, the authors derive the achievable No-CSIT DMT on a SISO Z-IC of a scheme where the transmitters use Gaussian codes while on the receiver side interference is treated as noise at the interfered receiver (ITN). In that work, the outage event of only the interfered receiver is analyzed. 
More recently in \cite{Nafea-Seddik-Nafie-Gamal-ZIC}, the achievable DMT of the Han-Kobayashi (HK) coding scheme \cite{Han_Kobayashi} on a SISO Z-IC was characterized assuming fixed power splitting among the private and public messages at the interfering (second) transmitter while considering the general case in which the two receivers are allowed to achieve different diversity orders for any targeted multiplexing-gain pair. In \cite{Nafea-Seddik-Nafie-Gamal-ZIC-ARQ}, this result was extended to the Z-IC with automatic repeat request (ARQ). It should be emphasized that all of the above mentioned results present only lower bounds to the fundamental DMT of the channel, the characterization of which even for the SISO Z-IC is a wide open problem. In contrast to the No-CSIT case, there is no direct result available in the literature on the DMT of the Z-IC with CSIT. However, the DMT of a 2-user SISO IC with CSIT obtained in \cite{EaOl} can also be achieved on the corresponding Z-IC and hence serves as a lower bound to the DMT of the SISO Z-IC with CSIT. This is because the Z-IC is more capable than the corresponding two-user IC because the second receiver in the former does not encounter any interference. In summary, the Z-IC with CSIT has not been studied previously from the DMT perspective while only lower bounds to the DMT of Z-ICs are available in the No-CSIT case, and that too just in the SISO case.

In this work (and in \cite{Karmakar-Varanasi-Z-IC-ISIT}), we completely characterize the GDMT of the MIMO Z-IC with CSIT in its most general form, i.e., with arbitrary number of antennas at each node and with arbitrary SNR/INR exponents. On the No-CSIT front, our result improves upon and generalizes the previous results in the following two directions: 1) in contrast to only an achievable DMT (i.e., a lower bound on the DMT of the channel) of a particular coding scheme derived in earlier works, we characterize here the fundamental DMT itself, i.e., the best achievable DMT among all possible coding schemes and 2) we consider the MIMO Z-IC with an arbitrary number of antennas at each node, a particular (single-antenna) case of which of course establishes the fundamental GDMT of the SISO Z-IC. For instance, in Example~\ref{ex-siso-strong-no-csit-zic} we obtain the fundamental DMT of a class of Z-ICs restricted to the SISO case. This GDMT coincides with the achievable DMT of the {\it common message only} (CMO) coding scheme in \cite{Nafea-Seddik-Nafie-Gamal-ZIC}. Even in this special case, our result provides the fundamental GDMT, and hence both achievability and converse results, whereas \cite{Nafea-Seddik-Nafie-Gamal-ZIC} provides only an achievability result. Moreover, Example \ref{ex-siso-strong-no-csit-zic} is a special case of a more general result given in the conference version of this work in \cite{Karmakar-Varanasi-Z-IC-ISIT} (see Lemma 5 therein) which precedes \cite{Nafea-Seddik-Nafie-Gamal-ZIC}.


As a stepping stone to characterizing the GDMT we first derive inner and outer bound regions to the instantaneous capacity region of the MIMO Z-IC. These regions are defined in terms of three bounds each on the two individual rates and one sum rate. The corresponding bounds on these rates of the inner and outer regions are shown to be within a constant gap of each other (independently of all channel parameters), therefore characterizing the instantaneous capacity within a bounded gap. This result is obtained in a straightforward manner from our bounded gap result for the 2-user MIMO IC in \cite{Sanjay_Varanasi_constant_gap_to_capacity}. The achievability of the inner region is based on a simple coding scheme chosen from the family of Han-Kobayashi coding schemes \cite{Han_Kobayashi}, where the signal to be transmitted by the second (interfering) user depends only on its channel matrix to the first (interfered) receiver, whereas the transmitted signal of the first transmitter does not use any CSIT. 

The analysis of the outage probabilities formulated based on the achievable rate region and the outer bound region yield a lower and an upper bound to the GDMT of the MIMO Z-IC, respectively. Since the achievable rate region and the outer region are within a constant gap to each other, and in the high SNR regime a constant gap is insignificant, the two GDMT bounds are identical. Therefore, the GDMT of the above mentioned HK coding scheme actually represents the fundamental GDMT of the channel. However, in contrast to the GDMT of the SISO case~\cite{ABK},~\cite{Nafea-Seddik-Nafie-Gamal-ZIC} which requires the joint statistics of three mutually independent exponentially distributed {\em scalar} random variables, the characterization of the GDMT of the MIMO Z-IC involves the joint eigenvalue distribution of three mutually correlated random Wishart matrices. The characterization of such joint eigenvalue statistics in general is a challenging problem. A similar problem was recently solved by the authors in \cite{DDF-DMT-Journal-IT} in establishing the DMT of the MIMO relay channel. Using the result therein, we first derive the joint distribution that is relevant to the problem of determining the GDMT of the MIMO Z-IC. This distribution result allows us to characterize the GDMT of the MIMO Z-IC with CSIT as the solution of a convex optimization problem. While such an optimization problem can be solved using numerical methods, closed-form solutions are also derived for several special cases for deeper insight. 

Next, we drop the assumption of CSIT and characterize the achievable GDMT of a transmission scheme which does not utilize any CSI whatsoever at the transmitters. Clearly, this gives a lower bound to the No-CSIT DMT of the channel. On the other hand, the full-CSIT GDMT of the Z-IC serves as an upper bound to the No-CSIT DMT of the channel. Using these lower and upper bounds, we characterize the fundamental GDMT under the NO-CSIT assumption of two classes of MIMO Z-ICs for which they coincide. The first class consists of MIMO Z-ICs with an equal number of antennas at all four nodes and an SNR exponent of the cross-link that is larger than a certain threshold (e.g.,Theorem~\ref{thm:DMT-symmetric-no-CSIT}) and the second class consists of MIMO Z-ICs in which the number of antennas at the interfered receiver is larger than a certain threshold (e.g., Theorem~\ref{thm:larger_antenna}). To the best of our knowledge, this paper provides the first characterizations of the fundamental GDMT on MIMO Z-ICs (including the SISO case) under both CSIT and No-CSIT scenarios.

The rest of the paper is organized as follows. In Section~\ref{sec_channel_model}, we provide a description of the MIMO Z-IC model and the definition of the GDMT framework in Section \ref{subsec:def-dmt}. In Section~\ref{subsec:ZIC-approximate-capacity}, we derive inner and outer bounds to the instantaneous capacity regions of the channel which is then used in Section~\ref{sec_DMT_Z_channel} to establish its GDMT as a solution of a convex optimization problem. In Section \ref{subsec:ZIC-analytic-solutions}, we derive analytic solutions to this general optimization problem and hence the GDMT of various special classes of MIMO Z-ICs. In Section \ref{sec:ZIC-no-CSIT}, we characterize the GDMT under the no-CSIT assumption and conclude with Section~\ref{sec:conclusion}. Some of the proofs are relegated to the appendix to enable a clearer exposition of the main results of this paper.

{\bf Notations:}
We denote the conjugate transpose of the matrix $A$ as $A^{\dagger}$ and its determinant as $|A|$. $\mathbb{C}$ and $\mathbb{R}$ represent the field of complex and real numbers, respectively. The set of real numbers $\{x\in \mathbb{R}: a\leq x\leq b\}$ will be denoted by $[a,b]$. Furthermore, $(x\land y)$, $(x\vee y)$ and $(x)^+$ represent the minimum of $x$ and $y$, the maximum of $x$ and $y$, and the maximum of $x$ and $0$, respectively. All the logarithms in this paper are with base $2$. We denote the distribution of a complex circularly symmetric Gaussian random vector with zero mean and covariance matrix $Q$ as $\mathcal{CN}(0,Q)$. Any two functions $f(\rho)$ and $g(\rho)$ of $\rho$, where $\rho$ is the signal to noise ratio (SNR) defined later, are said to be exponentially equal and denoted as $f(\rho)\dot{=}g(\rho)$ if,
$\lim_{\rho \to \infty} ~\frac{\log(f(\rho))}{\log(\rho)} = \lim_{\rho \to \infty} ~\frac{\log(g(\rho))}{\log(\rho)}$. The operations $\dot{\geq}$ and $\dot{\leq}$ are defined similarly.

\section{Channel Model and Preliminaries}
\label{sec_channel_model}
We consider a MIMO Z-IC as shown in Fig. \ref{channel_model_Z_IC}, where user $1$ ($Tx_1$) and user $2$ ($Tx_2$) have $M_1$ and $M_2$ antennas and receiver $1$ ($Rx_1$) and $2$ ($Rx_2$) have $N_1$ and $N_2$ antennas, respectively. This channel will be referred to henceforth as the $(M_1,N_1,M_2,N_2)$ Z-IC. A slow fading Rayleigh distributed channel model is considered where ${H}_{ij}\in \mathbb{C}^{N_j\times M_i}$ represents the channel matrix between $Tx_i$ and $Rx_j$ and it is assumed that $H_{11}$, $H_{21}$ and $H_{22}$ are mutually independent and contain mutually independent and identically distributed (i.i.d.) $\mathcal{CN}(0,1)$ entries. These channel matrices remain fixed for a particular fade duration of the channel and change in an i.i.d. fashion in the next. Perfect channel state information is assumed at both the receivers (CSIR) and both the transmitters (CSIT) at first and following that the No-CSIT case is considered where there is only CSIR. At time $t$, $Tx_i$ chooses a vector ${X}_{it}\in \mathbb{C}^{M_i\times 1}$ and sends $\sqrt{P_i}{X}_{it}$ over the channel, where the input signals are assumed to satisfy the following short term average power constraint:

\begin{IEEEeqnarray}{c}
\label{power_constraint}
\frac{1}{N}\sum_{t=1+N(k-1)}^{N+N(k-1)}\textrm{tr}(Q_{it}) \leq 1, \forall ~k\geq 1,~ i=1,2,~\textrm{where}~Q_{it}=\mathbb{E}\left({X}_{it}{X}_{it}^{\dagger}\right),
\end{IEEEeqnarray}
where $N$ represents the fade duration or coherence time of the network in which the channel matrices remain fixed.
\begin{rem}
Note that since the transmitters are not allowed to allocate power across different fades of the channel, the channel is in the outage setting, i.e., the delay limited capacity of each of the links of the Z-IC is zero~\cite{HanlyTse}. Thus, the MIMO Z-IC under the short-term power constraint is outage limited and its DMT is meaningful.
\end{rem}

\begin{figure}[!thb]
\setlength{\unitlength}{1mm}
\begin{picture}(80,40)
\thicklines
\put(60,10){\vector(1,0){40}}
\put(60,32){\vector(1,0){40}}
\put(60,11){\vector(2,1){40}}
\put(50,9){$\mathbf{(M_2)}$}
\put(109,9){$\mathbf{(N_2)}$}
\put(50,31){$\mathbf{(M_1)}$}
\put(109,31){$\mathbf{(N_1)}$}

\put(43,9){$\mathbf{Tx_2}$}
\put(102,9){$\mathbf{Rx_2}$}
\put(43,31){$\mathbf{Tx_1}$}
\put(102,31){$\mathbf{Rx_1}$}

\put(78,12){$\mathbf{{H}_{22}}$}
\put(78,34){$\mathbf{{H}_{11}}$}
\put(62,17){$\mathbf{{H}_{21}}$}
\end{picture}
\caption{Channel model for the Z-IC.}
\label{channel_model_Z_IC}
\end{figure}
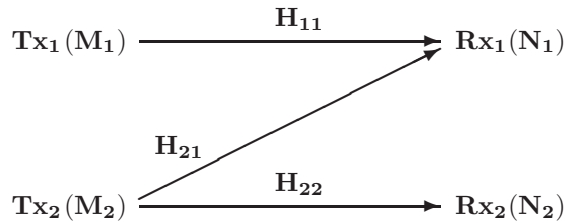

The received signals at time $t$ can be written as
\begin{IEEEeqnarray*}{l}
\label{system_eq_two_user_IC1}
Y_{1t}=\eta_{11} \sqrt{P_1} {H}_{11}{X}_{1t}+ \eta_{21} \sqrt{P_2} {H}_{21}{X}_{2t}+Z_{1t},\\
Y_{2t}= \eta_{22} \sqrt{P_2} {H}_{22}{X}_{2t}+Z_{2t},
\end{IEEEeqnarray*}
where $Z_{it}\in\mathbb{C}^{N_i\times 1}$ are i.i.d as $\mathcal{CN}(\mathbf{0}, I_{N_i})$ across $i$ and $t$ and $\eta_{ij}$ represents the signal attenuation factor~\cite{ABarxiv} from $Tx_i$ to $Rx_j$. The above equations can be equivalently written as
\begin{IEEEeqnarray}{l}
\label{system_eq_two_user_IC2}
Y_{1t}=\sqrt{\textrm{SNR}_{11}}H_{11}\hat{X}_{1t}+ \sqrt{\textrm{INR}_{21}}H_{21}\hat{X}_{2t}+Z_{1t};\\
Y_{2t}= \sqrt{\textrm{SNR}_{22}}H_{22}\hat{X}_{2t}+Z_{2t},
\end{IEEEeqnarray}
where the normalized inputs $\hat{X}_i$s satisfy equation \eqref{power_constraint} with equality and $\textrm{SNR}_{ii}$ and $\textrm{INR}_{ji}$ are the signal-to-noise ratio and interference-to-noise ratio, respectively, at receiver $i$, denoted henceforth as $\rho_{ii}$ and $\rho_{ji}$, respectively. In practice the SNRs and INRs of the various links in a wireless network will in general be arbitrarily disparate. Further, since the difference in performance due to those disparities cannot be captured through the DMT metric if the signal strengths differ only by say a constant factor, we let the different SNRs and INRs to vary exponentially with respect to a nominal SNR, denoted as $\rho$, with different {\em SNR/INR exponents} as follows:
\begin{IEEEeqnarray}{l}
\alpha_{11}=\frac{\log (\textrm{SNR}_{11})}{\log(\rho)}, \quad \alpha_{22}=\frac{\log(\textrm{SNR}_{22})}{\log(\rho)}, \quad \alpha_{21}=\frac{\log(\textrm{INR}_{21})}{\log(\rho)}.
\end{IEEEeqnarray}
For brevity, in the sequel we shall use the following notations: $\mathcal{H}=\{H_{11},H_{21},H_{22}\}$, $\bar{\rho}=[\rho_{11},\rho_{21},\rho_{22}]$ and $\bar{\alpha}=[\alpha_{11},\alpha_{21},\alpha_{22}]$. 


The DMT framework is defined carefully in \cite{tse1} for the MIMO link and for the MIMO MAC in \cite{TseD:MAC-DMT,PrasadN:VB:Aller03}. In what follows, we define the GDMT framework for the MIMO Z-IC that includes the SNR/INR exponents.

\subsection{The GDMT of the MIMO Z-IC}
\label{subsec:def-dmt}
Consider a coding scheme consisting of a family of codebook pairs $\{(\mathcal{C}_1(\rho), \mathcal{C}_2(\rho))\}$, parameterized by $\rho $, with $\mathcal{C}_i(\rho)$, the codebook for the $i^{th}$ transmitter, having $2^{LR_i(\rho)}$ codewords with block length $L$ and rate $R_i(\rho)$. The multiplexing gain pair of this coding scheme, denoted as $(r_1, r_2)$, is defined through
\begin{IEEEeqnarray}{l}
\label{eq:def-mg}
r_i=\lim_{\rho\to \infty} \frac{R_i(\rho)}{\log(\rho)} \, ,~\textrm{for }~i=1,2.
\end{IEEEeqnarray}

\begin{rem}
The maximum asymptotic rate supportable by the first direct link (when $Tx_2$ is silent) is $R_1^{\max} \approx \min\{M_1,N_1\}\log(\rho_{11})$. Hence
$ r_1^{\max}=\min\{M_1,N_1\}\alpha_{11} $, so that if $\alpha_{11}>1 $, it is strictly greater than $ \min\{M_1,N_1\} $.
That the direct link can support a multiplexing-gain strictly greater than $\min\{M_1,N_1\}$ may seem like a contradiction but this is only a consequence of the multiplexing gains being defined in \eqref{eq:def-mg} with respect to (w.r.t.) the nominal SNR $\rho$. With respect to the direct link's SNR $\rho_{11}$ the maximum multiplexing-gain is still $\min\{M_1,N_1\}$, irrespective of the value of $\alpha_{11}$. 
Alternatively, we can set $\rho=\rho_{11}$ without loss of generality, which amounts to setting $\alpha_{11}=1$.
\end{rem}

Now, to define the metric of primary importance in this work, namely the GDMT, let $\mathcal{P}_{e,\mathcal{C}}(\bar{\rho},r_1,r_2)$ represent the larger of the two average probability of errors at the two receivers (averaged over the additive Gaussian noise, all codewords of the codebook pair and over the randomness of the channel) at a multiplexing gain pair $(r_1,r_2)$ and SNR of $\rho$. Let $\mathcal{P}_{e}^*(\bar{\rho},r_1,r_2)$ represent the infimum of $\mathcal{P}_{e,\mathcal{C}}(\bar{\rho},r_1,r_2)$ among all possible admissible coding schemes (that satisfy the short-term power constraint).
Then the fundamental GDMT of the MIMO Z-IC is given as
\begin{IEEEeqnarray}{l}
d_{\textrm{Z-IC}}^*(r_1,r_2)=\lim_{\rho \to \infty} \frac{-\log\left(\mathcal{P}^*_e(\bar{\rho},r_1,r_2)\right)}{\log(\rho)}.
\end{IEEEeqnarray}
Note that $d_{\textrm{Z-IC}}^*(r_1,r_2)$ is a function of the relative scaling parameters of the different links $\bar{\alpha}$. However, for brevity of notation, we do not make this dependence explicit.

A general approach to characterizing the GDMT of a communication network, whose exact maximum instantaneous end-to-end mutual information (MIMI) region -- namely, the capacity region of the network given any fixed channel realization) -- is not known, is to find inner and outer bounds to it given that fixed channel realization. Then, from these bounds, lower and upper bounds on the probability of an appropriately defined outage event are derived, respectively. If these bounds have identical exponents (relative to SNR) then that exponent is the GDMT of the network. In the following section, we specify precisely such inner and outer bounds on the MIMI region.

\subsection{Inner and Outer Bounds to the MIMI Region}
\label{subsec:ZIC-approximate-capacity}
We begin by first treating the channel as being fixed and time-invariant. The MIMI region is the same as the capacity region for the fixed channel and we provide inner and outer bounds to it by exploiting our more general results on inner and outer bounds on the capacity region that are within a bounded gap for the time-invariant MIMO IC in \cite{Sanjay_Varanasi_constant_gap_to_capacity}. The simple idea is that the MIMO Z-IC is a special case of the MIMO IC corresponding to the channel matrix from the first transmitter to the second receiver being identically equal to zero. Since the bounded gap in the result of \cite{Sanjay_Varanasi_constant_gap_to_capacity} is independent of channel parameters it also holds for the MIMO Z-IC. The following lemmas specify the inner and outer bounds for completeness and easy reference in the GDMT analysis.
\begin{lemma}
\label{lem:ZIC-capacity-upper-bound}
The capacity (or MIMI) region of the $(M_1,N_1,M_2,N_2)$ Z-IC with CSIT, for a given realization of channel matrices $\mathcal{H}$, denoted by $\mathcal{C}(\mathcal{H},\bar{\rho})$, is contained in or outer bounded by the set of real-tuples $\mathcal{R}^u(\mathcal{H},\bar{\rho})$ defined as the set of non-negative rate pairs $(R_1,R_2)$ that satisfy the following three constraints:
\begin{IEEEeqnarray*}{l}
\label{eq_bound1}
R_i\leq \log \left| \left(I_{N_i}+\rho_{ii} H_{ii}H_{ii}^{\dagger}\right)\right|\triangleq I_{bi}, ~i\in\{1,2\};\\
R_1+R_2\leq \log \left| \left(I_{N_1}+\rho_{21} H_{21}H_{21}^{\dagger}+\rho_{11}  H_{11}H_{11}^{\dagger}\right)\right|+ \log \left| \left(I_{N_2}+\rho_{22}  H_{22} \left(I_{M_2}+\rho_{21}  H_{21}^{\dagger}H_{21}\right)^{-1}H_{22}^{\dagger}\right)\right|\triangleq I_{bs}. \label{eq_bound3}
\end{IEEEeqnarray*}
\end{lemma}

The above result is obtained simply by substituting $H_{12}=0_{N_2\times M_1}$ in Lemma 1 of 
\cite{Sanjay_Varanasi_constant_gap_to_capacity}. In what follows, we find an inner bound to the MIMI region which is the achievable rate region of a specific coding scheme where the first transmitter uses a random Gaussian code book and the second transmitter uses a two-level linear Gaussian superposition code as follows
\begin{equation}
\begin{array}{c}
X_2=U_2+W_2,
\end{array}
\end{equation}
where $U_2$ (hereafter called the private sub-message) and $W_2$ (the public sub-message) are mutually independent complex Gaussian random vectors with covariance matrices given as
\begin{IEEEeqnarray}{c}
\label{eq_power_split}
\mathbb{E}(X_1X_1^{\dagger})=I_{M_1}, \quad \mathbb{E}(W_2W_2^{\dagger})=\frac{I_{M_2}}{2} \quad \textrm{and} \quad 
\mathbb{E}(U_2U_2^{\dagger})=\frac{1 }{2}\left(I_{M_2}+\rho^{\alpha_{21}} H_{21}^{\dagger}H_{21}\right)^{-1}.
\end{IEEEeqnarray}

Note that this covariance split satisfies the power constraint in equation \eqref{power_constraint}. The above coding scheme is clearly a specific Han-Kobayashi coding scheme where further the first transmitter's message is treated entirely as a public message. While the GDMT characterized in this paper is the fundamental GDMT of the MIMO Z-IC when both transmitters have CSIT, it will be clear soon that the above coding scheme which uses the knowledge of CSI only to the extent that $Tx_2$ needs the knowledge of $H_{21}$, can achieve the full-CSIT GDMT.

\begin{lemma}
\label{lem:ZIC-achievable-region}
For a given channel realization $\mathcal{H}$, the above described coding scheme can achieve a region $\mathcal{R}^l\left(\mathcal{H},\bar{\rho}\right)$ of non-negative rate pairs $(R_1,R_2)$ that satisfies the following constraints:
\begin{IEEEeqnarray*}{rl}
R_i &\leq \log \left| \left(I_{N_i}+\rho_{ii} H_{ii}H_{ii}^{\dagger}\right)\right|-n_i\triangleq I_{li}, ~i\in\{1,2\};\\
R_1+R_2 &\leq \log \left| \left(I_{N_1}+\rho_{21} H_{21}H_{21}^{\dagger}+\rho_{11}  H_{11}H_{11}^{\dagger}\right)\right|+ \log \left| \left(I_{N_2}+\rho_{22}  H_{22} \left(I_{M_2} + \rho_{21}  H_{21}^{\dagger}H_{21}\right)^{-1}H_{22}^{\dagger}\right)\right| - (n_1+n_2)\triangleq I_{ls},
\end{IEEEeqnarray*}
where
\begin{equation}
\label{eq:def-ni}
n_i=\max\left\{\left(m_{ii}\log(M_i)+m_{ij}\log(M_i+1)\right),\min\{N_i,M_s\}\log(M_x)\right\}+\hat{m}_{ji}, ~\textrm{for}~ 1\leq i\neq j\leq 2
\end{equation}
with $M_x=\max\{M_1,M_2\}$, $M_s=(M_1+M_2)$, $m_{ij}$ representing the rank of the matrix $H_{ij}$, and $\hat{m}_{ij}=m_{ij}\log\left(\frac{(M_i+1)}{M_i}\right)$. Note that $m_{ij}\leq \min\{M_i,N_j\}$.
\end{lemma}
The above result follows by substituting $H_{12}=0_{N_2\times M_1}$ in Lemma~4 of \cite{Sanjay_Varanasi_constant_gap_to_capacity}.
The inner and outer bounds to the MIMI region of the quasi-static fading MIMO Z-IC of Lemmas \ref{lem:ZIC-capacity-upper-bound} and \ref{lem:ZIC-achievable-region}, which are within a bounded gap from each other, can be used to derive its GDMT as we show next. With the outage event $\mathcal{O}$ defined as the set of channels for which the MIMI region does not contain a given rate pair, i.e., $  \mathcal{O} \triangleq \{\mathcal{H}:(R_1,R_2)\notin \mathcal{C}(\mathcal{H},\bar{\rho})\} $, and following a method similar to that of \cite{tse1}, it can be easily proved that $  \mathcal{P}^*_e\left(\bar{\rho},r_1,r_2\right)\dot{=}\Pr\left(\mathcal{O}\right) $, where $\mathcal{P}^*_e\left(\bar{\rho},r_1,r_2\right)$ is the best case average probability of error achievable on the MIMO Z-IC as defined in Section \ref{subsec:def-dmt}. Now, from Lemma~\ref{lem:ZIC-capacity-upper-bound} and \ref{lem:ZIC-achievable-region} we have that for any realization of the channel matrices $\mathcal{H}$, $ \mathcal{R}^l(\mathcal{H},\bar{\rho})\subseteq   \mathcal{C}(\mathcal{H},\bar{\rho})\subseteq   \mathcal{R}^u(\mathcal{H},\bar{\rho}) $ from which we have the sequence of implications below: 
\begin{IEEEeqnarray}{rl}\label{xxx}
   &\{\mathcal{H}:(R_1,R_2)\notin \mathcal{R}^u(\mathcal{H},\bar{\rho})\} \subseteq  \mathcal{O} \subseteq \{\mathcal{H}:(R_1,R_2)\notin \mathcal{R}^l(\mathcal{H},\bar{\rho})\};\nonumber\\
  \quad  \longrightarrow ~ &\Pr\left\{(R_1,R_2)\notin \mathcal{R}^u(\mathcal{H},\bar{\rho})\right\} \dot{\leq}  \mathcal{P}^*_e\left(\bar{\rho},r_1,r_2\right) \dot{\leq} \Pr\left\{(R_1,R_2)\notin \mathcal{R}^l(\mathcal{H},\bar{\rho})\right\};\nonumber\\
   \quad \longrightarrow ~&\Pr\left\{(R_1,R_2)\notin \mathcal{R}^u(\mathcal{H},\bar{\rho})\right\} \dot{\leq}  \rho^{-d_{\textrm{Z-IC}}^*(r_1,r_2)} \dot{\leq} \Pr\left\{(R_1,R_2)\notin \mathcal{R}^l(\mathcal{H},\bar{\rho})\right\};\nonumber \\
   \quad \longrightarrow ~&\Pr\left\{\cup_{i}\{ I_{bi}\leq R_i\}\right\} \dot{\leq}  \rho^{-d_{\textrm{Z-IC}}^*(r_1,r_2)} \dot{\leq} \Pr\left\{\cup_{i} \{ I_{li}\leq R_i\}\right\};\nonumber\\
\quad \longrightarrow ~&\max_{i\in\{1,2,s\}}\Pr\left\{I_{bi}\leq R_i\right\} \dot{\leq}  \rho^{-d_{\textrm{Z-IC}}^*(r_1,r_2)} \dot{\leq} \sum_{i=1,2,s}\Pr\left\{I_{li}\leq R_i\right\},\\
\label{eq:bounds-on-outage-probability}
\quad \longrightarrow ~&\max_{i\in\{1,2,s\}}\Pr\left\{I_{bi}\leq R_i\right\} \dot{\leq}  \rho^{-d_{\textrm{Z-IC}}^*(r_1,r_2)} \dot{\leq} \max_{i\in\{1,2,s\}}\Pr\left\{I_{li}\leq R_i\right\},
\end{IEEEeqnarray}
where $R_{s}=(R_1+R_2)$ and $I_{bi}$'s and $I_{li}$'s are as defined in Lemma~\ref{lem:ZIC-capacity-upper-bound} and Lemma~\ref{lem:ZIC-achievable-region}, respectively. Note that, $I_{bi}=I_{li}+n_i$, for $i=1,2$ and $I_{bs}=I_{ls}+(n_1+n_2)$ where $n_i$'s given by equation \eqref{eq:def-ni} are constants independent of $\rho$, for $i\in \{1,2\}$, which becomes insignificant at asymptotic SNR. Therefore, at asymptotic values of $\rho$ equation \eqref{eq:bounds-on-outage-probability} is equivalent to
\begin{equation*}
 \rho^{-d_{\textrm{Z-IC}}^*(r_1,r_2)}\dot{=} \max_{i\in\{1,2,s\}}\Pr\left\{I_{bi}\leq R_i\right\},
\end{equation*}
which can be written as
\begin{IEEEeqnarray}{rl}
\label{eq_dmt_exp1}
d_{\textrm{Z-IC}}^*(r_1,r_2)=& \min_{i\in \mathcal{I}}{d_{O_i}({r}_i)},
\end{IEEEeqnarray}
where 
\begin{IEEEeqnarray}{rl}
\label{eq:outage_probability}
d_{O_i}({r}_i)=& \lim_{\rho \to \infty} -\frac{\Pr\left(I_{bi}\leq r_i\log(\rho)\right)}{\log(\rho)},
\end{IEEEeqnarray}
for all $i\in \mathcal{I}=\{1,2,s\}$ and $r_s=(r_1+r_2)$.

The only remaining step to characterize the GDMT completely is to evaluate the probabilities in equation \eqref{eq:outage_probability}, which in turn requires the statistics of the mutual information terms $I_{b_i}$'s. It will be shown in the next section that these statistics and therefrom the GDMT of the channel can be characterized, if only the joint distribution of the eigenvalues of two mutually correlated random Wishart matrices are known.

\section{Explicit GDMT of the MIMO Z-IC}
\label{sec_DMT_Z_channel}
In this section, we evaluate the diversity orders, $d_{O_i}(r_i)$'s, of the three outage events given in equation \eqref{eq_dmt_exp1}, which would yield the explicit GDMT of the MIMO Z-IC.
The diversity orders $d_{O_i}(r_i)$ for $i = 1,2 $ are single-user bounds and can be easily obtained from \cite{tse1} and consequently, we have
\begin{IEEEeqnarray}{rl}
d_{O_i}({r}_i)=& \lim_{\rho \to \infty} -\frac{\Pr\left(\sum_{k=1}^{\min\{M_i,N_i\}}(\alpha_{ii}-\upsilon_{i,k})^+\leq r_i\right)}{\log(\rho)},~i\in \{1,2\},
\end{IEEEeqnarray}
where $\upsilon_{i,k}$'s are the negative SNR exponents of the ordered eigenvalues of the matrices $H_{ii}H_{ii}^\dagger$. The joint distribution of $\{\upsilon_{i,k}\}_{k=1}^{\min\{M_i,N_i\}}$ was specified in \cite{tse1}. Using this distribution and a similar technique as in \cite{tse1}, it can be shown that
\begin{subequations}
\label{eq:modified-dmt}
\begin{align}
d_{O_i}({r}_i)= \min & \sum_{k=1}^{\min\{M_i,N_i\}}(M_i+N_i+1-2k)\upsilon_{i,k}\\
\textrm{subject to: } & \sum_{k=1}^{\min\{M_i,N_i\}}(\alpha_{ii}-\upsilon_{i,k})^+\leq r;\\
& 0\leq \upsilon_{i,1}\leq \cdots \leq \upsilon_{i,\min\{M_i,N_i\}}.
\end{align}
\end{subequations}
Since a similar optimization problem will recur in the rest of the paper we formally state the solution of the above problem in the following lemma for convenient future reference.
\begin{lemma}
\label{lem:m-dmt}
If $d(r)$ represents the solution of the optimization problem
\begin{subequations}
\label{eq:m-dmt-optimization-problem}
\begin{align}
\label{eq:m-dmt-optimization-problem-a}
& \min \sum_{i=1}^{m}(M+N+1-2i)\mu_i\\
\textrm{subject to: } & \sum_{i=1}^{m}(\alpha-\mu_i)^+\leq r;\\
& 0\leq \mu_1\leq \cdots \leq \mu_m,
\end{align}
\end{subequations}
then,
\begin{equation}\label{eq:m-dmt-optimization-solution}
    d(r)=\alpha \cdot d_{M,N}\left(\frac{r}{\alpha}\right),~\textrm{for}~0\leq r\leq m\alpha,
\end{equation}
where $m=\min\{M,N\}$ and $d_{M,N}(r)$ represents the DMT of a $M\times N$ point-to-point channel and is a piecewise linear curve joining the points $(M-k)(N-k)$ for $k=0,1,\cdots m$.
\end{lemma}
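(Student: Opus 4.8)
The plan is to reduce the general problem to the already-understood case $\alpha=1$ by a single homogeneous change of variables. The key observation is that when $\alpha=1$ the program \eqref{eq:m-dmt-optimization-problem} is exactly the optimization that arises in the Zheng--Tse derivation of the point-to-point DMT (equivalently, the $\alpha_{ii}=1$ specialization of \eqref{eq:modified-dmt}), and its optimal value is by definition the piecewise-linear curve $d_{M,N}(r)$ described in the statement. So the whole content of the lemma is the scaling factor $\alpha$, and I would extract it by rescaling the eigenvalue exponents.

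First I would introduce the scaled variables $\mu_i=\alpha\nu_i$ for $i=1,\dots,m$, where I use that $\alpha>0$ (it is the SNR exponent of a nonvanishing direct link, and $d_{M,N}(r/\alpha)$ is in any case undefined for $\alpha=0$). Since $\alpha>0$, the ordering constraint $0\le\mu_1\le\cdots\le\mu_m$ is equivalent to $0\le\nu_1\le\cdots\le\nu_m$, and the budget constraint transforms cleanly because $(\alpha-\mu_i)^+=(\alpha-\alpha\nu_i)^+=\alpha(1-\nu_i)^+$:
\[
\sum_{i=1}^m(\alpha-\mu_i)^+=\alpha\sum_{i=1}^m(1-\nu_i)^+\le r
\quad\Longleftrightarrow\quad
\sum_{i=1}^m(1-\nu_i)^+\le \frac{r}{\alpha}.
\]
Thus $\mu\mapsto\nu=\mu/\alpha$ is a bijection between the feasible set of \eqref{eq:m-dmt-optimization-problem} with parameters $(\alpha,r)$ and the feasible set of the same program with parameters $(1,r/\alpha)$.

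Next I would track the objective under this map. The coefficients $(M+N+1-2i)$ are untouched, so the objective is homogeneous of degree one in $\alpha$,
\[
\sum_{i=1}^m(M+N+1-2i)\mu_i=\alpha\sum_{i=1}^m(M+N+1-2i)\nu_i,
\]
and because $\alpha>0$ the minimizer is preserved under the bijection. Hence the optimal value of \eqref{eq:m-dmt-optimization-problem} equals $\alpha$ times the optimal value of the $\alpha=1$ program at multiplexing gain $r/\alpha$; invoking the known solution of the latter from \cite{tse1} yields $d(r)=\alpha\,d_{M,N}(r/\alpha)$, which is \eqref{eq:m-dmt-optimization-solution}. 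The stated range is consistent as well, since $0\le r\le m\alpha$ is equivalent to $0\le r/\alpha\le m$, placing $r/\alpha$ in the domain $[0,m]$ on which $d_{M,N}$ is defined.

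The computation itself is routine; the one point I would take care over is the identification of the $\alpha=1$ instance of \eqref{eq:m-dmt-optimization-problem} with the Zheng--Tse outage-exponent program, i.e.\ verifying that the coefficients $(M+N+1-2i)$ together with the single constraint $\sum_i(1-\nu_i)^+\le r$ reproduce exactly the program whose optimum is $d_{M,N}(r)$. Once that identification is granted, the scaling relation follows immediately from the homogeneity above, and no separate verification of the piecewise-linear form of $d_{M,N}$ (the corner points $(M-k)(N-k)$) is needed, since that structure is inherited wholesale from the $\alpha=1$ case.
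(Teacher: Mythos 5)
Your proposal is correct and follows essentially the same route as the paper: the paper's proof also substitutes $\mu_i' = \mu_i/\alpha$ to reduce \eqref{eq:m-dmt-optimization-problem} to the $\alpha=1$ instance and then invokes the known Zheng--Tse solution $d_{M,N}(\cdot)$ of that program. Your treatment is slightly more careful in spelling out the positivity of $\alpha$, the feasible-set bijection, and the homogeneity of the objective, but the substance is identical.
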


\begin{proof}[Proof]
Putting $\mu_i^{'}=\frac{\mu_i}{\alpha}$ in the optimization problem \eqref{eq:m-dmt-optimization-problem} we get,
\begin{subequations}
\begin{align}
\frac{d(r)}{\alpha}=& \min \sum_{i=1}^{m}(M+N+1-2i)\mu_i^{'}\\
\textrm{subject to: } & \sum_{i=1}^{m}(1-\mu_i^{'})^+\leq \frac{r}{\alpha};\\
& 0\leq \mu_1^{'}\leq \cdots \leq \mu_m^{'}.
\end{align}
\end{subequations}
The solution of this modified optimization problem was derived in \cite{tse1} and is given by
\begin{IEEEeqnarray*}{rl}
\frac{d(r)}{\alpha}= &d_{M,N}\left(\frac{r}{\alpha}\right),~\textrm{for}~0\leq \frac{r}{\alpha}\leq m,\\
\textrm{or,}~    d(r)= &\alpha \, d_{M,N}\left(\frac{r}{\alpha}\right),~\textrm{for}~0\leq r\leq m\alpha.
\end{IEEEeqnarray*}
\end{proof}

The solution of the optimization problem \eqref{eq:modified-dmt} is now evident from Lemma~\ref{lem:m-dmt}, and is given by
\begin{IEEEeqnarray}{l}
\label{eq:single-user-outage-exponent}
d_{O_i}(r_i)=\alpha_{ii}d_{M_i,N_i}\left(\frac{r_i}{\alpha_{ii}}\right),
~\forall ~ r_i\in [0, \min\{M_i, N_i\}\alpha_{ii}]~\textrm{and}~i\in \{1,2\},
\end{IEEEeqnarray}
where $d_{m,n}(r)$ is the optimal diversity order of a point-to-point MIMO channel with $m$ transmit and $n$ receive antennas, at integer values of $r$ and is point wise linear between integer values of $r$. To evaluate $d_{O_s}(r_s)$, we write  the bound $I_{bs}$ of Lemma~\ref{lem:ZIC-capacity-upper-bound} in the following way
\begin{IEEEeqnarray*}{ll}
I_{bs} = \log \left| \left(I_{M_1}+\rho_{11}H_{11}^{\dagger}\left(I_{N_1}+\rho_{21} H_{21}H_{21}^{\dagger}\right)^{-1}H_{11}\right)\right| +\log \left| \left(I_{N_2}+\rho_{22}  H_{22} \left(I_{M_2}+\rho_{21}  H_{21}^{\dagger}H_{21}\right)^{-1}H_{22}^{\dagger}\right)\right|\\
\qquad \qquad \qquad \qquad \qquad \qquad + \log \left|\left(I_{N_1}+\rho_{21} H_{21}H_{21}^{\dagger}\right)\right|,\\
\stackrel{(a)}{=}  \left\{\sum_{i=1}^{p}(1+\rho^{\alpha_{21}} \lambda_i)+\sum_{j=1}^{q_1}(1+\rho^{\alpha_{11}} \mu_j)+\sum_{k=1}^{q_2}(1+\rho^{\alpha_{22}} \pi_k)\right\},
\end{IEEEeqnarray*}
where in step $(a)$, we define $p=\min\{M_2, N_1\}$, $q_1=\min\{M_1, N_1\}$, $q_2=\min\{M_2, N_2\}$ and denote the ordered non-zero (with probability $1$) eigenvalues of the three matrices $W_1=H_{11}^{\dagger}\left(I_{N_1}+\rho_{21} H_{21}H_{21}^{\dagger}\right)^{-1}H_{11}$, $W_2=H_{22} \left(I_{M_2}+\rho_{21}  H_{21}^{\dagger}H_{21}\right)^{-1}H_{22}^{\dagger}$ and $W_3=H_{21}H_{21}^{\dagger}$ by $\mu_1\geq \cdots \geq \mu_{q_1}>0$, $\pi_1\geq \cdots \geq \pi_{q_2} >0$ and $\lambda_1\geq \cdots \geq \lambda_{p} >0$, respectively. Now, using the transformations $\lambda_{i}=\rho^{-\upsilon_i}$, for $1\leq i\leq p$, $\mu_{j}=\rho^{-\beta_j}$, for $1\leq j\leq q_1$ and $\pi_{k}=\rho^{-\gamma_k}, ~1\leq k\leq q_2$ in the above equation and substituting that in turn in equation \eqref{eq:outage_probability} we get
\begin{IEEEeqnarray}{rl}
\label{eq:outage-sum-bound-FCSIT}
\rho^{-d_{O_s}(r_s)}\dot{=}&\Pr\left( \left\{\sum_{i=1}^{p}(\alpha_{21}- \upsilon_i)^++\sum_{j=1}^{q_1}(\alpha_{11}-\beta_j)^+
 +\sum_{k=1}^{q_2}(\alpha_{22}-\gamma_k)^+\right\}< r_s\right).
\end{IEEEeqnarray}
To evaluate this expression we need to derive the joint distribution of $\vec{\gamma},\vec{\beta}$ and $\vec{\upsilon}$ where
$\vec{\gamma}=\{\gamma_1,\cdots ,\gamma_{q_2}\}$ and similarly $\vec{\upsilon}=\{\upsilon_1,\cdots ,\upsilon_p\}$ and $\vec{\beta}=\{\beta_1,\cdots ,\beta_{q_1}\}$. Note that $W_1, W_2$ and $W_3$ are not independent and hence neither are $\vec{\gamma},\vec{\beta}$ and $\vec{\upsilon}$. However, this distribution can be computed using Theorems 1 and 2 of \cite{DDF-DMT-Journal-IT}. Using this joint distribution, equation \eqref{eq:outage-sum-bound-FCSIT} and a similar argument as in \cite{tse1} $d_{O_s}(r_s)$ can be derived as the solution of a convex optimization problem as stated in the following lemma.
\begin{lemma}
\label{lem:ZIC-optimization-problem}
The negative SNR exponent of the outage event corresponding to the sum bound in Lemma~\ref{lem:ZIC-achievable-region}, i.e., $d_{O_s}(r_s)$, is equal to the minimum of the following objective function:
\begin{IEEEeqnarray}{rl}
\mathcal{F}_{(M_1,N_1,M_2,N_2)}^{\textrm{FCSIT}}= & \sum_{i=1}^{p}(M_2+N_1+M_1+N_2+1-2i)\upsilon_i +\sum_{j=1}^{q_1}(M_1+N_1+1-2j)\beta_j\nonumber \\ &+\sum_{k=1}^{q_2}(M_2+N_2+1-2k)\gamma_k
  -(M_1+N_2)p\alpha_{21}\nonumber\\
\label{eq:main-optimization-a}
&+\sum_{k=1}^{q_2}\sum_{i=1}^{\min\{(M_2-k), N_2\}}(\alpha_{21}-\upsilon_i-\gamma_k)^+
+\sum_{j=1}^{q_1}\sum_{i=1}^{\min\{(N_1-j), M_1\}}(\alpha_{21}-\upsilon_i-\beta_j)^+;
\end{IEEEeqnarray}
constrained by
\begin{subequations}
\label{eq:main-optimization}
\begin{align}
& \sum_{i=1}^{p}(\alpha_{21}- \upsilon_i)^++\sum_{j=1}^{q_1}(\alpha_{11}-\beta_j)^++\sum_{k=1}^{q_2}(\alpha_{22}-\gamma_k)^+< r_s;\\
& 0\leq \upsilon_1\leq \cdots \leq \upsilon_{p};\\
& 0\leq \beta_1\leq \cdots \leq \beta_{q_1};\\
& 0\leq \gamma_1\leq \cdots \leq \gamma_{q_2};\\
& (\upsilon_i+\beta_j)\geq \alpha_{21}, ~\forall (i+j)\geq (N_1+1);\\
& (\upsilon_i+\gamma_k)\geq \alpha_{21}, ~\forall (i+k)\geq (M_2+1).
\end{align}
\end{subequations}
\end{lemma}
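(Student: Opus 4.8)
The plan is to evaluate the outage probability in \eqref{eq:outage-sum-bound-FCSIT} by the large-deviations (Laplace-integral) method of Zheng and Tse \cite{tse1}, the only new ingredient being that the three exponent vectors $\vec{\upsilon},\vec{\beta},\vec{\gamma}$ are jointly, rather than independently, distributed. First I would invoke Theorems 1 and 2 of \cite{DDF-DMT-Journal-IT} to write the joint density of the ordered eigenvalues of $W_1,W_2,W_3$ and, after the substitutions $\lambda_i=\rho^{-\upsilon_i}$, $\mu_j=\rho^{-\beta_j}$, $\pi_k=\rho^{-\gamma_k}$ together with the Jacobian of this change of variables, express it to exponential order as $\rho^{-E(\vec{\upsilon},\vec{\beta},\vec{\gamma})}$ supported on a set $\mathcal{S}$. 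The exponent $E$ collects the Vandermonde and measure factors of each matrix, producing the single-vector terms with coefficients $(M_2+N_1+M_1+N_2+1-2i)$, $(M_1+N_1+1-2j)$ and $(M_2+N_2+1-2k)$, while the correlation structure furnished by \cite{DDF-DMT-Journal-IT} contributes the constant $-(M_1+N_2)p\alpha_{21}$ and the two coupled double sums of $(\alpha_{21}-\upsilon_i-\gamma_k)^+$ and $(\alpha_{21}-\upsilon_i-\beta_j)^+$; the support set $\mathcal{S}$ then supplies the ordering constraints together with the coupling constraints $(\upsilon_i+\beta_j)\geq\alpha_{21}$ for $i+j\geq N_1+1$ and $(\upsilon_i+\gamma_k)\geq\alpha_{21}$ for $i+k\geq M_2+1$.

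Next I would rewrite the outage event $\{I_{bs}\leq r_s\log\rho\}$ in the exponent variables as the region $\mathcal{O}_s=\{\sum_{i}(\alpha_{21}-\upsilon_i)^++\sum_j(\alpha_{11}-\beta_j)^++\sum_k(\alpha_{22}-\gamma_k)^+< r_s\}$, exactly as in \eqref{eq:outage-sum-bound-FCSIT}, and represent the outage probability as the integral of $\rho^{-E}$ over $\mathcal{O}_s\cap\mathcal{S}$. As in \cite{tse1}, negative components of the exponent vectors can be discarded, since eigenvalues exceeding one do not affect the SNR exponent, so it suffices to integrate over the non-negative orthant intersected with $\mathcal{S}$. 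Applying Laplace's principle, the integral is dominated by its steepest term, whence $-\log\Pr(\cdot)/\log\rho$ converges to the infimum of $E$ over $\mathcal{O}_s\cap\mathcal{S}$; this infimum is precisely the constrained minimum of $\mathcal{F}^{\textrm{FCSIT}}_{(M_1,N_1,M_2,N_2)}$ in \eqref{eq:main-optimization}, which establishes the lemma. To justify calling it a convex program, I would finally note that each $(\cdot)^+$ term is a maximum of affine functions, so the objective is convex and the feasible set is an intersection of half-spaces.

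The main obstacle I anticipate is the faithful transcription of the joint eigenvalue statistics from \cite{DDF-DMT-Journal-IT} into the exponent $E$ and the support $\mathcal{S}$: because $W_1$ and $W_2$ both contain the factor $(I+\rho_{21}H_{21}H_{21}^\dagger)^{-1}$ built from the same $H_{21}$ that also generates $W_3$, the three eigenvalue sets are strongly correlated, and it is exactly this correlation that produces both the cross-terms in the objective and the coupling inequalities in the support; keeping track of the indices and the $\min\{(M_2-k),N_2\}$ and $\min\{(N_1-j),M_1\}$ truncation ranges is where errors are most likely. Care is also needed in the Laplace step to ensure that the dominating point lies in the interior, or on an admissible face, of $\mathcal{O}_s\cap\mathcal{S}$, and that the kinks introduced by the $(\cdot)^+$ operators and by the boundary of $\mathcal{S}$ do not alter the exponent; handling these non-smooth points while preserving the convex-program structure is the delicate part of the argument.
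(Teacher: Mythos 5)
Your overall route is the same as the paper's: obtain the joint law of the exponent vectors from Theorems 1 and 2 of \cite{DDF-DMT-Journal-IT}, integrate $\rho^{-E}$ over the outage region, and apply Laplace's principle to identify the SNR exponent with a constrained minimum. But two concrete steps are missing, and the second is a genuine gap. First, the cited theorems supply only the \emph{pairwise conditional} densities $f_{W_1|W_3}(\bar{\beta}|\bar{\upsilon})$ and $f_{W_2|W_3}(\bar{\gamma}|\bar{\upsilon})$; to assemble the joint density of all three exponent vectors you additionally need that, conditioned on the eigenvalues of $W_3$, the eigenvalues of $W_1$ and $W_2$ are independent (the paper invokes Lemma~1 of \cite{HDRC-journal} for this), so that $f_{W_1,W_2,W_3}=f_{W_1|W_3}\,f_{W_2|W_3}\,f_{W_3}$. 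Your proposal treats the joint density as if the reference handed it over directly, which silently assumes exactly this factorization.

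Second, and more importantly, the exponent that actually emerges from this density is \emph{not} the objective $\mathcal{F}^{\textrm{FCSIT}}_{(M_1,N_1,M_2,N_2)}$ of the lemma: it has $\upsilon_i$-coefficient $(M_2+N_1+1-2i)$ together with the \emph{variable} term $-(M_1+N_2)\sum_{i=1}^{p}(\alpha_{21}-\upsilon_i)^+$, not the coefficient $(M_2+N_1+M_1+N_2+1-2i)$ and the \emph{constant} $-(M_1+N_2)p\alpha_{21}$ that you claim fall out of the Vandermonde/measure factors and the correlation structure. The two minimization problems coincide only on the region $\{\upsilon_i\le\alpha_{21},\ \forall i\le p\}$, where $(\alpha_{21}-\upsilon_i)^+=\alpha_{21}-\upsilon_i$. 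To complete the proof you must show the minimum of the raw problem is attained on that region; the paper does this by a monotonicity argument: if some $\upsilon_i>\alpha_{21}$ at an optimum, replacing it by $\alpha_{21}$ leaves every constraint feasible (the outage constraint and the coupling inequalities are unaffected, since $\beta_j,\gamma_k\ge 0$) while strictly decreasing the objective, a contradiction. Without this truncation step, the identification of the Laplace infimum with the stated $\mathcal{F}$ is not established, so the lemma as written is not proved.
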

\begin{proof}
The proof is relegated to Appendix~\ref{pf:lem:ZIC-optimization-problem}.
\end{proof}
Upon differentiating the objective function with respect to $\{\upsilon_i\}$, $\{\beta_j\}$ and $\{\gamma_k\}$, it can be easily verified that \eqref{eq:main-optimization-a} is a convex function of these variables. The constraints on the other hand are all linear. Therefore, equations \eqref{eq:main-optimization-a} and \eqref{eq:main-optimization} represent an convex optimization problem (e.g., see Section $4.2.1$ in \cite{BV}) and hence can be solved efficiently using numerical methods. Since we have already found expressions for $d_{\mathcal{O}_i}$ for $i=1,2$ as in equation~\eqref{eq:single-user-outage-exponent}, Lemma~\ref{lem:ZIC-optimization-problem} provides the last piece of the puzzle required to characterize the GDMT of the MIMO Z-IC, by evaluating $d_{\mathcal{O}_s}$. This is stated formally in the following theorem.

\begin{thm}
\label{thm:DMT-ZIC-FCSIT}
The GDMT of the $(M_1,N_1,M_2,N_2)$ Z-IC under the CSIT assumption with a short term power power constraint is given as
 \begin{equation*}
    d_{\textrm{Z-IC}}^{*,\textrm{CSIT}}(r_1,r_2)=\min_{i\in \{1,2,s\}} ~d_{\mathcal{O}_i}(r_i),
 \end{equation*}
 where $d_{\mathcal{O}_i}(r_i)$ for $i=1,2$ and $i=s$ are given by equation \eqref{eq:single-user-outage-exponent} and Lemma~\ref{lem:ZIC-optimization-problem}, respectively.
\end{thm}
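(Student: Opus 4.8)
The plan is to assemble the theorem directly from the outage-exponent identity \eqref{eq_dmt_exp1} together with the explicit evaluations of its three constituent exponents, so that no new machinery is needed beyond Lemmas~\ref{lem_upper_bound}--\ref{lem:ZIC-optimization-problem}. In other words, once the sandwich relating the achievable subset and the converse superset has been shown to be exponentially tight, the statement is a matter of substituting the closed forms already in hand.

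First I would revisit the sandwich \eqref{xxx}--\eqref{eq:bounds-on-outage-probability}. Lemma~\ref{lem_upper_bound} supplies the superset $\mathcal{R}^u(\mathcal{H},\bar\rho)$, whose violation lower-bounds the error probability, while Lemma~\ref{lem_achievable_region} supplies the achievable subset $\mathcal{R}^l(\mathcal{H},\bar\rho)$, whose violation upper-bounds it through a union bound. The crucial observation is that each achievable bound equals the corresponding converse bound up to an additive constant, $I_{bi}=I_{li}+c_i$ with $c_i$ independent of $\rho$; since an $O(1)$ shift of the rate threshold leaves the SNR exponent of the associated outage probability unchanged, the lower and upper bounds on $\rho^{-d^{*}_{\textrm{ZIC}}(r_1,r_2)}$ coincide exponentially. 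This collapses the sandwich to $\rho^{-d^{*}_{\textrm{ZIC}}(r_1,r_2)}\dot{=}\max_{i\in\{1,2,s\}}\Pr(I_{bi}\leq R_i)$, and hence to the identity $d^{*}_{\textrm{ZIC}}(r_1,r_2)=\min_{i\in\{1,2,s\}}d_{O_i}(r_i)$, which is exactly \eqref{eq_dmt_exp1}. It then remains only to insert the closed forms for the individual exponents: for the marginal links $i\in\{1,2\}$ the exponent $d_{O_i}(r_i)$ reduces to the point-to-point optimization \eqref{eq:modified-dmt}, whose solution via Lemma~\ref{lem:m-dmt} gives the scaled point-to-point DMT curve of \eqref{eq:single-user-outage-exponent}, while for the sum constraint $i=s$ the exponent $d_{O_s}(r_s)$ is precisely the value of the convex program of Lemma~\ref{lem:ZIC-optimization-problem}. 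Taking the minimum over $i\in\{1,2,s\}$ yields the claimed expression.

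Regarding where the real difficulty sits: for the theorem \emph{as stated} the argument is essentially a bookkeeping assembly, because all of the analytic content has been pushed into the preceding lemmas. Were one to prove the statement from first principles, the main obstacle would be Lemma~\ref{lem:ZIC-optimization-problem}, i.e.\ evaluating $d_{O_s}(r_s)$ from \eqref{eq:outage-sum-bound-FCSIT}. The matrices $W_1$, $W_2$ and $W_3$ share the common factor $H_{21}$, so the exponent vectors $\vec\upsilon$, $\vec\beta$ and $\vec\gamma$ are statistically coupled, and characterizing their joint large-deviations behaviour requires the correlated-Wishart eigenvalue distribution of \cite{DDF-DMT-Journal-IT} rather than the independent scaling laws of \cite{tse1}; this coupling is also what produces the pairwise constraints $\upsilon_i+\beta_j\geq\alpha_{21}$ and $\upsilon_i+\gamma_k\geq\alpha_{21}$ in \eqref{eq:main-optimization}. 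I would therefore isolate that evaluation as the sole substantive step and treat the remaining combination of the three exponents as immediate.

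A secondary point worth verifying carefully is the exponential tightness of the constant-gap reduction in a multiuser setting: because the union and max operations in \eqref{eq:bounds-on-outage-probability} are applied over the finite index set $\{1,2,s\}$, they affect only the polynomial prefactor and not the exponent, so the achievable and converse DMTs genuinely meet. Establishing this rigorously -- that the Han--Kobayashi scheme of Lemma~\ref{lem_achievable_region}, which uses only $H_{21}$ at $Tx_2$, suffers no exponential loss relative to the genie-aided converse of Lemma~\ref{lem_upper_bound} -- is what upgrades the result from an achievable region to the \emph{fundamental} F-CSIT DMT asserted in the theorem.
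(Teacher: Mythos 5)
Your proposal is correct and follows essentially the paper's own route: the paper likewise treats Theorem~\ref{thm:DMT-ZIC-FCSIT} as a direct assembly of the sandwich argument of equations \eqref{eq_dmt_exp1}--\eqref{eq:bounds-on-outage-probability} (using $I_{bi}=I_{li}+c_i$ to collapse achievability and converse to a common exponent) with the closed-form exponents \eqref{eq:single-user-outage-exponent} and Lemma~\ref{lem:ZIC-optimization-problem}, offering no separate proof beyond this. Your identification of Lemma~\ref{lem:ZIC-optimization-problem}, resting on the correlated-Wishart eigenvalue distribution of \cite{DDF-DMT-Journal-IT}, as the sole substantive step also matches the paper's presentation.
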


Although the computation of $d_{\mathcal{O}_s}(r_s)$ and hence the characterization of the GDMT of a general Z-IC with an arbitrary number of antennas at each node requires application of numerical methods, in what follows, we shall provide closed form expressions for it for several special cases. Since the GDMT with CSIT acts as an upper bound to the GDMT under the No-CSIT assumption (i.e., with only CSIR), these expressions facilitate an easy characterization of the gap between perfect and No-CSIT performances of the channel. The No-CSIT GDMT of the channel will be characterized in Section~\ref{sec:ZIC-no-CSIT} for different range of values of $\alpha_{21}$ and number of antennas at different nodes.

The central idea in all of the following proofs is the fact that the steepest descent method provides a global optimal value of a convex optimization problem: this is so because the steepest descent method provides a local optimal value of the objective function and a local optimum is equal to a global optimum for a convex function~\cite{BV}. 
The first case considered is a class of channels where all the nodes have equal number of antennas.

\begin{thm}[The symmetric GDMT of the $(n,n,n,n)$ Z-IC]
\label{thm:DMT-symmetric-Z-FCSIT}
Consider the MIMO Z-IC with $M_1=M_2=N_1= N_2=n$ and SNRs and INRs of different links are as described in Section \ref{sec_channel_model} with $\alpha_{21}=\alpha$ and $\alpha_{22}=1=\alpha_{11}$. The best achievable diversity order on this channel, with CSIT and short term average power constraint \eqref{power_constraint}, at multiplexing gain pair $(r_1,r_2)$, is given by
\begin{equation}
d_{\textrm{Z-IC},(n,n,n,n)}^{*,\textrm{CSIT}}(r_1,r_2)=\min \left\{d_{n,n}(r_1),d_{n,n}(r_2),d_{s,(n,n,n,n)}^{\textrm{CSIT}}(r_s)\right\},
\end{equation}
where $d_{\mathcal{O}_s}(r_s)$ for the special channel configuration being considered is denoted by $d_{s,(n,n,n,n)}^{\textrm{CSIT}}(r_s)$, and is given as
\begin{IEEEeqnarray}{l}
\label{eq:dmt-symmetric-a-leq-1}
d_{s,(n,n,n,n)}^{\textrm{CSIT}}(r_s)=\left\{\begin{array}{l}
\alpha d_{n,3n}(\frac{r_s}{\alpha})+2n^2(1-\alpha), ~\textrm{for}~ 0\leq r_s \leq n\alpha;\\
2(1-\alpha) d_{n,n}(\frac{(r_s-n\alpha)}{2(1-\alpha)}), ~\textrm{for}~ n\alpha\leq r_s \leq n(2-\alpha).
\end{array}\right.
\end{IEEEeqnarray}
If $\alpha\leq 1$ and
\begin{IEEEeqnarray}{l}
\label{eq:dmt-symmetric-a-geq-1}
d_{s,(n,n,n,n)}^{\textrm{CSIT}}(r_s)=\left\{\begin{array}{l}
d_{n,3n}(r_s)+n^2(\alpha-1),~0\leq r_s\leq n; \\
(\alpha-1)d_{n,n}\left(\frac{r_s-n}{\alpha-1}\right),~n\leq r_s\leq n\alpha,
\end{array}\right.
\end{IEEEeqnarray}
for $1\leq \alpha$.
\end{thm}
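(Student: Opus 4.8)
The plan is to reduce everything to the evaluation of $d_{\mathcal{O}_s}(r_s)$ and then to solve the resulting convex program in closed form. By Theorem~\ref{thm:DMT-ZIC-FCSIT} the sought DMT is $\min_{i\in\{1,2,s\}}d_{\mathcal{O}_i}(r_i)$, so the two single-user exponents come for free: substituting $\alpha_{11}=\alpha_{22}=1$ and $M_i=N_i=n$ into the closed form~\eqref{eq:single-user-outage-exponent} gives $d_{\mathcal{O}_i}(r_i)=1\cdot d_{n,n}(r_i/1)=d_{n,n}(r_i)$ for $i=1,2$. All of the remaining work is to show that the minimum of the objective in Lemma~\ref{lem:ZIC-optimization-problem}, specialized to the symmetric parameters, equals the piecewise expression~\eqref{eq:dmt-symmetric-a-leq-1} when $\alpha\le 1$ and~\eqref{eq:dmt-symmetric-a-geq-1} when $\alpha\ge 1$.

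First I would specialize the program of Lemma~\ref{lem:ZIC-optimization-problem}. With $M_1=M_2=N_1=N_2=n$ one has $p=q_1=q_2=n$ and $M_1+N_2=2n$, so the weight on $\upsilon_i$ becomes $(4n+1-2i)$, the weights on $\beta_j$ and $\gamma_k$ become $(2n+1-2j)$ and $(2n+1-2k)$, the constant term is $-2n^2\alpha$, and the two penalty double-sums run over $i\le n-k$ and $i\le n-j$, respectively, with the complementary coupling constraints $\upsilon_i+\beta_j\ge\alpha$ for $i+j\ge n+1$ and $\upsilon_i+\gamma_k\ge\alpha$ for $i+k\ge n+1$. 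The key simplifying observation is that the $\vec\beta$ and $\vec\gamma$ families enter the objective, the coupling constraints, and the rate constraint in a completely symmetric fashion. Hence if $(\vec\upsilon,\vec\beta,\vec\gamma)$ is optimal so is $(\vec\upsilon,\vec\gamma,\vec\beta)$, and since the feasible set is convex their midpoint $(\vec\upsilon,\vec\delta,\vec\delta)$ with $\vec\delta=(\vec\beta+\vec\gamma)/2$ is feasible (the componentwise average of two sorted nonnegative sequences is again sorted and nonnegative) with no larger objective value by convexity; therefore an optimal solution with $\beta_j=\gamma_k$ exists. This halves the number of variable families and leaves a program in $\vec\upsilon$ and a single direct-link family.

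I would then solve this reduced convex program using the principle highlighted before the theorem, namely that for a convex objective any feasible stationary point found by steepest descent is automatically a global minimizer; the task is thus to exhibit the candidate optimizer for each range of $r_s$ rather than to argue global optimality separately. The coefficient $(4n+1-2i)=(n)+(3n)+1-2i$ on $\upsilon_i$ is exactly the point-to-point weight of an $n\times 3n$ channel in Lemma~\ref{lem:m-dmt}, which is the structural reason the factor $d_{n,3n}$ appears. For $\alpha\le 1$ and small $r_s\in[0,n\alpha]$ I expect the optimizer to keep the direct-link exponents pinned at the values forced by the coupling constraints while the interference exponents $\vec\upsilon$ carry the whole rate deficit; summing the pinned contributions yields the constant offset $2n^2(1-\alpha)$, and the scaling identity $d(r)=\alpha d_{M,N}(r/\alpha)$ of Lemma~\ref{lem:m-dmt} converts the $\vec\upsilon$ sub-problem into $\alpha d_{n,3n}(r_s/\alpha)$. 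For $r_s\in[n\alpha,n(2-\alpha)]$ the direct links additionally go into outage, the residual rate $r_s-n\alpha$ is split evenly across the two (now equal) direct-link families, and the same identity produces $2(1-\alpha)\,d_{n,n}\!\big((r_s-n\alpha)/(2(1-\alpha))\big)$. The case $\alpha\ge 1$ is handled by the identical method; there the direct-link exponents (exponent $1$) saturate first and the interference exponents (exponent $\alpha>1$) govern the second breakpoint, which swaps the roles and yields~\eqref{eq:dmt-symmetric-a-geq-1}.

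I expect the main obstacle to be precisely this last computation: correctly determining, for each sub-range of $r_s$, which coupling constraints $\upsilon_i+\beta_j\ge\alpha$ and which penalty terms $(\alpha-\upsilon_i-\beta_j)^+$ are active at the optimum, and then verifying that the guessed point satisfies the KKT/stationarity conditions. The piecewise form of the answer (two segments in each of the two cases) mirrors a change in this active set at the breakpoints $r_s=n\alpha,\,n(2-\alpha)$ for $\alpha\le 1$ and $r_s=n,\,n\alpha$ for $\alpha\ge 1$, and the delicate part is the bookkeeping that pins down the exact breakpoints and the constant offsets $2n^2(1-\alpha)$ and $n^2(\alpha-1)$. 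Convexity removes any concern about local-versus-global optima, so once the candidate optimizers are produced and checked the theorem follows.
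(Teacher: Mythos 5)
Your high-level architecture is the same as the paper's: specialize the convex program of Lemma~\ref{lem:ZIC-optimization-problem} to $M_1=M_2=N_1=N_2=n$, exploit the $\vec{\beta}$--$\vec{\gamma}$ symmetry, identify the active constraint set on each range of $r_s$, and invoke Lemma~\ref{lem:m-dmt}, with convexity certifying that the candidate stationary point is a global minimizer. Your symmetrization-by-midpoint argument is valid (the paper invokes this symmetry only in the $\alpha\le 1$, $r_s\ge n\alpha$ sub-case), and the observation that the weight $(4n+1-2i)$ is the $n\times 3n$ point-to-point weight is correct. However, the proposal defers precisely the computation that constitutes the proof, and the structural guesses you do commit to are wrong in two places. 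First, for $\alpha\le 1$ and $r_s\in[0,n\alpha]$ the optimizer pins $\beta_j=\gamma_j=1$, the outage threshold: this makes $(1-\beta_j)^+=0$ (the direct links carry no rate), kills every penalty term since $\beta_j=1\ge\alpha$, and produces the offset as $2\sum_{j=1}^{n}(2n+1-2j)-2n^2\alpha=2n^2(1-\alpha)$. These are \emph{not} ``the values forced by the coupling constraints'': for $\alpha\le 1$ the coupling constraints only force $\beta_j\ge\alpha-\upsilon_{n-j+1}\le\alpha\le 1$, and pinning there would leave penalties active, put rate on the direct links, and yield a different offset. So your description is internally inconsistent with the offset you claim it produces.

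Second, the $\alpha\ge 1$ case is not a ``role swap'' in which the direct links saturate first. In \emph{both} segments of \eqref{eq:dmt-symmetric-a-geq-1} the rate constraint is carried entirely by $\vec{\upsilon}$; the direct-link exponents never carry rate at the optimum. What changes at $r_s=n$ is that the $\upsilon_i$ must drop below $\alpha-1$, so the coupling constraints become binding and force $\beta_j=\gamma_j=\alpha-\upsilon_{n-j+1}>1$ while the penalty double-sums become active. The heart of the paper's proof is eliminating $\beta_j,\gamma_k$ at these forced values and verifying the nontrivial cancellation that collapses the objective to $\sum_{i=1}^{n}(2n+1-2i)\upsilon_i$, which is what yields $(\alpha-1)d_{n,n}\left(\frac{r_s-n}{\alpha-1}\right)$ via Lemma~\ref{lem:m-dmt}. (Ironically, ``pinned at the values forced by the coupling constraints'' is the correct description in exactly this regime, where you did not apply it.) Your sketch neither anticipates this interaction nor would your guessed active set produce it, so as written the plan does not carry through on $n\le r_s\le n\alpha$; the other segments are asserted rather than derived, though there your guesses do match the paper's optimizers.
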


\begin{proof}[Proof of Theorem~\ref{thm:DMT-symmetric-Z-FCSIT}]
The proof is relegated to Appendix~\ref{pf:thm:DMT-symmetric-Z-FCSIT}.
\end{proof}

\begin{rem}
Note that for $\alpha=1$, the DMT of the Z-IC becomes
\[
 d_{s,(n,n,n,n)}^{\textrm{CSIT}}(r_s)=\min \left\{d_{n,n}(r_1),d_{n,n}(r_2),d_{n,3n}(r_1+r_2)\right\}
 \]
which is exactly the upper bound to the DMT of the 2-user MIMO IC with $n$ antennas at each node, derived in~\cite{EaOlCv}. Since in the Z-IC, the second receiver is free of interference, the DMT of the Z-IC serves as an upper bound to the DMT of the two-user IC.
\end{rem}

\subsection{The DMT of a Femto Cell}
\label{subsec:ZIC-analytic-solutions}
A practical communication channel following the Z-IC network model appears in the so called femto cell environment. The femto cell~\cite{Femto} concept is an outcome of the telecommunication industry's efforts to provide high-throughput, high quality services into the user's home. Consider the scenario depicted in Fig. \ref{figure:femtocell}, where the larger circle represents the macro cell serviced by the macro cell base station (MCBS). Within this macro cell is the smaller circle that represents a small area where the signal from the MCBS either does not reach with enough strength or does not reach at all, hereafter referred to as the femto cell. To provide coverage in this region a smaller user-deployed base station connected to the backbone can be used, which is called the femto cell BS (FCBS). This FCBS can provide mobile services to the users within the femto cell just like a Wi-Fi access point. The basic difference between the FCBS and Wi-Fi access point is that the former operates in a licensed band.

\begin{figure}[!hbt]
\centering
\includegraphics[width=12.0cm,height=8cm,]{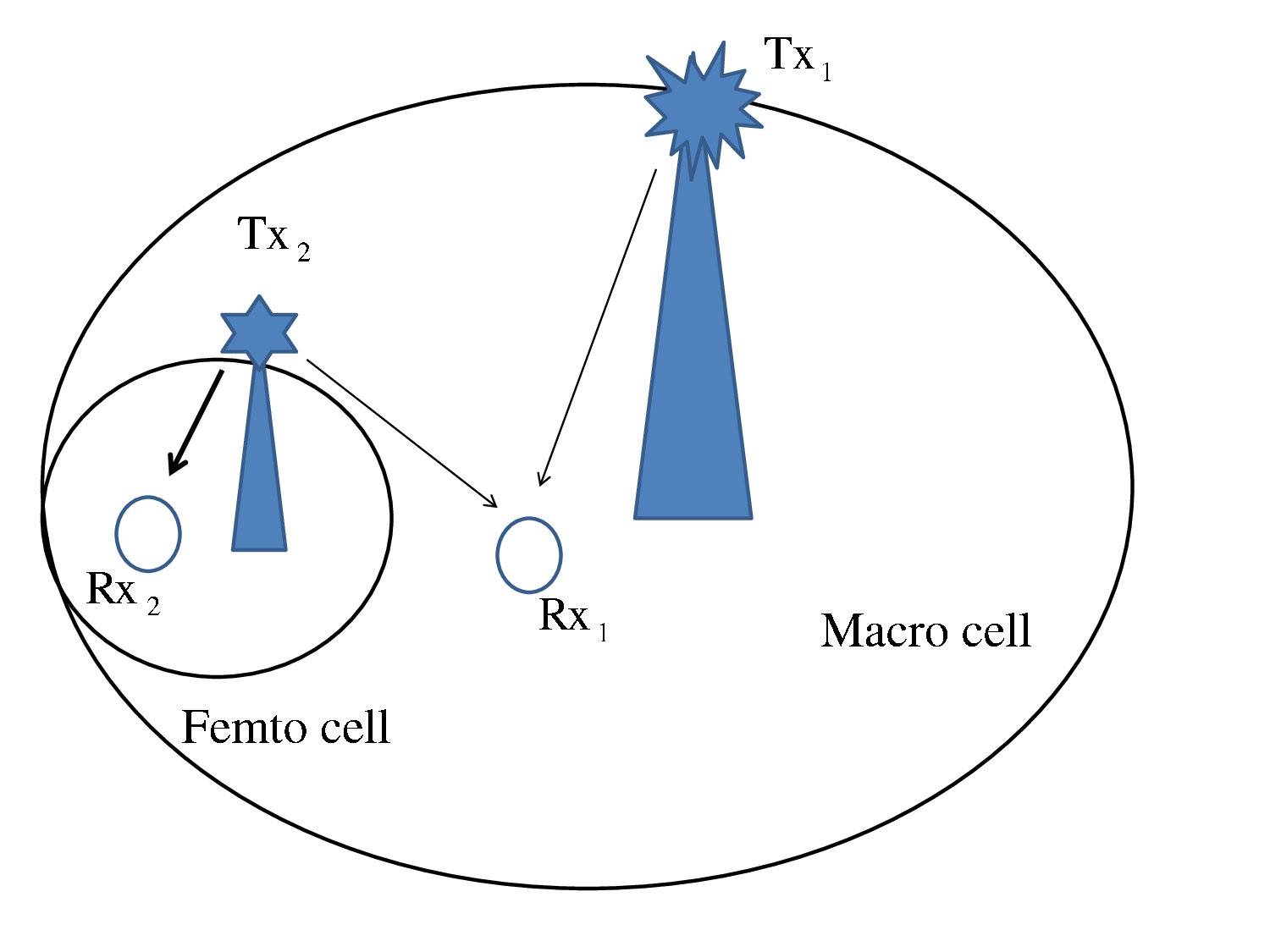}
\caption{Femto cell channel model: downlink.}
\label{figure:femtocell}
\end{figure}

Now, let us consider the downlink communication on such a channel with one mobile user in both the femto cell and the macro-cell. Note that since the MCBS signal does not reach the mobile user within the femto cell, the signal input-output follows the Z-IC model. To model the larger SNR of the femto cell direct link we can assume that $\alpha_{22}\geq 1$. In what follows, we shall derive the DMT of this channel.

\begin{thm}[Femto cell GDMT]
\label{thm:DMT-femto-symmetric}
Consider a Z-IC with $M_1=M_2=N_1= N_2=n$, $\alpha_{22}=\alpha\geq 1$ and $\alpha_{11}=\alpha_{21}=1$, under the CSIT assumption and a short term average power constraint \eqref{power_constraint}. The optimal diversity order of this channel at a multiplexing gain pair $(r_1,r_2)$ is given by
\begin{equation*}
d_{(n,n,n,n)}^{\textrm{Femto}}(r_1,r_2)=\min \left\{d_{n,n}(r_1),\alpha d_{n,n}\left(\frac{r_2}{\alpha}\right),d_{s,(n,n,n,n)}^{\textrm{Femto}}(r_s)\right\}
\end{equation*}
where $d_{\mathcal{O}_s}(r_s)$ for the special channel configuration being considered, is denoted by $d_{s,(n,n,n,n)}^{\textrm{Femto}}(r_s)$, and
\begin{IEEEeqnarray*}{l}
d_{s,(n,n,n,n)}^{\textrm{Femto}}(r_s)=\left\{\begin{array}{l}
d_{n,3n}(r_s)+n^2(\alpha-1), ~\textrm{for}~ 0\leq r_s \leq n;\\
(\alpha-1) d_{n,n}(\frac{(r_s-n)}{(\alpha-1)}), ~\textrm{for}~ n\leq r_s \leq n\alpha.
\end{array}\right.
\end{IEEEeqnarray*}
\end{thm}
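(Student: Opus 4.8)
The plan is to invoke Theorem~\ref{thm:DMT-ZIC-FCSIT}, which already reduces the DMT to $\min_{i\in\{1,2,s\}} d_{\mathcal{O}_i}(r_i)$, and then to evaluate the three outage exponents for the Femtocell parameters $M_1=M_2=N_1=N_2=n$, $\alpha_{11}=\alpha_{21}=1$ and $\alpha_{22}=\alpha\ge 1$. The two single-user exponents $d_{\mathcal{O}_1}(r_1)$ and $d_{\mathcal{O}_2}(r_2)$ are read off directly from \eqref{eq:single-user-outage-exponent}, so essentially all the work is in computing $d_{\mathcal{O}_s}(r_s)$, i.e.\ in solving the convex program of Lemma~\ref{lem:ZIC-optimization-problem} for these parameters.

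First I would specialize \eqref{eq:main-optimization}. With $p=q_1=q_2=n$, $\alpha_{21}=\alpha_{11}=1$ and $\alpha_{22}=\alpha$, the objective becomes
\begin{align*}
\mathcal{F}^{\textrm{Femto}}=&\sum_{i=1}^{n}(4n+1-2i)\upsilon_i+\sum_{j=1}^{n}(2n+1-2j)\beta_j+\sum_{k=1}^{n}(2n+1-2k)\gamma_k-2n^2\\
&+\sum_{k=1}^{n}\sum_{i=1}^{n-k}(1-\upsilon_i-\gamma_k)^+ +\sum_{j=1}^{n}\sum_{i=1}^{n-j}(1-\upsilon_i-\beta_j)^+,
\end{align*}
to be minimized over $0\le\upsilon_1\le\cdots\le\upsilon_n$, $0\le\beta_1\le\cdots\le\beta_n$, $0\le\gamma_1\le\cdots\le\gamma_n$, subject to the outage budget $\sum_i(1-\upsilon_i)^+ +\sum_j(1-\beta_j)^+ +\sum_k(\alpha-\gamma_k)^+<r_s$ and to the interference constraints $\upsilon_i+\beta_j\ge 1$ for $i+j\ge n+1$ and $\upsilon_i+\gamma_k\ge 1$ for $i+k\ge n+1$. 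Since the objective is convex and the constraints linear, by the steepest-descent argument described just before Theorem~\ref{thm:DMT-symmetric-Z-FCSIT} it suffices to exhibit one feasible stationary point in each regime and read off its value.

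The solution splits along the two branches of the claimed formula. For $0\le r_s\le n$ I would argue that the budget is met cheaply by the exponent-$1$ resources: the cross link $H_{21}$ and the direct link $H_{11}$, which both enter at SNR exponent $1$ and combine through the first log-det term, together with the whitened second direct link behave like a single $n\times 3n$ point-to-point channel and contribute $d_{n,3n}(r_s)$. The surplus strength $(\alpha-1)$ of $H_{22}$ is not consumed by multiplexing here, and since $H_{22}$ is almost surely full rank $n$ even after the whitening $(I_{M_2}+\rho_{21}H_{21}^{\dagger}H_{21})^{-1}$, which generically costs exactly one SNR exponent across its full range, it adds a fixed diversity of $n^2(\alpha-1)$; this reproduces the branch $d_{n,3n}(r_s)+n^2(\alpha-1)$. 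For $n\le r_s\le n\alpha$, the exponent-$1$ resources are exhausted and any further multiplexing must be drawn from the residual $(\alpha-1)$ margin of the Femto direct link; the resulting subproblem is precisely a Lemma~\ref{lem:m-dmt}-type program with effective exponent $(\alpha-1)$, $m=n$ and rate $(r_s-n)$, whose value is $(\alpha-1)d_{n,n}\!\big(\tfrac{r_s-n}{\alpha-1}\big)$. Continuity at $r_s=n$, where both branches equal $n^2(\alpha-1)$, gives a consistency check.

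The main obstacle is the combinatorial bookkeeping forced by the rectified terms $(\cdot)^+$ and by the coupling constraints $\upsilon_i+\beta_j\ge 1$, $\upsilon_i+\gamma_k\ge 1$: the feasible region is a union of polyhedral cells, and the crux is to identify, in each regime, the active set at the optimum and to verify that the guessed minimizer (each $\upsilon_i,\beta_j,\gamma_k$ either null or saturating a constraint) is simultaneously feasible and stationary. It is tempting to shortcut the computation by noting that the claimed curve coincides with the $\alpha\ge1$ branch \eqref{eq:dmt-symmetric-a-geq-1} of Theorem~\ref{thm:DMT-symmetric-Z-FCSIT}; however, a direct relabeling fails, because in the symmetric-$Z$ problem the large exponent $\alpha$ multiplies the highest-weight variables $\upsilon_i$ (cross link), whereas here it multiplies the lower-weight variables $\gamma_k$ (direct link~$2$). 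I would therefore either solve the Femtocell program on its own as above, or, with more care, construct an explicit correspondence between the active constraints of the two programs to transfer the already-established optimum of \eqref{eq:dmt-symmetric-a-geq-1}.
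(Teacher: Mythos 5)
Your proposal follows essentially the same route as the paper's proof: you specialize the convex program of Lemma~\ref{lem:ZIC-optimization-problem} to the Femto parameters (your specialized objective, budget constraint, and coupling constraints match the paper's exactly), invoke convexity so that a steepest-descent stationary point is globally optimal, and solve the two regimes via Lemma~\ref{lem:m-dmt} with exactly the reductions the paper uses --- in $0\leq r_s\leq n$ the budget is spent entirely on the $\upsilon_i$'s (weights $4n+1-2i$, hence $d_{n,3n}(r_s)$) with $\beta_j=1$, $\gamma_k=\alpha$ pinned at their ceilings yielding the constant $n^2(\alpha-1)$, and in $n\leq r_s\leq n\alpha$ one has $\upsilon_i=0$, $\beta_j=1$, and the shifted $\gamma$-program $\gamma_k'=\gamma_k-1$ with effective exponent $(\alpha-1)$. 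Your closing remark that a naive relabeling from Theorem~\ref{thm:DMT-symmetric-Z-FCSIT} fails is also consistent with the paper, which indeed solves the Femto program on its own rather than transferring the symmetric-ZIC optimum.
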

\begin{proof}
The proof is relegated to Appendix~\ref{pf:thm:DMT-femto-symmetric}.
\end{proof}

\begin{rem}
Note that the fundamental DMT of the Z-IC with single antenna nodes and $\alpha_{22}=\alpha, \alpha_{11}=\alpha_{21}=1$ was derived in~\cite{SJJ}. This clearly is a special case of Theorem~\ref{thm:DMT-femto-symmetric} and can be obtained by putting $n=1$.
\end{rem}

Typically, in a multiuser communication scenario one end -- say the base station in a cellular network -- can host more antennas than the other. Motivated by this fact in what follows we consider a case which addresses the DMT of such a practical communication network, i.e., where $M_1=M_2=M\leq \min\{ N_1, N_2\}$.

\begin{thm}
\label{thm:asymmetric-FCSIT}
Consider the Z-IC with $M_1=M_2=M \leq \min \{N_1, N_2\}$, $\alpha_{11}=\alpha_{22}=\alpha_{21}= 1$, and with CSIT and the short term average power constraint of \eqref{power_constraint}. The optimal diversity order achievable on this channel at a multiplexing gain pair $(r_1,r_2)$ is given by
\begin{equation*}
d_{M,N_1,M,N_2}^{\textrm{CSIT}}(r_1,r_2)=\min \left\{d_{M,N_1}(r_1),d_{M,N_2}(r_2),d_{s,(M,N_1,M,N_2)}^{\textrm{FCSIT}}(r_s)\right\},
\end{equation*}
where $d_{\mathcal{O}_s}(r_s)$ for the special channel configuration being considered, is denoted by $d_{s,(M,N_1,M,N_2)}^{\textrm{CSIT}}(r_s)$, and is given as
\begin{IEEEeqnarray*}{l}
d_{s,(M,N_1,M,N_2)}^{\textrm{CSIT}}(r_s)=
\left\{\begin{array}{l}
d_{M,(M+N_1+N_2)}({r_s})+M(N_1-M); ~0\leq r_s \leq M;\\
d_{2M,N_1}(r_s); ~ M\leq r_s\leq \min\{N_1,2M\}.
\end{array}\right.
\end{IEEEeqnarray*}
\end{thm}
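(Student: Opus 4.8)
The plan is to invoke Theorem~\ref{thm:DMT-ZIC-FCSIT}, which expresses the DMT as $\min\{d_{\mathcal{O}_1}(r_1),d_{\mathcal{O}_2}(r_2),d_{\mathcal{O}_s}(r_s)\}$. The first two exponents are immediate: setting $\alpha_{11}=\alpha_{22}=1$ in equation~\eqref{eq:single-user-outage-exponent} gives $d_{\mathcal{O}_1}(r_1)=d_{M,N_1}(r_1)$ and $d_{\mathcal{O}_2}(r_2)=d_{M,N_2}(r_2)$, matching the two inner terms of the claimed minimum. The whole burden is therefore to evaluate $d_{\mathcal{O}_s}(r_s)$ by solving the convex program of Lemma~\ref{lem:ZIC-optimization-problem}. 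I would first specialize that program: since $M_1=M_2=M\le\min\{N_1,N_2\}$ and $\alpha_{21}=1$, all three counts collapse to $p=q_1=q_2=M$, the constant becomes $-(M+N_2)M$, and the coupling constraints read $\upsilon_i+\beta_j\ge 1$ for $i+j\ge N_1+1$ and $\upsilon_i+\gamma_k\ge 1$ for $i+k\ge M+1$.

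Next I would exploit the paper's guiding principle that, for this convex program, a configuration verified to be a steepest-descent (KKT) point is the global optimum. The coefficients $2M+N_1+N_2+1-2i$ of the $\upsilon_i$ strictly dominate those of the $\beta_j$ and $\gamma_k$, so the cheapest way to create outage budget is to push the $\upsilon$'s down while holding $\beta_j=\gamma_k=1$. With $\beta_j=\gamma_k=1$ every penalty term $(1-\upsilon_i-\gamma_k)^+$ and $(1-\upsilon_i-\beta_j)^+$ vanishes and all coupling constraints hold trivially, so for $0\le r_s\le M$ the residual problem is exactly the single-matrix minimization of Lemma~\ref{lem:m-dmt} with $m=M$, whose $\upsilon_i$-coefficients coincide with those of an $M\times(M+N_1+N_2)$ channel. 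Collecting the fixed $\beta,\gamma$ contributions at unity together with the constant $-(M+N_2)M$ yields the offset $M(N_1-M)$, giving $d_{\mathcal{O}_s}(r_s)=d_{M,M+N_1+N_2}(r_s)+M(N_1-M)$ on this range.

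For $M\le r_s\le\min\{N_1,2M\}$ the $\upsilon$-budget is saturated, so I would set all $\upsilon_i=0$ and ask which remaining variables may drop below $1$. With $\upsilon_i=0$ the constraint $\upsilon_i+\gamma_k\ge 1$ forces every $\gamma_k\ge 1$, while $\upsilon_i+\beta_j\ge 1$ permits only $\beta_1,\dots,\beta_{N_1-M}$ to fall below $1$; reducing such a $\beta_j$ activates the penalty $\min\{N_1-j,M\}(1-\beta_j)^+=M(1-\beta_j)^+$, so its net effective coefficient becomes $N_1+1-2j$. The key computation is to check that the $M$ saturated $\upsilon$-modes together with these $N_1-M$ (respectively $M$, when $N_1>2M$) reducible $\beta$-modes reproduce exactly the $\min\{2M,N_1\}$ modes of a $2M\times N_1$ point-to-point channel, i.e. the multiple-access channel seen at $Rx_1$ with combined input $[H_{11}\ H_{21}]$, whence $d_{\mathcal{O}_s}(r_s)=d_{2M,N_1}(r_s)$; continuity at $r_s=M$ (both pieces equal $M(N_1-M)$) is a useful consistency check.

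The main obstacle is the rigorous global-optimality argument rather than the bookkeeping. I expect the delicate part to be ruling out joint perturbations that trade a small increase in some $\upsilon_i$ for a reduction of the coupled $\gamma_k$ (thereby relaxing the $i+k\ge M+1$ constraints): one must show via the KKT conditions that, after accounting for the $\gamma$-penalty and the dominant $\upsilon$-coefficients, no such exchange lowers the objective, so that the greedy ``$\upsilon$ first, then $\beta$'' configuration is the true minimizer. A secondary point to verify carefully is that the second-receiver log-det term contributes no multiplexing gain once all $\upsilon_i=0$ (its eigenvalue exponents satisfy $\gamma_k\ge 1$), which is precisely what makes the sum bound collapse to the $2M\times N_1$ MAC in the second regime; and the endpoint split $N_1\le 2M$ versus $N_1>2M$ must be treated separately to identify the correct upper limit $\min\{N_1,2M\}$.
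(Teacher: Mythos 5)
Your proposal is correct and follows the same overall route as the paper: specialize the convex program of Lemma~\ref{lem:ZIC-optimization-problem} to $p=q_1=q_2=M$, $\alpha_{ij}=1$, then use the steepest-descent/greedy argument (valid by convexity) in two regimes, with $\beta_j=\gamma_k=1$ absorbing the constant $M(N_1-M)$ for $0\le r_s\le M$. Where you genuinely add value is the second regime. The paper's Appendix~\ref{pf:thm:asymmetric-FCSIT} is corrupted there: after fixing $\upsilon_i=0$ and $\gamma_k\ge 1$, its displayed problem \eqref{eq:asymmetric-FCSIT-optimization2} silently drops the penalty terms $\sum_{j}\min\{N_1-j,M\}(1-\upsilon_i-\beta_j)^+$, and the subsequent equations and final formula are copy-pasted from the Femto-cell proof (they involve $\alpha$, $\gamma_i'$, and $d^{\textrm{Femto}}_{s,(n,n,n,n)}$, not $d_{2M,N_1}$), so the paper never actually derives the claimed $d_{2M,N_1}(r_s)$ piece. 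Your accounting is the correct one: with $\upsilon_i=0$ only $\beta_1,\dots,\beta_{\min\{M,N_1-M\}}$ may fall below $1$, and lowering such a $\beta_j$ trades the objective coefficient $M+N_1+1-2j$ against the activated penalty $M(1-\beta_j)$, leaving the net coefficient $N_1+1-2j$; these are exactly the coefficients $2M+N_1+1-2i$ of a $2M\times N_1$ channel at indices $i=M+j$, which together with the value $M(N_1-M)=d_{2M,N_1}(M)$ at the regime boundary yields $d_{2M,N_1}(r_s)$ on $[M,\min\{N_1,2M\}]$. (Had the penalties been dropped, as in the paper's displayed problem, the answer would come out as $d_{M,N_1}(r_s-M)-M^2$, which is even negative at $r_s=2M$.) Your flagged KKT check — that raising some $\upsilon_i$ to let a $\gamma_k$ drop never pays — does go through: the joint perturbation changes the cost at rate $(N_1+k)\epsilon>0$, so the greedy point is indeed the global optimum.
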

\begin{proof}
The proof is relegated to Appendix~\ref{pf:thm:asymmetric-FCSIT}.
\end{proof}

\begin{figure}[!hbt]
\centering
\includegraphics[width=12.0cm,height=8cm,]{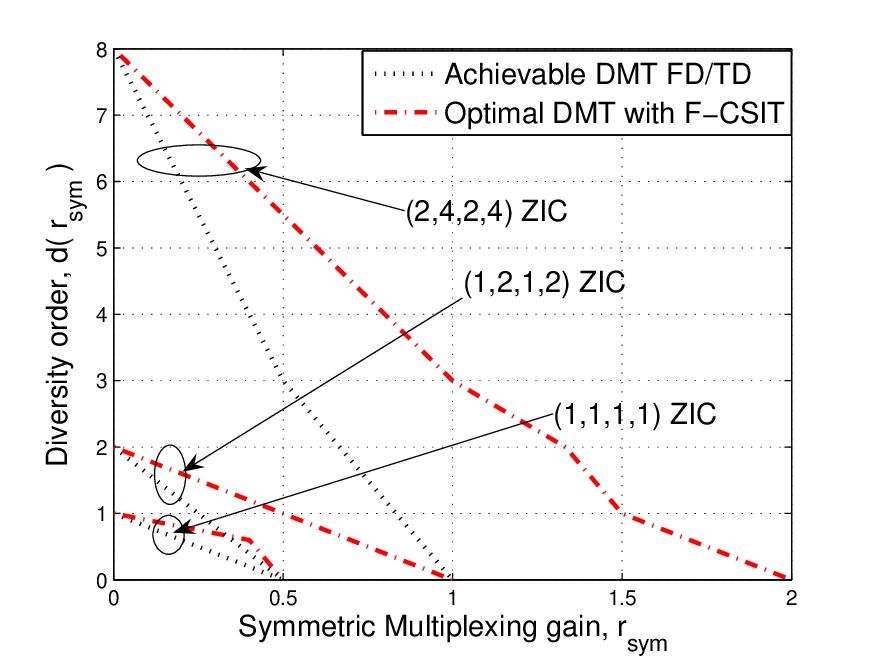}
\caption{The DMT of different Z-ICs with $\bar{\alpha}=[1,1,1]$.}
\label{figure_IC_vs_FDTD}
\end{figure}

Let us now quantify the loss due the use of suboptimal coding schemes on Z-ICs, with respect to the fundamental DMT of the channel achievable by sophisticated coding schemes such as the superposition coding scheme described in Section~\ref{subsec:ZIC-approximate-capacity}. In Fig. \ref{figure_IC_vs_FDTD}, explicit CSIT DMT curves for a few antenna configurations are plotted and compared against the performance of orthogonal schemes such as frequency division (FD) or time division (TD) multiple-access which do not utilize CSIT. It can be noticed from the figure that the gain due to CSIT, over the orthogonal access schemes can be significant, particularly in MIMO Z-ICs. 

While this gap can be reduced by using better coding-decoding schemes, in general with CSIT a better performance can be achieved. However, to evaluate this gap in performance due to lack of CSIT exactly, it is necessary to know the best GDMT achievable on the channel without CSIT. One way to characterize the No-CSIT GDMT would be to either evaluate the MIMI region without CSIT exactly or find inner and outer bounds which are within a constant number of bits to each other, as was done in the CSIT case. For the no CSIT case however, both of these are challenging problems. However, in the following section, we bypass this approach and characterize the No-CSIT GDMT of the MIMO Z-IC for some specific values of the different channel parameters such as the number of antennas at the different nodes and the INR parameter $\alpha$.


\section{GDMT with No CSIT}
\label{sec:ZIC-no-CSIT}
We start by noting that the GDMT under the CSIT assumption derived in the previous sections can serve as an upper bound to the No-CSIT GDMT of the channel. Then we derive the achievable GDMT of a specific No-CSIT transmit-receive scheme, which for two special classes of Z-ICs meets the upper bound, and therefore, represents the fundamental No-CSIT GDMT of the corresponding Z-ICs. 

Consider the following No-CSIT transmit-receive scheme.
Let both the users encode their messages using independent Gaussian signals. Moreover, the decoder at $Rx_1$ does joint maximum-likelihood (ML) decoding of both the messages. However, since $Rx_1$ is not interested in the signal transmitted by $Tx_2$, the event where only the second user's message is decoded incorrectly is not considered as an error event. $Rx_2$ on the other hand uses an ML decoder to decode its own message. Hereafter, we will refer to this scheme as the {\it Individual ML (IML) decoder} and the encoding-decoding scheme as the Independent-coding/IML-decoding scheme, or simply as the IIML scheme.

An achievable rate region of the IIML scheme is given by the following set of rate tuples
\begin{subequations}
\label{eq:IML-achievable-rate-region}
\begin{align}
\mathcal{R}_{\textrm{IIML}}=\Big\{ (R_1,R_2): & R_1\leq \log\det\left(I_{N_1}+\frac{\rho}{M_1}H_{11}H_{11}^\dagger\right)\triangleq I_{c_1};\\
& R_2\leq \log\det\left(I_{N_2}+\frac{\rho}{M_2}H_{22}H_{22}^\dagger\right)\triangleq I_{c_2};\\
& (R_1+R_2)\leq \log\det\left(I_{N_1}+\frac{\rho}{M_1}H_{11}H_{11}^\dagger+\frac{\rho^\alpha }{M_2}H_{21}H_{21}^\dagger\right)\triangleq I_{c_s};\Big\}
\end{align}
\end{subequations}
Note in the above set of equations we do not have a constraint on $R_2$ due to the point-to-point link from $Tx_2$ to $Rx_1$ because of the IML decoding definition, i.e., $Rx_2$ does not consider it as an error event if only the message of $Tx_2$ is decoded erroneously. Using the above expression for the achievable rate region, the corresponding achievable DMT of this IIML transmit-receive scheme can be easily computed using standard techniques. The result is specified in the following lemma.
\begin{lemma}
\label{lem:IML-dmt-exponents}
If we denote the achievable diversity order of the IIML scheme, at multiplexing gain pair $(r_1,r_2)$, by $d_{(M_1,N_1,M_2,N_2)}^{\textrm{IIML}}(r_1,r_2)$ then
\begin{equation}
\label{eq:IML-dmt-exponents}
    d_{(M_1,N_1,M_2,N_2)}^{\textrm{IIML}}(r_1,r_2)~\geq~\min_{i\in\{1,2,s\}}\{d_{i,(M_1,N_1,M_2,N_2)}^{\textrm{IIML}}(r_i) \},
\end{equation}
where $r_s=(r_1+r_2)$ and
\begin{equation}
\label{eq:def-outage-exponent-IML-decoder}
    d_{i,(M_1,N_1,M_2,N_2)}^{\textrm{IIML}}(r_i)=\lim_{\rho\to\infty} -\frac{\log\left(\Pr\left(I_{ci}\leq r_i\right)\right)}{\log(\rho)},~\forall~i\in \{1,2,s\}.
\end{equation}
\end{lemma}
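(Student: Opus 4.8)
The plan is to reduce the error analysis of the IIML scheme to an outage computation over the achievable rate region $\mathcal{R}_{\textrm{IML}}$ already established in \eqref{eq:IML-achievable-rate-region}, in exactly the same way the instantaneous mutual information region was converted into a DMT in the F-CSIT case leading to \eqref{eq_dmt_exp1}. First I would define the outage event of this scheme as the event that the operating rate pair $(R_1,R_2)$, with $R_i=r_i\log(\rho)$, lies outside the instantaneous achievable region, namely
\begin{equation*}
\mathcal{O}_{\textrm{IML}}=\{I_{c1}\leq R_1\}\cup\{I_{c2}\leq R_2\}\cup\{I_{cs}\leq R_s\},\quad R_s=R_1+R_2.
\end{equation*}
Following the random-coding argument of \cite{tse1}, I would then argue that the error probability $\mathcal{P}_e^{\textrm{IML}}$ of the scheme is exponentially no larger than this outage probability, i.e. $\mathcal{P}_e^{\textrm{IML}}\dot{\leq}\Pr(\mathcal{O}_{\textrm{IML}})$. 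The justification is that, conditioned on the complement of $\mathcal{O}_{\textrm{IML}}$ (so the rate pair lies strictly inside the instantaneous region), the conditional error probability of a random Gaussian codebook decays exponentially in the block length and hence faster than any negative power of $\rho$, so it does not affect the SNR exponent.

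Next I would bound the outage probability by the union bound. Since
\begin{equation*}
\Pr(\mathcal{O}_{\textrm{IML}})\leq \Pr(I_{c1}\leq R_1)+\Pr(I_{c2}\leq R_2)+\Pr(I_{cs}\leq R_s)\leq 3\max_{i\in\{1,2,s\}}\Pr(I_{ci}\leq R_i),
\end{equation*}
and the factor $3$ is independent of $\rho$, taking negative SNR exponents gives $\Pr(\mathcal{O}_{\textrm{IML}})\dot{\leq}\max_{i\in\{1,2,s\}}\Pr(I_{ci}\leq R_i)$. Combining this with the achievability bound above and recalling the definition \eqref{eq:def-outage-exponent-IML-decoder} yields
\begin{equation*}
d_{(M_1,N_1,M_2,N_2)}^{\textrm{IML}}(r_1,r_2)=\lim_{\rho\to\infty}-\frac{\log\mathcal{P}_e^{\textrm{IML}}}{\log(\rho)}\geq\lim_{\rho\to\infty}-\frac{\log\Pr(\mathcal{O}_{\textrm{IML}})}{\log(\rho)}\geq\min_{i\in\{1,2,s\}}d_{i,(M_1,N_1,M_2,N_2)}^{\textrm{IML}}(r_i),
\end{equation*}
which is precisely the claimed bound.

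The main obstacle is the achievability step $\mathcal{P}_e^{\textrm{IML}}\dot{\leq}\Pr(\mathcal{O}_{\textrm{IML}})$ at the interfered receiver $Rx_1$, since that receiver performs joint ML decoding over both codebooks yet only declares an error when $Tx_1$'s message is wrong. I would handle this by splitting the error event at $Rx_1$ into the two sub-events ``only $Tx_1$'s message is in error'' and ``both messages are in error,'' and showing via a pairwise-error-probability computation that these are governed by the constraints $R_1\leq I_{c1}=I(X_1;Y_1\mid X_2)$ and $R_1+R_2\leq I_{cs}=I(X_1,X_2;Y_1)$, respectively; crucially, the sub-event ``only $Tx_2$'s message is in error'' is never counted, so no constraint of the form $R_2\leq I(X_2;Y_1\mid X_1)$ enters the analysis at $Rx_1$. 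At $Rx_2$ the channel is a point-to-point MIMO link and the bound $\mathcal{P}_{e,2}\dot{\leq}\Pr(I_{c2}\leq R_2)$ follows directly from \cite{tse1}. Since $\mathcal{P}_e^{\textrm{IML}}$ is, up to an SNR-independent factor, the larger of the two receivers' error probabilities, assembling these three pairwise bounds reproduces exactly the three terms of $\mathcal{O}_{\textrm{IML}}$ and completes the reduction.
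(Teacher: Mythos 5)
Your proposal is correct and follows essentially the same route as the paper's proof: define the outage event as the target rate pair falling outside $\mathcal{R}_{\textrm{IML}}$, bound the error probability by conditioning on outage versus non-outage (with the conditional error probability vanishing as the block length grows), then apply the union bound and take SNR exponents to obtain $\min_{i\in\{1,2,s\}}d_{i}^{\textrm{IML}}(r_i)$. The only difference is that you additionally sketch why the region \eqref{eq:IML-achievable-rate-region} is achievable (the splitting of the joint-ML error event at $Rx_1$ into ``only message 1 wrong'' and ``both wrong''), a step the paper simply takes as given when it states that region before the lemma.
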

\begin{proof}
The proof is relegated to Appendix~\ref{pf:lem:IML-dmt-exponents}.
\end{proof}
Note that $I_{c_1}$ and $I_{c_2}$ represent the mutual information of a point-to-point channel with channel matrices $H_{11}$ and $H_{22}$, respectively. Therefore, by the results of \cite{tse1} we have
\begin{IEEEeqnarray}{l}
\label{eq:IML-single-user-exponents}
d_{i,(M_1,N_1,M_2,N_2)}^{\textrm{IIML}}(r_i)=d_{M_i,N_i}(r_i),~0\leq r_i\leq \min\{M_i,N_i\},
\end{IEEEeqnarray}
and $i\in \{1,2\}$. To analyze the probability of the outage event due to the third bound of the achievable rate region we first approximate $I_{c_s}$ by another term which does not differ from it by more than a constant. Note that
\begin{IEEEeqnarray}{rl}
& \log\det\left(I_{N_1}+\rho H_{11}H_{11}^\dagger+\rho^\alpha H_{21}H_{21}^\dagger\right)-N_1\log(\max\{M_1,M_2\}),\nonumber\\
\label{eq:equivalence-of-two-MIs}
& \leq I_{c_s} \leq \log\det\left(I_{N_1}+\rho H_{11}H_{11}^\dagger+\rho^\alpha H_{21}H_{21}^\dagger\right)\triangleq I_{c_s}'.
\end{IEEEeqnarray}
Since a constant independent of $\rho$, does not matter in the high SNR analysis, to compute $d_s^{\textrm{IIML}}$ we can use $I_{c_s}'$ in place of $I_{c_s}$. Next, we write $I_{c_s}'$ in the following manner:
\begin{IEEEeqnarray}{rl}
I_{c_s}'=& \log\det\left(I_{N_1}+\rho H_{11}H_{11}^\dagger+\rho^\alpha H_{21}H_{21}^\dagger\right),\\
=& \log\det\left(I_{M_1}+\rho\widetilde{H}_{11}^\dagger \widetilde{H}_{11}\right)+\log\det\left(I_{N_1}+\rho^\alpha H_{21}H_{21}^\dagger\right),\\
\stackrel{(a)}{=}&\left\{\sum_{j=1}^{q_1}(1+\rho^{\alpha_{11}} \mu_j)+\sum_{i=1}^{p}(1+\rho^{\alpha_{21}} \lambda_i)\right\},
\end{IEEEeqnarray}
where $\widetilde{H}_{11}=\left(I_{N_1}+\rho^\alpha H_{21}H_{21}^\dagger\right)^{-\frac{1}{2}}H_{11}$ and in step $(a)$, $p=\min\{M_2, N_1\}$, $q_1=\min\{M_1,~ N_1\}$. Also, we have denoted the ordered non-zero (with probability $1$) eigenvalues of $W_1=H_{11}^{\dagger}\left(I_{N_1}+\rho_{21} H_{21}H_{21}^{\dagger}\right)^{-1}H_{11}$ and $W_2=H_{21}H_{21}^{\dagger}$ by $\mu_1\geq \cdots \geq \mu_{q_1}>0$ and $\lambda_1\geq \cdots \geq \lambda_{p} >0$, respectively. Now, using the transformations $\lambda_{i}=\rho^{-\upsilon_i}$, for $1\leq i\leq p$, $\mu_{j}=\rho^{-\beta_j}$, for $1\leq j\leq q_1$ in the above equation and substituting that in turn in equation \eqref{eq:def-outage-exponent-IML-decoder} we get
\begin{IEEEeqnarray}{rl}
\rho^{-d_{s,(M_1,N_1,M_2,N_2)}^{\textrm{IIML}}(r_s)}\dot{=}&\Pr\left( \left\{\sum_{i=1}^{p}(\alpha_{21}- \upsilon_i)^++\sum_{j=1}^{q_1}(\alpha_{11}-\beta_j)^+
\right\}< r_s\right).
\end{IEEEeqnarray}
To evaluate this expression we need to derive the joint distribution of $\vec{\beta}$ and $\vec{\upsilon}$. Note that, since $W_1$ and $W_2$ are mutually correlated and so are $\vec{\beta}$ and $\vec{\upsilon}$. As already stated earlier, in general characterizing the joint distribution of the eigenvalues of such mutually correlated random matrices is a hard problem. However, in what follows, we show that this distribution can be computed using Theorems 1 and 2 of \cite{DDF-DMT-Journal-IT}, which in turn facilitates the characterization of $d_s^{\textrm{IIML}}(r_s)$.
\begin{lemma}
\label{lem:IML-sum-optimization}
The diversity order of the probability of the outage event corresponding to the sum bound in \eqref{eq:IML-achievable-rate-region}, i.e., $d_{s,(M_1,N_1,M_2,N_2)}^{\textrm{IIML}}(r_s)$, is equal to the minimum of the following objective function:
\begin{subequations}
\label{eq:IML-sum-optimization}
\begin{align}
\label{eq:IML-sum-optimization-a}
d_{s,(M_1,N_1,M_2,N_2)}^{\textrm{IIML}}(r_s)=\min_{\left(\vec{\upsilon},\vec{\beta} \right)}& \sum_{i=1}^{p}(M_2+N_1+M_1+1-2i)\upsilon_i +\sum_{j=1}^{q_1}(M_1+N_1+1-2j)\beta_j\nonumber \\
  &~~~~~~~~~~~~-M_1p\alpha_{21} +\sum_{j=1}^{q_1}\sum_{i=1}^{\min\{(N_1-j), M_1\}}(\alpha_{21}-\upsilon_i-\beta_j)^+;\\
\label{eq:IML-sum-optimization-b}
\textrm{constrained by: } &
\sum_{i=1}^{p}(\alpha_{21}- \upsilon_i)^++\sum_{j=1}^{q_1}(\alpha_{11}-\beta_j)^+\leq r_s;\\
\label{eq:IML-sum-optimization-c}
& 0\leq \upsilon_1\leq \cdots \leq \upsilon_{p};\\
\label{eq:IML-sum-optimization-d}
& 0\leq \beta_1\leq \cdots \leq \beta_{q_1};\\
\label{eq:IML-sum-optimization-e}
& (\upsilon_i+\beta_j)\geq \alpha_{21}, ~\forall (i+j)\geq (N_1+1).
\end{align}
\end{subequations}
\end{lemma}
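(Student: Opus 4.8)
The plan is to follow the same large-deviations route already used for the F-CSIT sum-bound exponent in Lemma~\ref{lem:ZIC-optimization-problem}, of which the present claim is the natural two-matrix specialization: the term $W_2=H_{22}\left(I_{M_2}+\rho_{21}H_{21}^{\dagger}H_{21}\right)^{-1}H_{22}^{\dagger}$, and hence the auxiliary variables $\vec{\gamma}$, are simply absent here because $I_{c_s}'$ in \eqref{eq:equivalence-of-two-MIs} splits into only two determinants. First I would note that the constant gap between $I_{c_s}$ and $I_{c_s}'$ recorded in \eqref{eq:equivalence-of-two-MIs} is immaterial to the SNR exponent, so by \eqref{eq:def-outage-exponent-IML-decoder} it suffices to compute the exponent of $\Pr\left(I_{c_s}'\leq r_s\log\rho\right)$. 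Writing $I_{c_s}'$ through the ordered eigenvalues of $W_1=H_{11}^{\dagger}\left(I_{N_1}+\rho_{21}H_{21}H_{21}^{\dagger}\right)^{-1}H_{11}$ and $W_2=H_{21}H_{21}^{\dagger}$ and passing to the exponent coordinates $\lambda_i=\rho^{-\upsilon_i}$, $\mu_j=\rho^{-\beta_j}$ reduces the task to the outage-probability expression displayed just before the statement, so the entire job is to evaluate that probability.

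Next I would invoke Theorems 1 and 2 of \cite{DDF-DMT-Journal-IT}. The matrices $W_1$ and $W_2$ are correlated, since both are built from $H_{21}$, so $\vec{\upsilon}$ and $\vec{\beta}$ are not independent; nevertheless those theorems furnish precisely the joint asymptotic density of the two eigenvalue families. The output I would extract is the exponential order of this joint density on its support, $f(\vec{\upsilon},\vec{\beta})\dot{=}\rho^{-E(\vec{\upsilon},\vec{\beta})}$, where $E$ is the sum of the weighted linear terms $\sum_{i=1}^{p}(M_2+N_1+M_1+1-2i)\upsilon_i$ and $\sum_{j=1}^{q_1}(M_1+N_1+1-2j)\beta_j$ coming from the Vandermonde/Jacobian factors, the offset $-M_1 p\,\alpha_{21}$, and the coupling contribution $\sum_{j=1}^{q_1}\sum_{i=1}^{\min\{(N_1-j),M_1\}}(\alpha_{21}-\upsilon_i-\beta_j)^+$ produced by the inverse $\left(I_{N_1}+\rho_{21}H_{21}H_{21}^{\dagger}\right)^{-1}$ that interlaces the two spectra. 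The same theorems fix the support, which yields the orderings \eqref{eq:IML-sum-optimization-c}--\eqref{eq:IML-sum-optimization-d} and the hard coupling constraint $(\upsilon_i+\beta_j)\geq\alpha_{21}$ for $i+j\geq N_1+1$ in \eqref{eq:IML-sum-optimization-e}, reflecting that $H_{21}$ has only $p$ nonzero singular directions and that the inverse suppresses the signal of $W_1$ along the strongly interfered ones.

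Finally I would close with the standard Laplace argument of \cite{tse1}: integrating $\rho^{-E}$ over the outage region and letting $\rho\to\infty$, the probability is exponentially dominated by the feasible point of least exponent, whence $d_{s,(M_1,N_1,M_2,N_2)}^{\textrm{IML}}(r_s)=\inf E(\vec{\upsilon},\vec{\beta})$ over the region cut out by the outage constraint \eqref{eq:IML-sum-optimization-b} together with the support constraints \eqref{eq:IML-sum-optimization-c}--\eqref{eq:IML-sum-optimization-e}, which is exactly the minimization \eqref{eq:IML-sum-optimization}. I expect the main obstacle to be the second step, namely reading the exponent $E$ and especially the coupling terms (the double sum and the constraint $\upsilon_i+\beta_j\geq\alpha_{21}$) correctly out of the joint distribution of \cite{DDF-DMT-Journal-IT}, since these encode the nontrivial correlation between $W_1$ and $W_2$; once $E$ and the support are identified, the remaining minimization is of the same convex type already treated for Lemma~\ref{lem:ZIC-optimization-problem}.
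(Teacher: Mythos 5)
Your route is exactly the paper's route: the paper proves this lemma in two sentences, by observing that the joint distribution of $(\vec{\beta},\vec{\upsilon})$ is obtained from \eqref{eq:pdf-product-form} upon setting $f_{W_2|W_3}(\cdot)=1$ (the $\vec{\gamma}$ branch simply disappears), and then repeating the steps of the proof of Lemma~\ref{lem:ZIC-optimization-problem}: conditional eigenvalue density from Theorems 1 and 2 of \cite{DDF-DMT-Journal-IT}, product with the marginal Wishart density of $H_{21}H_{21}^{\dagger}$, and Laplace's method as in \cite{tse1}. So there is no divergence of method.

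There is, however, one concrete gap, and it sits precisely where you predicted the difficulty would be. Theorems 1 and 2 of \cite{DDF-DMT-Journal-IT} do \emph{not} hand you the objective \eqref{eq:IML-sum-optimization-a} as written. What the conditional density \eqref{eq:pdf-conditional-13} times the marginal \eqref{eq:pdf-3} actually gives is the exponent
\begin{align*}
E(\vec{\beta},\vec{\upsilon})=&\sum_{i=1}^{p}(M_2+N_1+1-2i)\upsilon_i+\sum_{j=1}^{q_1}(M_1+N_1+1-2j)\beta_j\\
&+\sum_{j=1}^{q_1}\sum_{i=1}^{\min\{(N_1-j),M_1\}}(\alpha_{21}-\upsilon_i-\beta_j)^+-M_1\sum_{i=1}^{p}(\alpha_{21}-\upsilon_i)^+,
\end{align*}
i.e., the coefficient of $\upsilon_i$ is $(M_2+N_1+1-2i)$, not $(M_2+N_1+M_1+1-2i)$, and in place of the constant offset $-M_1p\alpha_{21}$ there is the \emph{variable} term $-M_1\sum_{i}(\alpha_{21}-\upsilon_i)^+$. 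Laplace's method therefore yields a minimization of this raw $E$, not of \eqref{eq:IML-sum-optimization-a}, and attributing the latter form directly to the density, as your second step does, would fail when you try to match coefficients. To close the gap you need the final argument from the proof of Lemma~\ref{lem:ZIC-optimization-problem}: restricting $\upsilon_i\leq\alpha_{21}$ for all $i\leq p$ does not change the minimum, because if some $\upsilon_i>\alpha_{21}$, replacing every such $\upsilon_i$ by $\alpha_{21}$ preserves the ordering, leaves the outage constraint \eqref{eq:IML-sum-optimization-b} and the constraints \eqref{eq:IML-sum-optimization-e} satisfied (since $\beta_j\geq 0$), leaves the $(\cdot)^+$ terms at zero, and strictly decreases the linear part. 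On the restricted set $(\alpha_{21}-\upsilon_i)^+=\alpha_{21}-\upsilon_i$, so $-M_1\sum_{i}(\alpha_{21}-\upsilon_i)^+=-M_1p\alpha_{21}+M_1\sum_{i}\upsilon_i$, and the last sum is absorbed into the linear term, lifting its coefficient to $(M_2+N_1+M_1+1-2i)$. With that one step added, your outline is a complete proof and coincides with the paper's.
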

\begin{proof}
The proof is relegated to Appendix~\ref{pf:lem:IML-sum-optimization}.
\end{proof}

\begin{thm}[A lower bound to the No-CSIT GDMT of the Z-IC]
\label{thm:DMT-ZIC-noCSIT-lower-bound} 
The diversity order achievable by the IIML scheme on a $(M_1,N_1,M_2,N_2)$ Z-IC without CSIT is given as
 \begin{equation*}
    d_{\textrm{LB, Z-IC}}^{\textrm{No-CSIT}}(r_1,r_2)=\min_{i\in \{1,2,s\}} ~d_{i,(M_1,N_1,M_2,N_2)}^{\textrm{IIML}}(r_i),
 \end{equation*}
where $d_{i,(M_1,N_1,M_2,N_2)}^{\textrm{IIML}}(r_i)$ for $i=1,2$ and $i=s$ are given by equation \eqref{eq:IML-single-user-exponents} and Lemma~\ref{lem:IML-sum-optimization}, respectively. This GDMT also represents a lower bound to the No-CSIT GDMT of the $(M_1,N_1,M_2,N_2)$ Z-IC.
\end{thm}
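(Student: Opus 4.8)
The plan is to reduce the statement to two facts: (i) the error probability of the IIML scheme is exponentially equal to the outage probability of the rate region $\mathcal{R}_{\textrm{IML}}$ in \eqref{eq:IML-achievable-rate-region}, and (ii) that outage exponent equals $\min_{i\in\{1,2,s\}} d_i^{\textrm{IML}}(r_i)$. Part 2 then follows as a definitional observation about what ``No-CSIT'' means.

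First I would dispatch the achievability direction. Operating at rates $R_i=r_i\log(\rho)$ and defining the outage event $\mathcal{O}_{\textrm{IML}}=\cup_{i\in\{1,2,s\}}\{I_{ci}\leq R_i\}$, Lemma~\ref{lem:IML-dmt-exponents} already supplies $\mathcal{P}_e^{\textrm{IML}}\dot{\leq}\Pr(\mathcal{O}_{\textrm{IML}})$, so the achievable diversity is at least the negative SNR exponent of $\Pr(\mathcal{O}_{\textrm{IML}})$. By the union bound together with the fact that a finite sum is exponentially equal to its largest term, $\Pr(\mathcal{O}_{\textrm{IML}})\dot{=}\max_{i\in\{1,2,s\}}\Pr(I_{ci}\leq R_i)\dot{=}\rho^{-\min_i d_i^{\textrm{IML}}(r_i)}$, using the definition \eqref{eq:def-outage-exponent-IML-decoder}. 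Hence $d^{\textrm{IML}}(r_1,r_2)\geq \min_i d_i^{\textrm{IML}}(r_i)$, and substituting \eqref{eq:IML-single-user-exponents} for $i=1,2$ and Lemma~\ref{lem:IML-sum-optimization} for $i=s$ yields the claimed expression as a lower bound.

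Next, for the matching converse I would show $\mathcal{P}_e^{\textrm{IML}}\dot{\geq}\Pr(I_{ci}\leq R_i)$ for each $i$ separately, which combined with the union-bound computation above gives $d^{\textrm{IML}}\leq \min_i d_i^{\textrm{IML}}$. For $i=2$ this is the point-to-point converse of \cite{tse1} applied to the interference-free link $Tx_2\to Rx_2$. For $i=1$ I would introduce a genie revealing $m_2$ (hence $X_2$) to $Rx_1$: the genie can only reduce the error probability, and after cancelling the interference the residual channel has mutual information $I_{c_1}$, so the point-to-point converse gives $\mathcal{P}_e^{\textrm{IML}}\dot{\geq}\Pr(I_{c_1}\leq R_1)$. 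The term $i=s$ is the main obstacle: the joint ML pairwise-error event at $Rx_1$ in which \emph{both} $m_1$ and $m_2$ are wrong is governed by $I_{c_s}$ and is counted as an $m_1$ error, so in the regime where $d_s^{\textrm{IML}}$ is the active minimizer this event dominates and delivers $\mathcal{P}_e^{\textrm{IML}}\dot{\geq}\Pr(I_{c_s}\leq R_1+R_2)$. Establishing that the sum-rate outage forces a genuine $m_1$ error, rather than only an uncounted $m_2$ error, is the delicate step and requires care with the ML error-exponent structure; note, however, that Part 2 below needs only achievability, so this converse is required solely to sharpen Part 1 into an exact characterization of the scheme.

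Finally, Part 2 is immediate. In the IIML scheme both transmitters use Gaussian codebooks that depend on no channel matrix, and both receivers use decoders depending only on CSIR, so it is a legitimate No-CSIT transmit-receive scheme. Since the fundamental No-CSIT DMT is by definition the supremum of achievable diversity orders over all such schemes, and the IIML scheme achieves $d_{\textrm{LB, ZIC}}^{\textrm{No-CSIT}}(r_1,r_2)$ by Part 1, we conclude $d^{\textrm{No-CSIT}}(r_1,r_2)\geq d_{\textrm{LB, ZIC}}^{\textrm{No-CSIT}}(r_1,r_2)$, i.e., the characterized DMT is a lower bound to the fundamental No-CSIT DMT of the channel.
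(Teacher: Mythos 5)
Your proposal is correct and, for everything the paper actually proves, it follows the same route: the achievability in Part 1 is exactly the paper's argument --- invoke Lemma~\ref{lem:IML-dmt-exponents} (outage event plus union bound, with the largest exponential term dominating) and substitute \eqref{eq:IML-single-user-exponents} and Lemma~\ref{lem:IML-sum-optimization} --- and Part 2 is the same one-line observation that the IIML scheme is a legitimate No-CSIT transmit-receive scheme, so its achievable DMT lower-bounds the fundamental No-CSIT DMT. Where you diverge is in reading Part 1 as an exact equality and attempting a matching converse $\mathcal{P}_e^{\textrm{IML}}\dot{\geq}\Pr\left(I_{ci}\leq R_i\right)$ for each $i\in\{1,2,s\}$. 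The paper does not attempt this at all: Lemma~\ref{lem:IML-dmt-exponents} is stated and proved only as an inequality (error probability bounded above by outage probability), and the quantity is named $d_{\textrm{LB, ZIC}}^{\textrm{No-CSIT}}$ precisely because only the lower-bound direction is used downstream --- tightness in Theorems~\ref{thm:DMT-symmetric-no-CSIT} and \ref{thm:larger_antenna} comes from matching the F-CSIT upper bound, not from a converse for the scheme itself. Your converse sketch is incomplete exactly where you flag it: showing that a sum-rate outage at $Rx_1$ forces a counted error event (a wrong $\hat{m}_1$ rather than merely a wrong, uncounted $\hat{m}_2$) requires a genuine argument about the joint-ML error structure, and the genie argument for $i=1$ also deserves care since revealing $X_2$ changes the decoder being analyzed. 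But since you correctly observe that Part 2 and the lower-bound reading of Part 1 need only achievability, the incompleteness does not damage the theorem as the paper uses it; it is an attempt to establish strictly more than the paper does.
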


\begin{proof}
The first part of the theorem follows from Lemma~\ref{lem:IML-dmt-exponents}, and the second part of the theorem follows from the fact that the IIML scheme is only one transmit-receive schemes among others that are possible on the Z-IC.
\end{proof}

Although the computation of $d_{s,(M_1,N_1,M_2,N_2)}^{\textrm{IIML}}(r_s)$ and hence characterization of the lower bound to the No-CSIT GDMT of a general Z-IC with arbitrary number of antennas at each node require the application of numerical methods, in what follows, we shall provide closed-form expressions for it for various special cases. We will see that for two special classes of Z-ICs this lower bound meets the upper bound, i.e., the CSIT GDMT of the channel. We start with the case where all the nodes have equal number of antennas.

\begin{lemma}
\label{lem:DMT-symmetric-Z-achievable}
On a MIMO Z-IC with $n$ antennas at all the nodes, $\alpha_{11}=\alpha_{22}=1$ and $\alpha_{21}=\alpha \geq 1$, the IIML scheme can achieve the following GDMT
\begin{equation*}
d_{(n,n,n,n)}^{\textrm{IIML}}(r_1,r_2)=\min \left\{d_{n,n}(r_1),d_{n,n}(r_2), d_{s,(n,n,n,n)}^{\textrm{IIML}}(r_s)\right\}
\end{equation*}
where $d_{s,(n,n,n,n)}^{\textrm{IIML}}(r_s)$ is given as
\begin{IEEEeqnarray}{l}
\label{eq:dmt-IML-symmetric}
d_{s,(n,n,n,n)}^{\textrm{IIML}}(r_s)=\left\{\begin{array}{l}
d_{n,2n}(r_s)+n^2(\alpha-1), ~0\leq r_s\leq n;\\
(\alpha-1)d_{n,n}(\frac{r_s-n}{\alpha-1}),~n\leq r_s\leq n\alpha.
\end{array}\right.
\end{IEEEeqnarray}
\end{lemma}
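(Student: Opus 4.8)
The plan is to prove the three quantities inside the minimum one at a time. The two single-user exponents $d_{n,n}(r_1)$ and $d_{n,n}(r_2)$ require no new work: specializing \eqref{eq:IML-single-user-exponents} to $M_i=N_i=n$ immediately gives $d_{i,(n,n,n,n)}^{\textrm{IML}}(r_i)=d_{n,n}(r_i)$ for $i\in\{1,2\}$. All of the effort therefore goes into evaluating the sum exponent $d_{s,(n,n,n,n)}^{\textrm{IML}}(r_s)$, which by Lemma~\ref{lem:IML-sum-optimization} is the value of a convex optimization problem; the task is to solve that problem in closed form for the symmetric configuration.

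First I would specialize Lemma~\ref{lem:IML-sum-optimization}. With $p=\min\{M_2,N_1\}=n$, $q_1=\min\{M_1,N_1\}=n$, $\alpha_{11}=1$ and $\alpha_{21}=\alpha$, the objective \eqref{eq:IML-sum-optimization-a} collapses to
\[
\sum_{i=1}^{n}(3n+1-2i)\upsilon_i+\sum_{j=1}^{n}(2n+1-2j)\beta_j-n^2\alpha+\sum_{j=1}^{n}\sum_{i=1}^{n-j}(\alpha-\upsilon_i-\beta_j)^+,
\]
minimized over $\sum_{i}(\alpha-\upsilon_i)^++\sum_{j}(1-\beta_j)^+\le r_s$ together with the orderings $0\le\upsilon_1\le\cdots\le\upsilon_n$, $0\le\beta_1\le\cdots\le\beta_n$ and the coupling constraints $\upsilon_i+\beta_j\ge\alpha$ for $i+j\ge n+1$. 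This is exactly the symmetric specialization of the F-CSIT problem of Lemma~\ref{lem:ZIC-optimization-problem} with the entire $\gamma_k$ block (the $H_{22}$ contribution) deleted, which is why the final answer must differ from the F-CSIT expression \eqref{eq:dmt-symmetric-a-geq-1} only through the replacement of $d_{n,3n}$ by $d_{n,2n}$ in the first regime. I would therefore follow the same line of attack used to prove Theorem~\ref{thm:DMT-symmetric-Z-FCSIT}.

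Since the objective is convex and all constraints are linear, I would use the fact emphasized in the paragraph preceding Theorem~\ref{thm:DMT-symmetric-Z-FCSIT}, namely that a steepest-descent/KKT stationary point of a convex program is automatically the global minimum, and proceed by positing the active-constraint pattern in each regime and verifying the KKT conditions. The regime boundary at $r_s=n$ reflects whether the outage budget can be absorbed by the direct-link exponents alone --- whose total contribution $\sum_{j}(1-\beta_j)^+$ is capped at $n$ --- or must spill into the cross-link exponents $\upsilon_i$, which live on the wider range $[0,\alpha]$. For $0\le r_s\le n$ the optimizer reduces, after the vanishing of the penalty terms and the collapse of the coupling constraints, to the point-to-point DMT minimization of \cite{tse1} for an $n\times 2n$ channel plus a constant offset, which is already visible at $r_s=0$ (there $\upsilon_i=\alpha$, $\beta_j=1$ yields $n^2(\alpha+1)=d_{n,2n}(0)+n^2(\alpha-1)$). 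For $n\le r_s\le n\alpha$ the residual budget $r_s-n$ is carried by the cross-link exponents measured above the unit (direct-link) floor, and the change of variables suggested by Lemma~\ref{lem:m-dmt} turns this into an $n\times n$ point-to-point problem with effective threshold $\alpha-1$, producing $(\alpha-1)d_{n,n}\!\big(\tfrac{r_s-n}{\alpha-1}\big)$. As built-in consistency checks, the two regime expressions agree at $r_s=n$ (both equal $n^2(\alpha-1)$) and the sum exponent vanishes at $r_s=n\alpha$, the maximal sum multiplexing gain supportable by the $I_{c_s}'$ channel.

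The main obstacle is the combinatorial bookkeeping needed to pin down the correct active set, especially in the first regime: one must decide exactly which of the coupling constraints $\upsilon_i+\beta_j\ge\alpha$ and which ordering constraints are tight at the optimum, and must control the non-smooth penalty $\sum_{j}\sum_{i\le n-j}(\alpha-\upsilon_i-\beta_j)^+$ by showing that its kinks are inactive at the claimed optimizer. Because $\vec\upsilon$ and $\vec\beta$ are coupled --- a small $\upsilon_i$ forces the compensating $\beta_j$ up through the constraints on the anti-diagonal $i+j=n+1$, and vice versa --- the naive guesses (e.g.\ driving all $\beta_j$ to zero) are infeasible, so the active set must be found with care. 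Once it is identified, convexity reduces the remainder to a routine KKT verification, but reproducing the $n\times 2n$ point-to-point curve with the $n^2(\alpha-1)$ offset and the $(\alpha-1)$-scaled $n\times n$ curve is where essentially all of the work lies.
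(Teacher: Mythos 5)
Your proposal is correct and takes essentially the same route as the paper: the paper's own proof is a one-line reference to Step 2 (the $\alpha\geq 1$ case) of the proof of Theorem~\ref{thm:DMT-symmetric-Z-FCSIT}, i.e., specialize the sum-exponent optimization of Lemma~\ref{lem:IML-sum-optimization} to the symmetric configuration, find the optimizer by steepest descent (global by convexity) separately on $0\leq r_s\leq n$ and $n\leq r_s\leq n\alpha$, and reduce each regime to Lemma~\ref{lem:m-dmt} -- precisely your plan, and your specialized objective, the ``FCSIT problem with the $\gamma$-block deleted'' observation, and the endpoint checks are all accurate.
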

\begin{proof}
The desired result is obtained by following the same steps as in the second part of Theorem~\ref{thm:DMT-symmetric-Z-FCSIT}.
\end{proof}

Fig. \ref{figure_Z_DMT_n} illustrates that the IIML scheme can achieve the GDMT (with CSIT) of the MIMO Z-IC on a region of low multiplexing gains.
\begin{figure}[!hbt]
\centering
\includegraphics[width=12.0cm,height=8cm,]{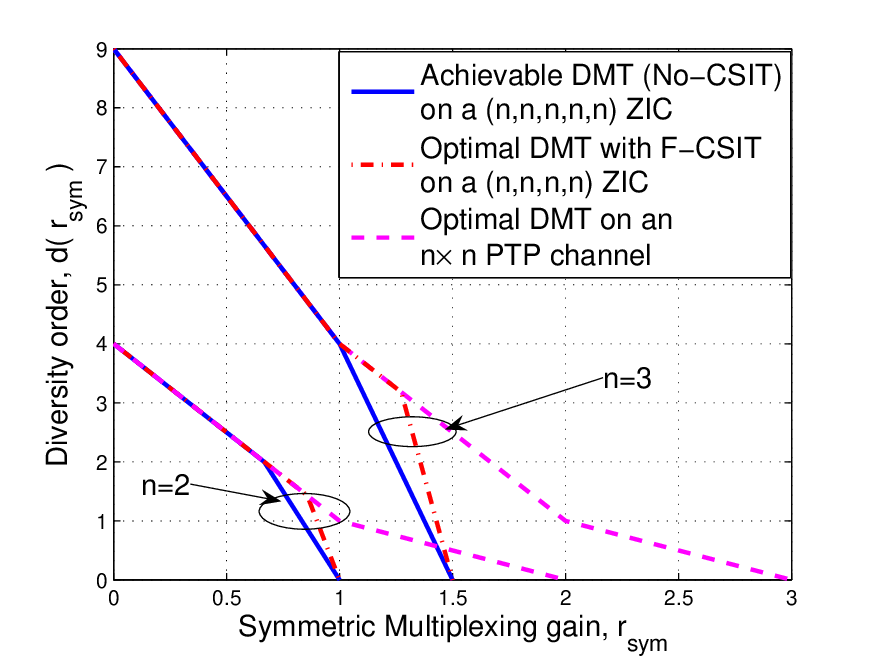}
\caption{Comparison of the DMT on a Z-IC with $\bar{\alpha}=[1,1,1]$ to point-to-point (PTP) performance.}
\label{figure_Z_DMT_n}
\end{figure}

On the other hand, comparing equations \eqref{eq:dmt-symmetric-a-geq-1} to \eqref{eq:dmt-IML-symmetric} we see that the IIML scheme can achieve the CSIT GDMT for high multiplexing gain values - $n\leq r_s\leq n\alpha $ - as well, when $\alpha\geq 1$. This fact raises the natural question: is it possible for the IIML scheme to achieve the CSIT GDMT for all multiplexing gains? If it is, under what circumstances? It turns out that if the interference is strong enough then the IIML decoder can achieve the CSIT GDMT for all symmetric multiplexing gains. For example, Fig. \ref{figure_effect_of_alpha} illustrates this effect on a Z-IC with 2 antennas at all the nodes. These characteristics of the GDMT on MIMO Z-ICs for general $n$ are captured by the following Lemma.

\begin{figure}[!hbt]
\centering
\includegraphics[width=12.0cm,height=8cm,]{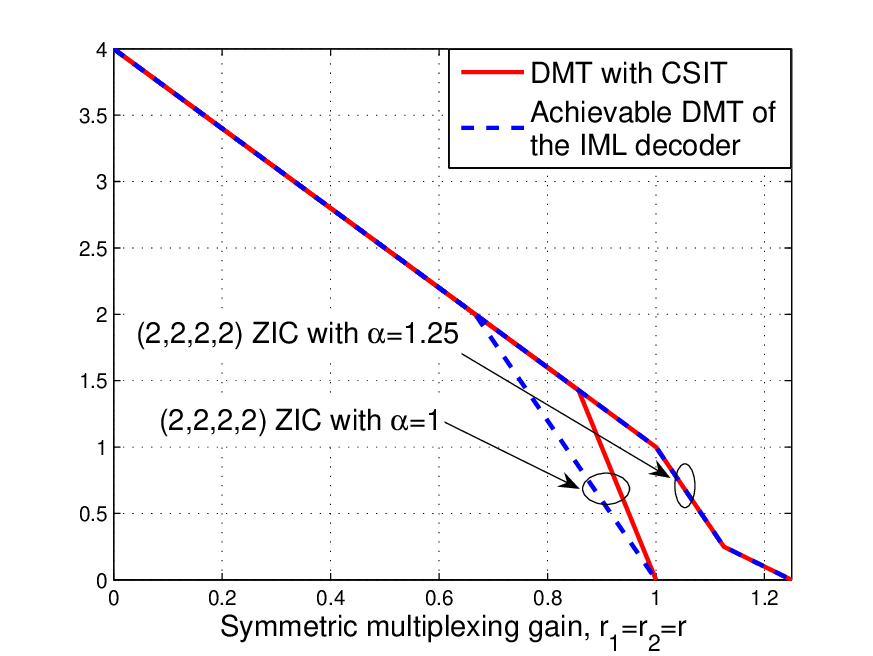}
\caption{The GDMT on the $(2,2,2,2)$ Z-IC with different $\bar{\alpha}$.}
\label{figure_effect_of_alpha}
\end{figure}

\begin{thm}
\label{thm:DMT-symmetric-no-CSIT}
The GDMT specified in Lemma~\ref{lem:DMT-symmetric-Z-achievable} represents the fundamental GDMT of the Z-IC with No-CSIT for the symmetric case of $r_1=r_2=r$, if
\begin{equation}
\label{eq:condition-noCSIT-1}
\alpha \geq 1+\frac{d_{n,n}\left(\frac{n}{2}\right)}{n^2}.
\end{equation}
\end{thm}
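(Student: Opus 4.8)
The plan is to squeeze the fundamental No-CSIT DMT between two bounds that I will show coincide along the symmetric line $r_1=r_2=r$. By Theorem~\ref{thm:DMT-ZIC-noCSIT-lower-bound}, the achievable DMT $d_{(n,n,n,n)}^{\textrm{IML}}(r,r)$ of Lemma~\ref{lem:DMT-symmetric-Z-achievable} is a lower bound to the fundamental No-CSIT DMT of the channel; and since the transmitters can always discard side information, the F-CSIT DMT $d_{\textrm{ZIC},(n,n,n,n)}^{*,\textrm{FCSIT}}(r,r)$ of Theorem~\ref{thm:DMT-symmetric-Z-FCSIT} is an upper bound to it. It therefore suffices to show that these two quantities are equal for every $r$ whenever \eqref{eq:condition-noCSIT-1} holds; the fundamental No-CSIT DMT is then pinned to the common value, which is exactly the expression in Lemma~\ref{lem:DMT-symmetric-Z-achievable}.

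Both DMTs are the minimum of the two single-user terms $d_{n,n}(r)$, which are identical in the two cases, and a sum term evaluated at $r_s=2r$. For $2r\geq n$ the sum terms in \eqref{eq:dmt-symmetric-a-geq-1} and \eqref{eq:dmt-IML-symmetric} are literally the same, so the two DMTs agree automatically on $r\geq n/2$. The only interval on which they can differ is $r\in[0,n/2]$, where the F-CSIT sum term equals $d_{n,3n}(2r)+n^2(\alpha-1)$ while the IML sum term equals $d_{n,2n}(2r)+n^2(\alpha-1)$, with $d_{n,3n}(2r)\geq d_{n,2n}(2r)$. My strategy is to prove that on this interval the single-user term $d_{n,n}(r)$ is the binding constraint in both minima, so that both DMTs collapse to $d_{n,n}(r)$ and hence coincide.

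Because $d_{n,2n}(2r)\leq d_{n,3n}(2r)$, it is enough to dominate the smaller (IML) sum term: I would show $d_{n,n}(r)\leq d_{n,2n}(2r)+n^2(\alpha-1)$ for all $r\in[0,n/2]$, which places $d_{n,n}(r)$ below the IML sum term and, a fortiori, below the F-CSIT sum term. Writing $g(r):=d_{n,n}(r)-d_{n,2n}(2r)$, the requirement becomes $n^2(\alpha-1)\geq \max_{r\in[0,n/2]} g(r)$. I then claim that $g$ is non-decreasing on $[0,n/2]$, so that its maximum is attained at the right endpoint; since $d_{n,2n}(n)=(n-n)(2n-n)=0$, this maximum equals $g(n/2)=d_{n,n}(n/2)$, and the requirement reduces to precisely $n^2(\alpha-1)\geq d_{n,n}(n/2)$, which is \eqref{eq:condition-noCSIT-1}.

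The monotonicity of $g$ is the crux and the only place that needs genuine work, because $d_{n,n}(r)$ and $d_{n,2n}(2r)$ are piecewise-linear convex decreasing curves whose breakpoints are misaligned (integers in $r$ versus half-integers in $r$), so $g$ has no sign-definite curvature and must be checked segment by segment. On a piece with $r\in(k,k+1)$ the slope of $d_{n,n}$ is $-(2n-1-2k)$, while on a piece with $2r\in(j,j+1)$ the slope of $d_{n,2n}(2r)$ in $r$ is $-2(3n-1-2j)$, so $g'(r)=4n-1+2k-4j$. On $[0,n/2]$ one has $k<n/2$ and $j=\lfloor 2r\rfloor\in\{2k,2k+1\}$: for $j=2k$ this gives $g'=4n-1-6k\geq n-1\geq 0$ using $6k<3n$, and for $j=2k+1$, which occurs only when $k\leq n/2-1$, it gives $g'=4n-5-6k\geq n+1>0$. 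Hence $g'\geq 0$ throughout, the maximum of $g$ sits at $r=n/2$, and the two DMTs coincide along the symmetric line, completing the proof.
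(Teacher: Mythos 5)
Your proof is correct and takes essentially the same route as the paper: sandwich the No-CSIT DMT between the IML lower bound and the F-CSIT upper bound, note that the sum exponents of \eqref{eq:dmt-symmetric-a-geq-1} and \eqref{eq:dmt-IML-symmetric} coincide for $r_s\geq n$, and for $r_s<n$ reduce the claim to $d_{n,n}(r)\leq d_{n,2n}(2r)+n^2(\alpha-1)$ on $[0,n/2]$ with the binding case at $r=n/2$. The one improvement is that where the paper justifies the worst case being at $r=n/2$ only informally (the sum exponent ``decays much faster'' and both curves are continuous), you prove it rigorously via the segment-by-segment slope comparison showing $g(r)=d_{n,n}(r)-d_{n,2n}(2r)$ is non-decreasing on $[0,n/2]$, so your write-up actually closes a hand-waved step in the paper's own argument.
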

\begin{proof}
Detailed proof will be provided in Appendix~\ref{pf:thm:DMT-symmetric-no-CSIT}.
\end{proof}
\vspace{2mm}

\begin{ex}
\label{ex-siso-strong-no-csit-zic}
Specializing the result of the above theorem to the SISO case, i.e., $n=1$, we obtain the symmetric GDMT of the SISO Z-IC with $\bar{\alpha}=[1,~\alpha,~1]$ for $\alpha \geq \frac{3}{2}$, which can be written as
\begin{IEEEeqnarray}{rl}
\label{eq:ex-strong-no-csit-zic}
d_{\textrm{SISO-ZIC}}^*(r_1,r_2)=\min \left\{(1-r_1)^+,(1-r_2)^+,(1-r_s)^++(\alpha-r_s)^+\right\}.
\end{IEEEeqnarray}
This result is identical to the achievable GDMT of the CMO scheme, reported in Section III-B of \cite{Nafea-Seddik-Nafie-Gamal-ZIC}. However, there is a significant difference between the two results. While \cite{Nafea-Seddik-Nafie-Gamal-ZIC} proves only the achievability of the GDMT in equation \eqref{eq:ex-strong-no-csit-zic} on a SISO Z-IC, Theorem~\ref{thm:DMT-symmetric-no-CSIT} above also proves that no better performance in terms of GDMT can be achieved. 
\end{ex}

\begin{lemma}
\label{lem:DMT-IML-asymmetric}
Consider the MIMO Z-IC as in Theorem~\ref{thm:asymmetric-FCSIT}, but with no CSIT. The achievable diversity order of the IIML scheme on this channel, at a multiplexing gain pair $(r_1,r_2)$, is given by
\begin{equation*}
d_{(M,N_1,M,N_2)}^{\textrm{IIML}}(r_1,r_2)=\min \left\{d_{M,N_1}(r_1),d_{M,N_2}(r_2),d_{2M,N_1}(r_s)\right\}.
\end{equation*}
\end{lemma}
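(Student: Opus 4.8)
The plan is to invoke Theorem~\ref{thm:DMT-ZIC-noCSIT-lower-bound}, which already establishes that the IIML scheme achieves $d_{(M,N_1,M,N_2)}^{\textrm{IML}}(r_1,r_2)=\min\{d_{1}^{\textrm{IML}}(r_1),d_{2}^{\textrm{IML}}(r_2),d_{s}^{\textrm{IML}}(r_s)\}$, so the task reduces to identifying these three exponents under the present hypotheses $M_1=M_2=M\leq\min\{N_1,N_2\}$ and $\alpha_{11}=\alpha_{22}=\alpha_{21}=1$. The two single-user exponents are immediate from equation~\eqref{eq:IML-single-user-exponents}: the direct links are $M\times N_1$ and $M\times N_2$ point-to-point channels, so $d_{1}^{\textrm{IML}}(r_1)=d_{M,N_1}(r_1)$ and $d_{2}^{\textrm{IML}}(r_2)=d_{M,N_2}(r_2)$, which are already the first two terms of the claimed expression. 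Hence the entire content of the lemma is the identity $d_{s,(M,N_1,M,N_2)}^{\textrm{IML}}(r_s)=d_{2M,N_1}(r_s)$.

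For this I would bypass the general program of Lemma~\ref{lem:IML-sum-optimization} and exploit a simplification special to $\alpha_{11}=\alpha_{21}=1$. In the surrogate $I_{c_s}'$ of equation~\eqref{eq:equivalence-of-two-MIs}, which differs from $I_{c_s}$ by only an $\mathcal{O}(1)$ constant and therefore carries the same SNR exponent, both Gaussian blocks are weighted by the same power of $\rho$, so
\[
I_{c_s}'=\log\det\left(I_{N_1}+\rho H_{11}H_{11}^\dagger+\rho H_{21}H_{21}^\dagger\right)=\log\det\left(I_{N_1}+\rho\,GG^\dagger\right),\quad G=[\,H_{11}\ \ H_{21}\,].
\]
Since $H_{11}\in\mathbb{C}^{N_1\times M}$ and $H_{21}\in\mathbb{C}^{N_1\times M}$ are mutually independent with i.i.d.\ $\mathcal{CN}(0,1)$ entries, the concatenated $N_1\times 2M$ matrix $G$ itself has i.i.d.\ $\mathcal{CN}(0,1)$ entries. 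Thus $I_{c_s}'$ is exactly the mutual information of a point-to-point MIMO channel with $2M$ transmit and $N_1$ receive antennas at nominal SNR $\rho$, and the outage exponent $d_{s}^{\textrm{IML}}(r_s)=\lim_{\rho\to\infty}-\log\Pr(I_{c_s}'\leq r_s\log\rho)/\log\rho$ equals the point-to-point DMT $d_{2M,N_1}(r_s)$ by the result of \cite{tse1}, valid over $0\leq r_s\leq\min\{2M,N_1\}$. Combining the three exponents yields the claimed minimum.

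The only point requiring care -- and what I would flag as the main obstacle -- is the concatenation step: one must verify that merging two independently generated Gaussian blocks weighted by \emph{equal} powers of $\rho$ genuinely produces a single i.i.d.\ Gaussian channel matrix, and that dropping the $1/M_i$ normalizations together with the additive constant in~\eqref{eq:equivalence-of-two-MIs} leaves the exponent unchanged. This equal-power coincidence is precisely what fails once $\alpha_{21}\neq\alpha_{11}$ (as in Lemma~\ref{lem:DMT-symmetric-Z-achievable}, where the unequal scalings force the correlated-Wishart machinery of Lemma~\ref{lem:IML-sum-optimization}), which is why the clean reduction is available here but not there.

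As an independent confirmation I would also recover the same value directly from the optimization in Lemma~\ref{lem:IML-sum-optimization}. Under the present assumptions $p=q_1=M$ and $\alpha_{21}=1$, and I expect the objective together with the ordering constraints~\eqref{eq:IML-sum-optimization-c}--\eqref{eq:IML-sum-optimization-e} to collapse onto the standard Zheng--Tse program for a $2M\times N_1$ channel once the $\upsilon_i$ and $\beta_j$ variables are pooled; checking that the coupling constraint $(\upsilon_i+\beta_j)\geq\alpha_{21}$ for $i+j\geq N_1+1$ and the cross term in~\eqref{eq:IML-sum-optimization-a} are automatically satisfied at the optimum would complete the cross-check. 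Either route gives $d_{s,(M,N_1,M,N_2)}^{\textrm{IML}}(r_s)=d_{2M,N_1}(r_s)$, establishing the lemma.
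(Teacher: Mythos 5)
Your proposal is correct and follows essentially the same route as the paper's own proof: both reduce the problem to the sum-rate exponent via Lemma~\ref{lem:IML-dmt-exponents} and equation~\eqref{eq:IML-single-user-exponents}, pass to the surrogate $I_{c_s}'$ using~\eqref{eq:equivalence-of-two-MIs}, and exploit the equal-power concatenation $H_e=[H_{11}~H_{21}]\in\mathbb{C}^{N_1\times 2M}$, which is i.i.d.\ Gaussian, to read off $d_{s}^{\textrm{IML}}(r_s)=d_{2M,N_1}(r_s)$ from the point-to-point result of \cite{tse1}. The additional cross-check you suggest via Lemma~\ref{lem:IML-sum-optimization} is not in the paper but is unnecessary for the argument to stand.
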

\begin{proof}
From Lemma~\ref{lem:IML-dmt-exponents} and equation \eqref{eq:IML-single-user-exponents} it is clear that to prove the lemma it is sufficient to derive an expression for $d_{s,(M,N_1,M,N_2)}^{\textrm{IIML}}(r_s)$. Towards that, for convenience, we use $I_{c_s}^{'}$ instead of $I_{c_s}$ to evaluate the corresponding outage event since the two are within a constant gap which does not matter in the asymptotics of high SNR as was shown in \eqref{eq:equivalence-of-two-MIs}. Using this in \eqref{eq:def-outage-exponent-IML-decoder} along with the facts that $M_1=M_2=M$ and $\alpha=1$ we get
\begin{IEEEeqnarray}{rl}
\rho^{-d_{s,(M,N_1,M,N_2)}^{\textrm{IIML}}(r_s)}\dot{=}&\Pr\left\{\log\det\left(I_{N_1}+\rho H_{11}H_{11}^\dagger +\rho H_{21}H_{21}^\dagger \right)\leq r_s\log(\rho)\right\},\nonumber\\
\label{eq:outage-sum-noCSIT-asymmetric}
=& \Pr \left\{\log\det\left(I_{N_1}+\rho H_eH_e^\dagger \right)\leq r_s\log(\rho)\right\},
\end{IEEEeqnarray}
where $H_e=[H_{11}~H_{21}]\in \mathbb{C}^{N_1\times (2M)}$ is identically distributed as the other channel matrices, since $H_{11}$ and $H_{21}$ are mutually independent. However, the right hand side of the last equation represents the outage probability of an $N_1\times 2M$ point-to-point MIMO channel whose diversity order was computed in \cite{tse1} and is given by $d_{N_1,2M}(r_s)$. Using this in equation \eqref{eq:outage-sum-noCSIT-asymmetric} we get
\begin{IEEEeqnarray*}{rl}
\rho^{-d_{s,(M,N_1,M,N_2)}^{\textrm{IIML}}(r_s)}\dot{=}\rho^{-d_{N_1,2M}(r_s)}\, , \quad {\rm or} \quad
d_{s,(M,N_1,M,N_2)}^{\textrm{IIML}}(r_s)=d_{N_1,2M}(r_s).
\end{IEEEeqnarray*}
Substituting this and equation \eqref{eq:IML-single-user-exponents} into equation \eqref{eq:IML-dmt-exponents} we obtain the desired result.
\end{proof}

\begin{figure}[!hbt]
\centering
\includegraphics[width=12.0cm,height=8cm,]{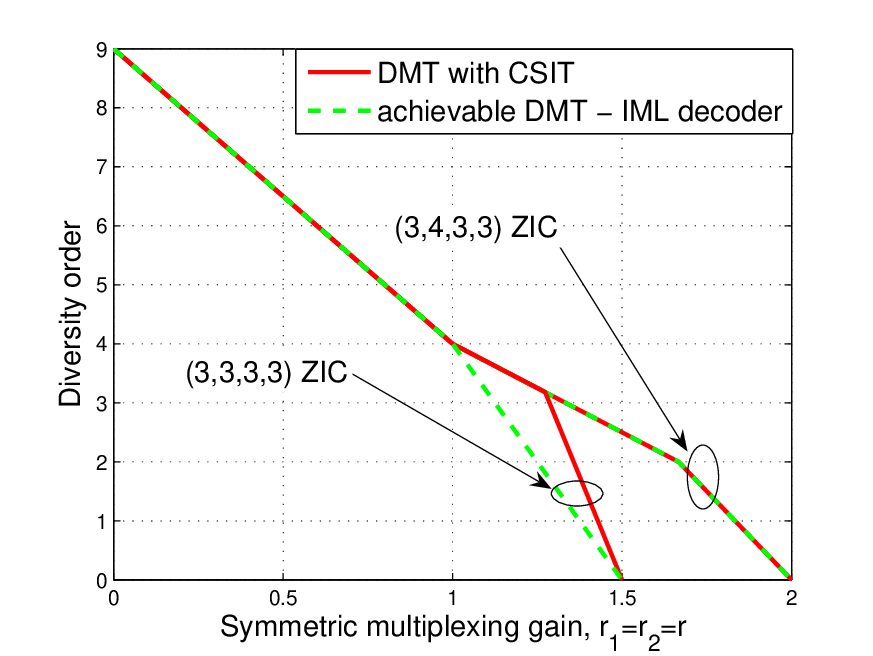}
\caption{Effect of increasing the number of antennas at the interfered node on Z-ICs with $\bar{\alpha}=[1,1,1]$.}
\label{figure_Z_DMT_MleqNi}
\end{figure}

Fig. \ref{figure_Z_DMT_MleqNi} depicts the comparison of the achievable GDMT of the IIML decoder with that of the fundamental CSIT GDMT of the channel on two different MIMO Z-ICs. Comparing the performance improvement of the IIML scheme on the $(3,4,3,3)$ Z-IC with respect to that on the $(3,3,3,3)$ Z-IC, we see that a larger number of antennas at the interfered receiver can completely compensate for the lack of CSIT. Again by the argument that the CSIT GDMT represents an upper bound to the No-CSIT GDMT of the channel, the observation from Fig.~\ref{figure_Z_DMT_MleqNi} implies that the GDMT of Lemma~\ref{lem:DMT-IML-asymmetric} represents the fundamental GDMT of the $(3,4,3,3)$ Z-IC with no CSIT. It turns out that, this channel is only a member of a large class of Z-ICs for which the No-CSIT DMT can be characterized. This class of channels is specified in the next theorem.

\begin{thm}
\label{thm:larger_antenna}
The DMT specified in Lemma~\ref{lem:DMT-IML-asymmetric} represents the fundamental DMT of the channel with No-CSIT and $r_1=r_2=r$, if
\begin{equation*}
N_1\geq M+\frac{d_{M,\min\{N_1,N_2\}}\left(\frac{M}{2}\right)}{M}.
\end{equation*}
\end{thm}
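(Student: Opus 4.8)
The plan is to sandwich the fundamental No-CSIT DMT between the achievable IIML DMT of Lemma~\ref{lem:DMT-IML-asymmetric}, which is a lower bound by Theorem~\ref{thm:DMT-ZIC-noCSIT-lower-bound}, and the F-CSIT DMT of Theorem~\ref{thm:asymmetric-FCSIT}, which is an upper bound, and then to show that these two coincide on the symmetric line $r_1=r_2=r$ under the stated antenna hypothesis. Writing $N_{\min}=\min\{N_1,N_2\}$ and using that $d_{M,N}(r)$ is nondecreasing in $N$, both DMTs share the same single-user part $\min\{d_{M,N_1}(r),d_{M,N_2}(r)\}=d_{M,N_{\min}}(r)$, so they can differ only through their sum-rate exponents evaluated at $r_s=2r$. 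For $r_s\geq M$ the F-CSIT sum exponent is exactly $d_{2M,N_1}(r_s)$ (the second branch in Theorem~\ref{thm:asymmetric-FCSIT}), which is identical to the IIML sum exponent $d_{2M,N_1}(r_s)$; hence the two DMTs already agree for $r\geq M/2$, and it remains only to treat $r\in[0,M/2]$.

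The crux is to show that on $[0,M/2]$ the IIML sum exponent never binds, i.e.
\begin{equation*}
d_{2M,N_1}(2r)\;\geq\; d_{M,N_{\min}}(r),\qquad r\in[0,M/2].
\end{equation*}
Granting this, the IIML DMT reduces to $d_{M,N_{\min}}(r)$ on this range; since the F-CSIT DMT is at least the achievable No-CSIT value and is at most its own single-user term $d_{M,N_{\min}}(r)$, it is squeezed to the same value, so the sandwich closes. To establish the displayed inequality I would study the gap $h(r)=d_{2M,N_1}(2r)-d_{M,N_{\min}}(r)$, which is piecewise linear. Recalling that $d_{M,N}(\cdot)$ has slope $-(M+N-2k-1)$ on $[k,k+1]$, on a generic subinterval $[m/2,(m+1)/2]$ the term $d_{2M,N_1}(2\cdot)$ descends with magnitude $2(2M+N_1-2m-1)$ while $d_{M,N_{\min}}(\cdot)$ descends with magnitude $M+N_{\min}-2\lfloor m/2\rfloor-1$; their difference simplifies to $3M+2N_1-N_{\min}-3m-(m\bmod 2)-1$, which is decreasing in $m$ and, at the largest index $m=M-1$, is at least $2N_1-N_{\min}+1>0$. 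Thus $h$ is strictly decreasing on $[0,M/2]$.

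It then remains to check the single corner at the right endpoint. Since $N_1\geq M$, the point $r_s=M$ is a corner of $d_{2M,N_1}$, giving $d_{2M,N_1}(M)=M(N_1-M)$ and hence $h(M/2)=M(N_1-M)-d_{M,N_{\min}}(M/2)$; this is precisely nonnegative under the hypothesis $N_1\geq M+d_{M,N_{\min}}(M/2)/M$. Combining the monotone decrease of $h$ with $h(M/2)\geq0$ yields $h\geq0$ throughout $[0,M/2]$, which completes the sandwich and identifies the IIML DMT as the fundamental No-CSIT DMT. I expect the main obstacle to be the second paragraph: proving that $h$ is monotone, so that verifying the inequality at the single corner $r=M/2$ suffices. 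The delicate point there is the slope bookkeeping across the mismatched breakpoints of the two curves (integers for $d_{M,N_{\min}}$ versus half-integers for $d_{2M,N_1}(2\cdot)$), and the clean bound $2N_1-N_{\min}+1>0$ is what makes the monotonicity hold for every antenna configuration.
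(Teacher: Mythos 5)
Your proof is correct and follows essentially the same route as the paper's: it sandwiches the No-CSIT DMT between the IIML lower bound of Lemma~\ref{lem:DMT-IML-asymmetric} and the F-CSIT upper bound of Theorem~\ref{thm:asymmetric-FCSIT}, observes that the two sum exponents coincide for $r_s\geq M$, and reduces the regime $r\in[0,M/2]$ to the single corner check $d_{M,\min\{N_1,N_2\}}(M/2)\leq M(N_1-M)$ via monotonicity of the gap, exactly as the paper does by deferring to the argument of Theorem~\ref{thm:DMT-symmetric-no-CSIT}. The only difference is that you substantiate the monotonicity step (the paper's ``decays much faster'' assertion) with explicit piecewise-linear slope bookkeeping, which is a welcome tightening of the same argument rather than a different approach.
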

\begin{proof}
Comparing Theorem~\ref{thm:asymmetric-FCSIT} and Lemma~\ref{lem:DMT-IML-asymmetric}, the desired result can be obtained following the similar steps as in the proof of Theorem~\ref{thm:DMT-symmetric-no-CSIT}.
\end{proof}

\section{Conclusion}
\label{sec:conclusion}
The GDMT of the MIMO Z-IC with CSIT is characterized. It is shown that the knowledge of $H_{21}$ at the second transmitter only is sufficient to achieve the CSIT GDMT of the channel. The No-CSIT GDMT of two special class of Z-ICs have been characterized revealing useful insights about the system such as that a stronger interference or a larger number of antennas at the interfered receiver can completely compensate for the lack of CSIT
on a Z-IC. A complete characterization of the GDMT of the general Z-IC with no CSIT remains open.

\appendices
\section{Proof of Lemma~\ref{lem:ZIC-optimization-problem}}
\label{pf:lem:ZIC-optimization-problem}
Recall that, the negative SNR exponents of $W_1,~W_2$ and $W_3$ are denoted by $\{\beta_j\}_{j=1}^{q_1}$, $\{\gamma_k\}_{k=1}^{q_2}$ and $\{\upsilon_i\}_{i=1}^{p}$, respectively. Interestingly, these matrices have exactly the same structure specified in Theorem~1 and 2 of \cite{DDF-DMT-Journal-IT}, except from the presence of $\alpha_{21}$ in $\rho_{21} =\rho^{\alpha_{21}}$ which can be easily taken care of. Therefore, using these theorems we obtain the following conditional distributions:

\begin{IEEEeqnarray}{rl}
\label{eq:pdf-conditional-13}
f_{W_1|W_3}(\bar{\beta}|\bar{\upsilon})\dot{=}\left\{\begin{array}{cc}
\rho^{-E_1(\bar{\beta},\bar{\upsilon})} & \textrm{if }~(\bar{\beta},\bar{\upsilon})\in \mathcal{B}_1;\\
0& \textrm{if }~(\bar{\beta},\bar{\upsilon})\notin \mathcal{B}_1
\end{array}\right.
\end{IEEEeqnarray}
where
\begin{IEEEeqnarray}{rl}
E_1(\bar{\beta},\bar{\upsilon})= & \left\{\sum_{j=1}^{q_1}\left((M_1+N_1+1-2j)\beta_j+\sum_{i=1}^{\min\{(N_1-j), M_1\}}(\alpha_{21}-\upsilon_i-\beta_j)^+\right)-M_1\sum_{i=1}^{p}(\alpha_{21}-\upsilon_i)^+\right\},\\
\mathcal{B}_1= & \Big\{(\bar{\beta},\bar{\upsilon}):~  0\leq \upsilon_1\leq \cdots \leq \upsilon_{p};~
 0\leq \beta_1\leq \cdots \leq \beta_{q_1};~
 (\upsilon_i+\beta_j)\geq \alpha_{21}, ~\forall (i+j)\geq (N_1+1)\Big\}
\end{IEEEeqnarray}
and
\begin{IEEEeqnarray}{rl}
\label{eq:pdf-conditional-23}
f_{W_2|W_3}(\bar{\gamma}|\bar{\upsilon})\dot{=}\left\{\begin{array}{cc}
\rho^{-E_2(\bar{\gamma},\bar{\upsilon})} & \textrm{if }~(\bar{\gamma},\bar{\upsilon})\in \mathcal{B}_2;\\
0& \textrm{if }~(\bar{\gamma},\bar{\upsilon})\notin \mathcal{B}_2
\end{array}\right.
\end{IEEEeqnarray}
where
\begin{IEEEeqnarray}{rl}
E_2(\bar{\gamma},\bar{\upsilon})= & \left\{\sum_{k=1}^{q_2}\left((M_2+N_2+1-2k)\gamma_k+\sum_{i=1}^{\min\{(M_2-k), N_2\}}(\alpha_{21}-\upsilon_i-\gamma_k)^+\right)-N_2\sum_{i=1}^{p}(\alpha_{21}-\upsilon_i)^+\right\},\\
\mathcal{B}_2= & \Big\{(\bar{\beta},\bar{\upsilon}):~  0\leq \upsilon_1\leq \cdots \leq \upsilon_{p};~
 0\leq \gamma_1\leq \cdots \leq \gamma_{q_2};~
 (\upsilon_i+\gamma_k)\geq \alpha_{21}, ~\forall (i+k)\geq (M_2+1)\Big\}
\end{IEEEeqnarray}
These distributions are easily obtained from equation (10) of \cite{DDF-DMT-Journal-IT} by changing it properly due to the presence of the $\rho^{\alpha_{21}}$ in place of $\rho$ in \cite{DDF-DMT-Journal-IT}.

It was proved in \cite{HDRC-journal} that given the non-zero eigenvalues of $W_3$, the matrices $W_1$ and $W_2$ are conditionally independent. Since the non-zero eigenvalues of $H_{21}H_{21}^\dagger$ and $H_{21}^\dagger H_{21}$ are exactly the same for each realization of the matrix $H_{21}$, given $\bar{\upsilon}$, $\bar{\beta}$ and $\bar{\gamma}$ are conditionally independent of each other. Intuitively, it is a well known fact in the literature of random matrix theory that the eigenvalues of $W_1$ are dependent on the matrix $W_3$ only through its eigenvalues. The same is true for $W_2$. Therefore, given the eigenvalues of $W_3$, the eigenvalues of $W_1$ and $W_2$ are conditionally independent\footnote{For a detailed proof of this fact the reader is referred to Lemma~1 of \cite{HDRC-journal}.}. Using this fact we have the following:

\begin{IEEEeqnarray}{rl}
\label{eq:pdf-product-form}
f_{W_1,W_2,W_3}(\bar{\beta},\bar{\gamma},\bar{\upsilon})= & f_{W_1,W_2|W_3}(\bar{\beta},\bar{\gamma}|\bar{\upsilon}) f_{W_3}(\bar{\upsilon}),\\
\stackrel{(a)}{=} & f_{W_1|W_3}(\bar{\beta}|\bar{\upsilon})f_{W_2|W_3}(\bar{\gamma}|\bar{\upsilon}) f_{W_3}(\bar{\upsilon}).
\end{IEEEeqnarray}

Now, substituting equations \eqref{eq:pdf-conditional-13}, \eqref{eq:pdf-conditional-23} into the above equation we obtain the joint distribution $f_{W_1,W_2,W_3}(\bar{\beta},\bar{\gamma},\bar{\upsilon})$, where the marginal distributions of $\upsilon_i$'s are given by
\begin{IEEEeqnarray}{rl}
\label{eq:pdf-3}
f_{W_3}(\bar{\upsilon})\dot{=}\left\{\begin{array}{cc}
\rho^{-\sum_{i=1}^p(M_2+N_1+1-2i)\upsilon_i} & \textrm{if }~0\leq \upsilon_1\leq \cdots \leq \upsilon_p;\\
0& \textrm{otherwise},
\end{array}\right.
\end{IEEEeqnarray}
which was derived in \cite{tse1}.
Using this joint distribution $f_{W_1,W_2,W_3}(\bar{\beta},\bar{\gamma},\bar{\upsilon})$ and the Laplace's method and following the similar approach as in \cite{tse1} equation \eqref{eq:outage-sum-bound-FCSIT} can be evaluated to obtain the following:
\begin{subequations}
\label{eq:main-optimization-1}
\begin{align}
d_{\mathcal{O}_s}(r_s)= & \min \sum_{i=1}^{p}(M_2+N_1+1-2i)\upsilon_i +\sum_{j=1}^{q_1}(M_1+N_1+1-2j)\beta_j+\sum_{k=1}^{q_2}(M_2+N_2+1-2k)\gamma_k \nonumber \\
& -(M_1+N_2)\sum_{i=1}^p(\alpha_{21}-\upsilon_i)^+ +\sum_{k=1}^{q_2}\sum_{i=1}^{\min\{(M_2-k), N_2\}}(\alpha_{21}-\upsilon_i-\gamma_k)^+\nonumber \\
\label{eq:main-optimization-1a}
 & +\sum_{j=1}^{q_1}\sum_{i=1}^{\min\{(N_1-j), M_1\}}(\alpha_{21}-\upsilon_i-\beta_j)^+;\\
\label{eq:main-optimization-1b}
\textrm{constrained by: } &
\sum_{i=1}^{p}(\alpha_{21}- \upsilon_i)^++\sum_{j=1}^{q_1}(\alpha_{11}-\beta_j)^++\sum_{k=1}^{q_2}(\alpha_{22}-\gamma_k)^+< r_s;\\
\label{eq:main-optimization-1c}
& 0\leq \upsilon_1\leq \cdots \leq \upsilon_{p};\\
\label{eq:main-optimization-1d}
& 0\leq \beta_1\leq \cdots \leq \beta_{q_1};\\
\label{eq:main-optimization-1e}
& 0\leq \gamma_1\leq \cdots \leq \gamma_{q_2};\\
\label{eq:main-optimization-1f}
& (\upsilon_i+\beta_j)\geq \alpha_{21}, ~\forall (i+j)\geq (N_1+1);\\
\label{eq:main-optimization-1g}
& (\upsilon_i+\gamma_k)\geq \alpha_{21}, ~\forall (i+k)\geq (M_2+1).
\end{align}
\end{subequations}

Finally, the desired result follows from the fact that by restricting $\alpha_i\leq \alpha_{21}$ for all $i\leq p$ does not change the optimal solution of the above problem. This can be proved as follows: suppose the optimal solution has $\alpha_i>\alpha_{21}$ for some $i$. Now, since the objective function is monotonically decreasing function of $\alpha_i$ for all $i$, substituting $\alpha_i=\alpha_{21}$ does not violate any of the constraints but reduces the objective function. However, that means the earlier solution was not really the optimal solution. Therefore, in the optimal solution we must have $\alpha_i\leq \alpha_{21}$ for all $i\leq p$. However, with this constraint we have
\begin{equation*}
    (\alpha_{21}-\upsilon_i)^+=(\alpha_{21}-\upsilon_i),~ \forall~i\leq p.
\end{equation*}

Substituting this in equation \eqref{eq:main-optimization-1a} we get the desired result.

\section{Proof of Theorem~\ref{thm:DMT-symmetric-Z-FCSIT}}
\label{pf:thm:DMT-symmetric-Z-FCSIT}
The main steps of the proof can be described as follows. First, we simplify the optimization problem in equations \eqref{eq:main-optimization-a} and \eqref{eq:main-optimization} by putting specific values of the different parameters, such as $M_i$, $N_i$ and $\alpha_{ij}$ as stated in the statement of the theorem, in equations \eqref{eq:main-optimization-a} and \eqref{eq:main-optimization}. Then we calculate a local minimum of this simplified optimization problem for each value of $r$, using the steepest descent method, which by the previous argument then represents a global minimum. The latter part of the problem, i.e., the computation of the local minimum will be carried out in two steps: in step one, we consider the case when $\alpha\leq 1$ and then in the second step we consider the remaining case. Let us start by deriving the simplified optimization problem.

Substituting $M_1=M_2=N_1=N_2=n$, $\alpha_{11}=\alpha_{22}=1$ and $\alpha_{21}=\alpha$ in the optimization problem of Lemma~\ref{lem:ZIC-optimization-problem} we obtain the following:
\begin{subequations}
\label{eq:symmetric-optimization}
\begin{align}
\min \mathcal{O}_s \triangleq \min_{\left(\vec{\upsilon},\vec{\gamma},\vec{\beta}, \bar{\alpha} \right)} \sum_{i=1}^n(4n+1-2i)\upsilon_i +&\sum_{j=1}^n(2n+1-2j)\beta_j+\sum_{k=1}^n(2n+1-2k)\gamma_k \nonumber \\
\label{eq:symmetric-optimization-a}
  -2n^2&\alpha +\sum_{k=1}^n\sum_{i=1}^{(n-k)}(\alpha-\upsilon_i-\gamma_k)^+
+\sum_{j=1}^n\sum_{i=1}^{(n-j)}(\alpha-\upsilon_i-\beta_j)^+;\\
\label{eq:symmetric-optimization-b}
\textrm{constrained by: } &
\sum_{i=1}^n(\alpha- \upsilon_i)^++\sum_{j=1}^n(1-\beta_j)^++\sum_{k=1}^n(1-\gamma_k)^+< r_s;\\
\label{eq:symmetric-optimization-c}
& 0\leq \upsilon_1\leq \cdots \leq \upsilon_n;\\
\label{eq:symmetric-optimization-d}
& 0\leq \beta_1\leq \cdots \leq \beta_n;\\
\label{eq:symmetric-optimization-e}
& 0\leq \gamma_1\leq \cdots \leq \gamma_n;\\
\label{eq:symmetric-optimization-f}
& (\upsilon_i+\beta_j)\geq \alpha, ~\forall (i+j)\geq (n+1);\\
\label{eq:symmetric-optimization-g}
& (\upsilon_i+\gamma_k)\geq \alpha, ~\forall (i+k)\geq (n+1).
\end{align}
\end{subequations}

As stated earlier, in what follows, we solve this optimization problem in two steps; in the first step, we assume $\alpha\leq 1$.

\subsection{Step 1: ($\alpha\leq 1$)}
To apply the steepest descent method we first compute the rate of change of the objective function with respect to the various parameters. Differentiating the objective function in equation \eqref{eq:symmetric-optimization-a} we obtain the following:
\begin{IEEEeqnarray}{l}
\left.\frac{\partial \mathcal{O}_s}{\partial \alpha_i}\right|_{\alpha_1=0,\cdots, \alpha_{i-1}=0,\alpha_{i+1}=1,\cdots,\alpha_n=1, \beta_1=1,\gamma_1=1}=(4n+1-2i), ~1\leq i\leq n;\\
\left.\frac{\partial \mathcal{O}_s}{\partial \beta_1}\right|_{\alpha_1=1, \gamma_1=1}=(2n-1)\leq (4n+1-2i), ~\forall~1\leq i\leq n;\\
\left.\frac{\partial \mathcal{O}_s}{\partial \gamma_1}\right|_{\alpha_1=1, \beta_1=1}=(2n-1)\leq (4n+1-2i), ~\forall~1\leq i\leq n.
\end{IEEEeqnarray}
Note that it is sufficient, to consider the decay of the function with respect to (w.r.t.) $\beta_1$ and $\gamma_1$ only, because of the decreasing slope of the objective function with increasing index of $\beta$ and $\gamma$ and equation \eqref{eq:symmetric-optimization-d} and \eqref{eq:symmetric-optimization-e}. Therefore, it is clear from the slopes of the function given above that for $(i-1)\alpha\leq r_s\leq i\alpha$, the steepest descent of the function is along decreasing values of $\alpha_i$, while $\beta_1=\gamma_1=1$. Note that for these values of $\beta_1$ and $\gamma_1$ the last two terms of equation \eqref{eq:symmetric-optimization-a} vanishes and equations \eqref{eq:symmetric-optimization-d}-\eqref{eq:symmetric-optimization-g} becomes redundant and as a result the optimization problem of  \eqref{eq:symmetric-optimization} simplifies to the following:
\begin{subequations}
\label{eq:symmetric-optimization1}
\begin{align}
\min &\sum_{i=1}^n(4n+1-2i)\upsilon_i +2n^2(1-\alpha);\\
\label{eq:symmetric-optimization1-b}
\textrm{constrained by: } &
\sum_{i=1}^n(\alpha- \upsilon_i)^+\leq  r_s;\\
\label{eq:symmetric-optimization1-c}
& 0\leq \upsilon_1\leq \cdots \leq \upsilon_n.
\end{align}
\end{subequations}
The solution of this optimization problem follows from Lemma~\ref{lem:m-dmt} and is given by
\begin{IEEEeqnarray}{l}
\label{eq:dmt-symmetric-1a}
d_s(r_s)=\alpha d_{n,3n}\left(\frac{r_s}{\alpha}\right)+2n^2(1-\alpha), ~0\leq r_s\leq n\alpha.
\end{IEEEeqnarray}

The above solution also implies that for $\alpha\geq n\alpha$, the optimal solution have $\alpha_i=0,~ \forall i$, which when substituted in equation \eqref{eq:symmetric-optimization} we obtain the following problem.
\begin{subequations}
\label{eq:symmetric-optimization2}
\begin{align}
\label{eq:symmetric-optimization2-a}
\min &\sum_{j=1}^n(2n+1-2j)\beta_j+\sum_{k=1}^n(2n+1-2k)\gamma_k -2n^2,\\
\label{eq:symmetric-optimization2-b}
\textrm{constrained by: } &
\sum_{j=1}^n(1-\beta_j)^++\sum_{k=1}^n(1-\gamma_k)^+< (r_s-n\alpha);\\
\label{eq:symmetric-optimization2-c}
& \alpha\leq \beta_1\leq \cdots \leq \beta_n;\\
\label{eq:symmetric-optimization2-d}
& \alpha\leq \gamma_1\leq \cdots \leq \gamma_n.
\end{align}
\end{subequations}
Note that the last two summands in the objective function \eqref{eq:symmetric-optimization} is zero because of equations \eqref{eq:symmetric-optimization2-c} and \eqref{eq:symmetric-optimization2-d}. Now, from the symmetry of the optimization problem \eqref{eq:symmetric-optimization2} w.r.t. $\beta_i$ and $\gamma_i$, we can assume without loss of generality that the optimal solution will have $\beta_i=\gamma_i,~\forall i$. Substituting this and $\delta_i=\beta_i-\alpha$ in equation \eqref{eq:symmetric-optimization2} we get the following equivalent optimization problem:
\begin{subequations}
\label{eq:symmetric-optimization3}
\begin{align}
\label{eq:symmetric-optimization3-a}
\min \; &2\sum_{j=1}^n(2n+1-2j)\delta_j,\\
\label{eq:symmetric-optimization3-b}
\textrm{constrained by: } &
\sum_{j=1}^n(1-\alpha-\delta_j)^+\leq \left(\frac{r_s-n\alpha}{2}\right);\\
\label{eq:symmetric-optimization3-c}
& 0\leq \delta_1\leq \cdots \leq \delta_n.
\end{align}
\end{subequations}
The solution of this optimization problem follows again from Lemma~\ref{lem:m-dmt} and is given by
\begin{IEEEeqnarray}{rl}
d_s(r_s)=& 2(1-\alpha) d_{n,n}\left(\frac{r_s-n\alpha}{2(1-\alpha)}\right), ~0\leq \frac{r_s-n\alpha}{2}\leq n(1-\alpha)\nonumber\\
\label{eq:dmt-symmetric-1b}
=&2(1-\alpha) d_{n,n}\left(\frac{r_s-n\alpha}{2(1-\alpha)}\right), ~n\alpha\leq r_s\leq n(2-\alpha).
\end{IEEEeqnarray}
Combining equations \eqref{eq:dmt-symmetric-1a} and \eqref{eq:dmt-symmetric-1b} we obtain equation \eqref{eq:dmt-symmetric-a-leq-1} of Theorem~\ref{thm:DMT-symmetric-Z-FCSIT} and we have completed the first step of this proof. In what follows we consider the remaining case, when $\alpha\geq 1$.

\subsection{Step 2: ($\alpha\geq 1$)}
Differentiating the objective function in equation \eqref{eq:symmetric-optimization-a} we obtain the following:
\begin{IEEEeqnarray}{l}
\label{eq:slope-sym-2a-alpha1}
\left.\frac{\partial \mathcal{O}_s}{\partial \alpha_1}\right|_{\alpha_i= \alpha~\forall~i\geq 2, \beta_1=1,\gamma_1=1}=\left\{\begin{array}{cc}(4n-1),& \textrm{for}~\alpha_1\geq (\alpha-1);\\
(2n-1),& \textrm{for}~0\leq \alpha_1\leq (\alpha-1);\end{array}\right.\\
\label{eq:slope-sym-2a-beta1-gamma1}
\left.\frac{\partial \mathcal{O}_s}{\partial \beta_1}\right|_{\alpha_i= \alpha~\forall~i\geq 2,\gamma_1=1}=\left.\frac{\partial \mathcal{O}_s}{\partial \gamma_1}\right|_{\alpha_i= \alpha~\forall~i\geq 2,\beta_1=1}=\left\{\begin{array}{cc}(2n-1),& \textrm{for}~\alpha_1\geq (\alpha-1);\\
(2n-2),& \textrm{for}~0\leq \alpha_1\leq (\alpha-1);\end{array}\right.\\
\label{eq:slope-sym-2a-alpha2}
\left.\frac{\partial \mathcal{O}_s}{\partial \alpha_1}\right|_{\alpha_1=(\alpha-1),\alpha_i= \alpha~\forall~i\geq 3, \beta_1=1,\gamma_1=1}=(4n-3), \textrm{for}~\alpha_2\geq (\alpha-1).
\end{IEEEeqnarray}
Comparing equations \eqref{eq:slope-sym-2a-alpha1} and \eqref{eq:slope-sym-2a-beta1-gamma1}, we realize that for $0\leq r_s\leq 1$, the steepest descent is along the direction of decreasing $\alpha_1$, while $\beta_1=\gamma_1=1$. On the other hand, comparing equations \eqref{eq:slope-sym-2a-alpha1}, \eqref{eq:slope-sym-2a-beta1-gamma1} and \eqref{eq:slope-sym-2a-alpha2} it is clear that beyond $r_s=1$, decreasing $\alpha_2$ has the steepest descent than $\beta_1, \gamma_1$ and even $\alpha_1$. In the same way it can be proved that for $(i-1)\leq r_s\leq i$, the steepest descent of the function is along decreasing values of $\alpha_i$, while $\beta_1=\gamma_1=1$. Note that for these values of $\beta_1$ and $\gamma_1$ the last two terms of equation \eqref{eq:symmetric-optimization-a} vanish and equations \eqref{eq:symmetric-optimization-d}-\eqref{eq:symmetric-optimization-g} becomes redundant and as a result the optimization problem of  \eqref{eq:symmetric-optimization} simplifies to the following:
\begin{subequations}
\label{eq:symmetric-optimization4}
\begin{align}
\min &\sum_{i=1}^n(4n+1-2i)\upsilon_i -2n^2(\alpha-1);\\
\label{eq:symmetric-optimization4-b}
\textrm{constrained by: } &
\sum_{i=1}^n(\alpha- \upsilon_i)^+\leq  r_s;\\
\label{eq:symmetric-optimization4-c}
& (\alpha-1)\leq \upsilon_1\leq \cdots \leq \upsilon_n.
\end{align}
\end{subequations}
By the aforementioned argument, each of the $\upsilon_i$'s can decrease with the constraint $\upsilon_i\geq (\alpha-1)$ with increasing $r_s$. Therefore, substituting $\upsilon_i'=\upsilon_i-(\alpha-1)$ for all $i$ in the above set of equations we obtain the following equivalent optimization problem:
\begin{subequations}
\label{eq:symmetric-optimization5}
\begin{align}
\min &\sum_{i=1}^n(4n+1-2i)\upsilon_i' +n^2(\alpha-1);\\
\label{eq:symmetric-optimization5-b}
\textrm{constrained by: } &
\sum_{i=1}^n(1- \upsilon_i')^+\leq  r_s;\\
\label{eq:symmetric-optimization5-c}
& 0\leq \upsilon_1'\leq \cdots \leq \upsilon_n',
\end{align}
\end{subequations}
which in turn by Lemma~\ref{lem:m-dmt} has the following optimal value
\begin{IEEEeqnarray}{l}
\label{eq:dmt-symmetric-2a}
d_s(r_s)=\alpha d_{n,3n}\left(r_s\right)+n^2(\alpha-1), ~0\leq r_s\leq n.
\end{IEEEeqnarray}

It is clear from this solution that for $r_s\geq n$, $\alpha_i\leq (\alpha-1)$ for all $i$, and the optimization problem \eqref{eq:symmetric-optimization} reduces to the following:
\begin{subequations}
\label{eq:symmetric-optimization6}
\begin{align}
\min \mathcal{O}_s \triangleq \min_{\left(\vec{\upsilon},\vec{\gamma},\vec{\beta}, \bar{\alpha} \right)} \sum_{i=1}^n(4n+1-2i)\upsilon_i +&\sum_{j=1}^n(2n+1-2j)\beta_j+\sum_{k=1}^n(2n+1-2k)\gamma_k \nonumber \\
\label{eq:symmetric-optimization6-a}
  -2n^2&\alpha +\sum_{k=1}^n\sum_{i=1}^{(n-k)}(\alpha-\upsilon_i-\gamma_k)^+
+\sum_{j=1}^n\sum_{i=1}^{(n-j)}(\alpha-\upsilon_i-\beta_j)^+;\\
\label{eq:symmetric-optimization6-b}
\textrm{constrained by: } &
\sum_{i=1}^n(\alpha-1- \upsilon_i)+\sum_{j=1}^n(1-\beta_j)^++\sum_{k=1}^n(1-\gamma_k)^+\leq (r_s-n);\\
\label{eq:symmetric-optimization6-c}
& 0\leq \upsilon_1\leq \cdots \leq \upsilon_n\leq (\alpha-1);\\
\label{eq:symmetric-optimization6-d}
& 0\leq \beta_1\leq \cdots \leq \beta_n;\\
\label{eq:symmetric-optimization6-e}
& 0\leq \gamma_1\leq \cdots \leq \gamma_n;\\
\label{eq:symmetric-optimization6-f}
& (\upsilon_i+\beta_j)\geq \alpha, ~\forall (i+j)\geq (n+1);\\
\label{eq:symmetric-optimization6-g}
& (\upsilon_i+\gamma_k)\geq \alpha, ~\forall (i+k)\geq (n+1).
\end{align}
\end{subequations}
Note the upper bound in equation \eqref{eq:symmetric-optimization6-c}, which is different from equation \eqref{eq:symmetric-optimization-c}. Also, since we are seeking for the minimum value of the objective function which decreases with every $\beta_i$ and $\gamma_j$, in the optimal solution these parameters must take their minimum value, which from equations \eqref{eq:symmetric-optimization6-f} and \eqref{eq:symmetric-optimization6-g} is given by
\begin{IEEEeqnarray}{l}
\beta_{j}|_{\min}=\gamma_{j}|_{\min}=\alpha-\upsilon_{n-j+1},
\end{IEEEeqnarray}
which along with the ordering among the $\upsilon_i$'s, $\beta_j$'s and $\gamma_j$'s imply that all the terms in the last two summands of the objective function are non-negative. Substituting these minimum values and thereby eliminating $\beta_j$'s and $\gamma_j$'s from the optimization problem we obtain the following equivalent optimization problem:

\begin{subequations}
\label{eq:symmetric-optimization7}
\begin{align}
\label{eq:symmetric-optimization7-a}
\min & \sum_{i=1}^n(2n+1-2i)\upsilon_i, \\
\label{eq:symmetric-optimization7-b}
\textrm{constrained by: } &
\sum_{i=1}^n(\alpha-1- \upsilon_i)\leq (r_s-n);\\
\label{eq:symmetric-optimization7-c}
& 0\leq \upsilon_1\leq \cdots \leq \upsilon_n\leq (\alpha-1),
\end{align}
\end{subequations}
where the simplification of the objective function involves the following algebraic computations
\begin{IEEEeqnarray*}{rl}
\mathcal{O}_s = &\sum_{i=1}^n(4n+1-2i)\upsilon_i +2 \sum_{j=1}^n(2n+1-2j)(\alpha-\upsilon_{n-j+1})-2n^2\alpha \nonumber \\
  &+\sum_{k=1}^n\sum_{i=1}^{(n-k)}(\alpha-\upsilon_i-(\alpha-\upsilon_{n-k+1}))
+\sum_{j=1}^n\sum_{i=1}^{(n-j)}(\alpha-\upsilon_i-(\alpha-\upsilon_{n-j+1}));\\
= &\sum_{i=1}^n(4n+1-2i)\upsilon_i +2 \sum_{j=1}^n(2n+1-2j)(\alpha-\upsilon_{n-j+1})-2n^2\alpha \nonumber \\
  &+n(n-1)\alpha-\sum_{i=1}^n2(n-i)\upsilon_i - \sum_{k=1}^n (\alpha-\upsilon_{n-k+1})
-\sum_{j=1}^n(n-j)(\alpha-\upsilon_{n-j+1});\\
= &\sum_{i=1}^n(2n+1)\upsilon_i +2 \sum_{j=1}^n(n+1-j)(\alpha-\upsilon_{n-j+1})-n(n+1)\alpha \nonumber\\
= &\sum_{i=1}^n(2n+1-2i)\upsilon_i +2 \sum_{j=1}^n(n+1-j)\alpha-n(n+1)\alpha \nonumber\\
=&\sum_{i=1}^n(2n+1-2i)\upsilon_i.
\end{IEEEeqnarray*}

The solution of this optimization problem follows again from Lemma~\ref{lem:m-dmt} and is given by
\begin{IEEEeqnarray}{rl}
d_s(r_s)=& (\alpha-1) d_{n,n}\left(\frac{r_s-n}{(\alpha-1)}\right), ~0\leq (r_s-n)\leq n(\alpha-1)\nonumber\\
\label{eq:dmt-symmetric-2b}
=&(\alpha-1) d_{n,n}\left(\frac{r_s-n}{(\alpha-1)}\right), ~n\leq r_s\leq n\alpha.
\end{IEEEeqnarray}
Combining equations \eqref{eq:dmt-symmetric-2a} and \eqref{eq:dmt-symmetric-2b} we obtain equation \eqref{eq:dmt-symmetric-a-geq-1} of Theorem~\ref{thm:DMT-symmetric-Z-FCSIT}.

\section{Proof of Theorem~\ref{thm:DMT-femto-symmetric}}
\label{pf:thm:DMT-femto-symmetric}
When we substitute the specific values of the different parameters in equation \eqref{eq:main-optimization}, it reduces to the following
\begin{subequations}
\label{eq:femto-optimization}
\begin{align}
d_{s,(n,n,n,n)}^{\textrm{Femto}}(r_s)=& \min \sum_{i=1}^n(4n+1-2i)\upsilon_i +\sum_{j=1}^n(2n+1-2j)\beta_j+\sum_{k=1}^n(2n+1-2k)\gamma_k \nonumber \\
\label{eq:femto-optimization-a}
 & -2n^2 +\sum_{k=1}^n\sum_{i=1}^{(n-k)}(1-\upsilon_i-\gamma_k)^+
+\sum_{j=1}^n\sum_{i=1}^{(n-j)}(1-\upsilon_i-\beta_j)^+;\\
\label{eq:femto-optimization-b}
\textrm{constrained by: } &
\sum_{i=1}^n(1- \upsilon_i)^++\sum_{j=1}^n(1-\beta_j)^++\sum_{k=1}^n(\alpha-\gamma_k)^+< r_s;\\
\label{eq:femto-optimization-c}
& 0\leq \upsilon_1\leq \cdots \leq \upsilon_n;\\
\label{eq:femto-optimization-d}
& 0\leq \beta_1\leq \cdots \leq \beta_n;\\
\label{eq:femto-optimization-e}
& 0\leq \gamma_1\leq \cdots \leq \gamma_n;\\
\label{eq:femto-optimization-f}
& (\upsilon_i+\beta_j)\geq 1, ~\forall (i+j)\geq (n+1);\\
\label{eq:femto-optimization-g}
& (\upsilon_i+\gamma_k)\geq 1, ~\forall (i+k)\geq (n+1).
\end{align}
\end{subequations}

Differentiating the objective function with respect to $\upsilon_i,~\forall~i, \beta_1$ and $\gamma_1$ we find that
\begin{IEEEeqnarray}{rl}
\left.\frac{\partial \mathcal{F}_{\textrm{Femto}}}{\partial \upsilon_i}\right|_{\beta_1=1,\gamma_1=\alpha}=& (4n+1-2i)\geq \left.\frac{\partial \mathcal{F}_{\textrm{Femto}}}{\partial \beta_1}\right|_{\gamma_1=\alpha, \upsilon_k=0,~\forall k<i, \upsilon_k=1,\forall k>i}=(2n-i),\\
=&\left.\frac{\partial \mathcal{F}_{\textrm{Femto}}}{\partial \gamma_1}\right|_{\beta_1=1, \upsilon_k=0,~\forall k<i, \upsilon_k=1,\forall k>i},~\forall ~i\leq n,
\end{IEEEeqnarray}
where we have denoted the objective function by $\mathcal{F}_{\textrm{Femto}}$. It is clear form these values that, for $(i-1)\leq r_s\leq i$, the steepest descent is along decreasing $\upsilon_i$ with $\beta_1=1$ and $\gamma_1=\alpha$. Putting this in equation \eqref{eq:femto-optimization} we get
\begin{subequations}
\label{eq:femto-optimization1}
\begin{align}
\label{eq:femto-optimization1-a}
d_{s,(n,n,n,n)}^{\textrm{Femto}}(r_s)=& \min \sum_{i=1}^n(4n+1-2i)\upsilon_i
+n^2(\alpha-1);\\
\label{eq:femto-optimization1-b}
\textrm{constrained by: } &
\sum_{i=1}^n(1- \upsilon_i)^+\leq r_s;\\
\label{eq:femto-optimization1-c}
& 0\leq \upsilon_1\leq \cdots \leq \upsilon_n.
\end{align}
\end{subequations}
Now, using Lemma~\ref{lem:m-dmt} we obtain the minimum value of the the above optimization problem, which can be written as
\begin{IEEEeqnarray}{l}
\label{eq:dmt-femto-2a}
d_{s,(n,n,n,n)}^{\textrm{Femto}}(r_s)=d_{n,3n}(r_s)+n^2(\alpha-1),~0\leq r_s\leq n.
\end{IEEEeqnarray}

Next, we obtain the optimal value of the objective function for values of $r_s\geq n$. Note that for any $r_s\geq n$, $\upsilon_i=0~\forall ~i$, which along with equations \eqref{eq:femto-optimization-f} and \eqref{eq:femto-optimization-g} imply that $\beta_j\geq 1,$ and $\gamma_j\geq 1~\forall j$. Putting this in equation \eqref{eq:femto-optimization} we get
\begin{subequations}
\label{eq:femto-optimization2}
\begin{align}
\label{eq:femto-optimization2-a}
d_{s,(n,n,n,n)}^{\textrm{Femto}}(r_s)=& \min \sum_{i=1}^n(2n+1-2i)\gamma_i-n^2
,\\
\label{eq:femto-optimization2-b}
\textrm{constrained by:} &
\sum_{i=1}^n(\alpha- \gamma_i)^+\leq (r_s-n);\\
\label{eq:femto-optimization2-c}
& 1\leq \gamma_1\leq \cdots \leq \gamma_n.
\end{align}
\end{subequations}
To bring the above problem into a form amenable to Lemma~\ref{lem:m-dmt} we use the following variable transform in the above set of equations: $\gamma_i'=\gamma_i-1$. This results in the following equivalent optimization problem.
\begin{subequations}
\label{eq:femto-optimization3}
\begin{align}
\label{eq:femto-optimization3-a}
d_{s,(n,n,n,n)}^{\textrm{Femto}}(r_s)=& \min \sum_{i=1}^n(2n+1-2i)\gamma_i'
,\\
\label{eq:femto-optimization3-b}
\textrm{constrained by:} &
\sum_{i=1}^n(\alpha-1- \gamma_i')^+\leq (r_s-n);\\
\label{eq:femto-optimization3-c}
& 0\leq \gamma_1'\leq \cdots \leq \gamma_n',
\end{align}
\end{subequations}
which in turn by Lemma~\ref{lem:m-dmt} attains the following optimal value:
\begin{IEEEeqnarray}{rl}
\label{eq:dmt-femto-2b}
d_{s,(n,n,n,n)}^{\textrm{Femto}}(r_s)=& (\alpha-1)d_{n,n}\left(\frac{(r_s-n)}{(\alpha-1)}\right),~0\leq \frac{(r_s-n)}{(\alpha-1)}\leq n\\
=& (\alpha-1)d_{n,n}\left(\frac{(r_s-n)}{(\alpha-1)}\right),~n\leq r_s\leq n\alpha.
\end{IEEEeqnarray}

Finally, combining equations \eqref{eq:dmt-femto-2a} and \eqref{eq:dmt-femto-2b} we obtain the desired result.

\section{Proof of Theorem~\ref{thm:asymmetric-FCSIT}}
\label{pf:thm:asymmetric-FCSIT}
Substituting $M_1=M_2=M$ and $\alpha_{ij}=1$ in equation \eqref{eq:main-optimization} we obtain
\begin{subequations}
\label{eq:asymmetric-FCSIT-optimization}
\begin{align}
d_{s,(M,N_1,M,N_2)}^{\textrm{FCSIT}}(r_s)= & \sum_{i=1}^M(2M+N_1+N_2+1-2i)\upsilon_i +\sum_{j=1}^M(M+N_1+1-2j)\beta_j+\sum_{k=1}^M(M+N_2+1-2k)\gamma_k \nonumber \\
\label{eq:asymmetric-FCSIT-optimization-a}
  & -(M+N_2)M +\sum_{k=1}^M\sum_{i=1}^{(M-k)}(1-\upsilon_i-\gamma_k)^+
+\sum_{j=1}^M\sum_{i=1}^{\min\{(N_1-j), M\}}(1-\upsilon_i-\beta_j)^+;\\
\label{eq:asymmetric-FCSIT-optimization-b}
\textrm{constrained by: } &
\sum_{i=1}^{p}(1- \upsilon_i)^++\sum_{j=1}^{q_1}(1-\beta_j)^++\sum_{k=1}^{q_2}(1-\gamma_k)^+< r_s;\\
\label{eq:asymmetric-FCSIT-optimization-c}
& 0\leq \upsilon_1\leq \cdots \leq \upsilon_{p};\\
\label{eq:asymmetric-FCSIT-optimization-d}
& 0\leq \beta_1\leq \cdots \leq \beta_{q_1};\\
\label{eq:asymmetric-FCSIT-optimization-e}
& 0\leq \gamma_1\leq \cdots \leq \gamma_{q_2};\\
\label{eq:asymmetric-FCSIT-optimization-f}
& (\upsilon_i+\beta_j)\geq 1, ~\forall (i+j)\geq (N_1+1);\\
\label{eq:asymmetric-FCSIT-optimization-g}
& (\upsilon_i+\gamma_k)\geq 1, ~\forall (i+k)\geq (M+1).
\end{align}
\end{subequations}

Differentiating the objective function with respect to $\upsilon_i,~\forall~i, \beta_1$ and $\gamma_1$ we find that
\begin{IEEEeqnarray}{rl}
\left.\frac{\partial \mathcal{F}_{\textrm{a-FCSIT}}}{\partial \upsilon_i}\right|_{\beta_1=1,\gamma_1=1}\geq & \left\{\begin{array}{c} \left.\frac{\partial \mathcal{F}_{\textrm{a-FCSIT}}}{\partial \beta_1}\right|_{\gamma_1=1, \alpha_k=0,~\forall k<i, \alpha_k=1,\forall k>i},\\
\left.\frac{\partial \mathcal{F}_{\textrm{a-FCSIT}}}{\partial \gamma_1}\right|_{\beta_1=1, \alpha_k=0,~\forall k<i, \alpha_k=1,\forall k>i},\end{array}\right.~\forall ~i\leq n,
\end{IEEEeqnarray}
where we have denoted the objective function by $\mathcal{F}_{\textrm{a-FCSIT}}$. It is clear form these values that, for $(i-1)\leq r_s\leq i$, the steepest descent is along decreasing $\upsilon_i$ with $\beta_1=\gamma_1=1$ and $i\leq n$. Putting this in equation \eqref{eq:asymmetric-FCSIT-optimization} we get
\begin{subequations}
\label{eq:asymmetric-FCSIT-optimization1}
\begin{align}
\label{eq:asymmetric-FCSIT-optimization1-a}
d_{s,(M,N_1,M,N_2)}^{\textrm{FCSIT}}(r_s)=& \min \sum_{i=1}^n(2M+N_1+N_2+1-2i)\upsilon_i
+M(N_1-M);\\
\label{eq:asymmetric-FCSIT-optimization1-b}
\textrm{constrained by: } &
\sum_{i=1}^M(1- \upsilon_i)^+\leq r_s;\\
\label{eq:asymmetric-FCSIT-optimization1-c}
& 0\leq \upsilon_1\leq \cdots \leq \upsilon_n.
\end{align}
\end{subequations}
Now, using Lemma~\ref{lem:m-dmt} we obtain the minimum value of the the above optimization problem, which can be written as
\begin{IEEEeqnarray}{l}
\label{eq:dmt-asymmetric-FCSIT-a}
d_{s,(M,N_1,M,N_2)}^{\textrm{FCSIT}}(r_s)=d_{M,M+N_1+N_2}(r_s)+M(N_1-M),~0\leq r_s\leq M.
\end{IEEEeqnarray}

Next, we evaluate the optimal value of the objective function for values of $r_s\geq M$. Note that for any $r_s\geq M$, $\upsilon_i=0~\forall ~i$, which along with equations \eqref{eq:asymmetric-FCSIT-optimization-f} and \eqref{eq:asymmetric-FCSIT-optimization-g} imply that $\beta_j\geq 1$ for $j\geq (N_1+1-M)$ and $\gamma_k\geq 1~\forall k$. Clearly, the objective function is minimized for $\beta_j=1$ for $j\geq (N_1+1-M)$ and $\gamma_k= 1~\forall k$. Putting this in equation \eqref{eq:asymmetric-FCSIT-optimization} we get
\begin{subequations}
\label{eq:asymmetric-FCSIT-optimization2}
\begin{align}
\label{eq:asymmetric-FCSIT-optimization2-a}
d_{s,(M,N_1,M,N_2)}^{\textrm{FCSIT}}(r_s)=& \min \sum_{j=1}^{M}(M+N_1+1-2j)\beta_j-M^2+\sum_{j=1}^M \min\{(N_1-j),M\}(1-\beta_j)^+
,\\
\label{eq:asymmetric-FCSIT-optimization2-b}
\textrm{constrained by:} &
\sum_{i=1}^M(1- \beta_j)\leq (r_s-M);\\
\label{eq:asymmetric-FCSIT-optimization2-c}
& 0\leq \beta_1\leq \cdots \leq \beta_M,
\end{align}
\end{subequations}
where $\beta_u=1$ for $u\geq \min\{(N_1-M),M\}$. Since $\beta_u=1$ only for $u\geq \min\{(N_1-M),M\}$ the last term in equation \eqref{eq:asymmetric-FCSIT-optimization2-a} reduces to
\begin{equation*}
\sum_{j=1}^M \min\{(N_1-j),M\}(1-\beta_j)^+=\sum_{j=1}^{\tau}M(1-\beta_j),
\end{equation*}
where we denote $\min\{(N_1-M),M\}=\tau$. When we substitute this, the last optimization problem further reduces to the following:
\begin{subequations}
\label{eq:asymmetric-FCSIT-optimization3}
\begin{align}
\label{eq:asymmetric-FCSIT-optimization3-a}
d_{s,(M,N_1,M,N_2)}^{\textrm{FCSIT}}(r_s)=& \min \sum_{j=1}^{\tau}(N_1+1-2j)\beta_j+(M-\tau)(N_1-\tau)-M^2+ \tau M
,\\
=& \min \sum_{j=1}^{\tau}(N_1+1-2j)\beta_j+(M-\tau)(N_1-M-\tau),\nonumber\\
=& \min \sum_{j=1}^{\tau}(\max\{N_1-M,M\}+\tau+1-2j)\beta_j,\nonumber\\
\label{eq:asymmetric-FCSIT-optimization3-b}
\textrm{constrained by:} &
\sum_{i=1}^{\tau}(1- \beta_j)\leq (r_s-M);\\
\label{eq:asymmetric-FCSIT-optimization3-c}
& 0\leq \beta_1\leq \cdots \leq \beta_{\tau}.
\end{align}
\end{subequations}
Using Lemma~\ref{lem:m-dmt} in the above optimization then it yields the following solution:
\begin{IEEEeqnarray}{rl}
d_{s,(M,N_1,M,N_2)}^{\textrm{FCSIT}}(r_s)=& d_{\tau,\max\{N_1-M,M\}}(r_s-M),~0\leq r_s-M\leq \tau;\nonumber\\
\label{eq:dmt-asymmetric-FCSIT-b}
=& d_{2M,N_1}(r_s),~M\leq r_s\leq \min\{N_1,2M\}.
\end{IEEEeqnarray}
Finally, combining equations \eqref{eq:dmt-asymmetric-FCSIT-a} and \eqref{eq:dmt-asymmetric-FCSIT-b} we obtain the desired result.

\section{Proof of Lemma~\ref{lem:IML-dmt-exponents}}
\label{pf:lem:IML-dmt-exponents}
Let us denote the event that a target rate tuple $(r_1\log(\rho),r_2\log(\rho))$ does not belong to $\mathcal{R}_{\textrm{IIML}}$ by $\mathcal{O}_{\textrm{IIML}}$, i.e.,
\begin{eqnarray*}
\mathcal{O}_{\textrm{IIML}}=\Big\{\mathcal{H}:~(r_1\log(\rho),r_2\log(\rho))\notin \mathcal{R}_{\textrm{IIML}}\Big\}.
\end{eqnarray*}
Now, let us denote the maximum among the average probability of errors at both the receivers be denoted by $\mathcal{P}_e$, then using Bayes' rule we get
\begin{IEEEeqnarray}{rl}
\mathcal{P}_e= & \mathcal{P}_{e|\mathcal{O}_{\textrm{IIML}}}\Pr\{\mathcal{O}_{\textrm{IIML}}\}+\mathcal{P}_{e|\mathcal{O}^c_{\textrm{IIML}}}\Pr\{\mathcal{O}^c_{\textrm{IIML}}\},\\
\leq & \Pr\{\mathcal{O}_{\textrm{IIML}}\}+\mathcal{P}_{e|\mathcal{O}^c_{\textrm{IIML}}},
\end{IEEEeqnarray}
where $\mathcal{P}_{e|\mathcal{E}}$ denote the conditional average probability of error given the event $\mathcal{E}$. Note that, the above equation holds for any SNR. When the target rate tuple belongs to $\mathcal{R}_{\textrm{IIML}}$, letting the block length be sufficiently large the probability of error given $\mathcal{O}^c_{\textrm{IIML}}$ can be be made arbitrarily close to zero. Therefore, letting the block length of the code goes to infinity at both side of the above equation we obtain
\begin{IEEEeqnarray}{rl}
\mathcal{P}_e~\leq  \Pr\{\mathcal{O}_{\textrm{IIML}}\}=&\Pr\left\{\{I_{c_1}\leq r_1\log(\rho)\}\cup \{I_{c_2}\leq r_2\log(\rho)\}\cup \{I_{c_s}\leq r_s\log(\rho)\}\right\},\\
\leq &\sum_{i=1,2,s}\Pr\left\{I_{c_i}\leq r_i\log(\rho)\right\},\\
\stackrel{(a)}{\dot{=}}&\max_{i=1,2,s}\Pr\left\{I_{c_i}\leq r_i\log(\rho)\right\},\\
\dot{=}&\max_{i=1,2,s}\rho^{-d_{i,(M_1,N_1,M_2,N_2)}^{\textrm{IIML}}(r_i)}=\rho^{-\min_{i=1,2,s} \{d_{i,(M_1,N_1,M_2,N_2)}^{\textrm{IIML}}(r_i)\}},
\end{IEEEeqnarray}
where step $(a)$ follows from the fact that in the asymptotic SNR the largest term dominates and the last step follows from equation \eqref{eq:def-outage-exponent-IML-decoder}. Finally, the desired result follows from the fact that $\mathcal{P}_e\dot{=}\rho^{-d_{(M,N_1,M,N_2)}^{\textrm{IIML}}(r_1,r_2)}$.

\section{Proof of Lemma~\ref{lem:IML-sum-optimization}}
\label{pf:lem:IML-sum-optimization}
The joint distribution of $(\bar{\beta},\bar{\upsilon})$ can be obtained from equation \eqref{eq:pdf-product-form} substituting $f_{W_2|W_3}(.)=1$. The rest of the proof follows the same steps as the proof of Lemma~\ref{lem:ZIC-optimization-problem} and hence skipped to avoid repeating.

\section{Proof of Theorem~\ref{thm:DMT-symmetric-no-CSIT}}
\label{pf:thm:DMT-symmetric-no-CSIT}
Since the CSIT DMT is an upper bound to the No-CSIT DMT, it is sufficient to prove that when the condition of equation \eqref{eq:condition-noCSIT-1} is satisfied the expressions of Theorem~\ref{thm:DMT-symmetric-Z-FCSIT} and Lemma~\ref{lem:DMT-symmetric-Z-achievable} are identical.

It is clear from the comparison of equations \eqref{eq:dmt-symmetric-a-geq-1} and \eqref{eq:dmt-IML-symmetric} that for $r_s=2r\geq n$ they are identical for all values of $\alpha\geq 1$ and hence when $\alpha$ satisfies equation \eqref{eq:condition-noCSIT-1}. Now, it is only necessary to find a condition when the expressions in equations \eqref{eq:dmt-symmetric-a-geq-1} and \eqref{eq:dmt-IML-symmetric} are identical even when $r_s<n$, which is what we derive next and turns out to be identical to the condition of equation \eqref{eq:condition-noCSIT-1}.

It is clear from equation \eqref{eq:dmt-IML-symmetric} and \eqref{eq:dmt-symmetric-a-geq-1} that,
\begin{equation*}
d_{s,(n,n,n,n)}^{\textrm{IIML}}(r_s)< d_{s,(n,n,n,n)}^{\textrm{CSIT}}(r_s),~\textrm{for}~ r_s\leq n.
\end{equation*}
Therefore, the DMTs given by Theorem~\ref{thm:DMT-symmetric-Z-FCSIT} and Lemma~\ref{lem:DMT-symmetric-Z-achievable} are identical only if for $r_s\leq n$, the single user performance is dominating, i.e.,
\begin{IEEEeqnarray}{rl}
d_{n,n}(r)\leq &d_{s,(n,n,n,n)}^{\textrm{IIML}}(r_s=2r);\nonumber \\
\label{eq:condition-sym-dmt-no-CSIT}
d_{n,n}(r)\leq & d_{n,2n}(2r)+n^2(\alpha-1),
\end{IEEEeqnarray}
where in the last step we have substituted the value of $d_{s,(n,n,n,n)}^{\textrm{IIML}}(2r)$ from equation \eqref{eq:dmt-IML-symmetric}. Since $d_{s,(n,n,n,n)}^{\textrm{IIML}}(2r)$ decays much faster than $d_{n,n}(r)$ with increasing $r$ and both are continuous functions of $r$, equation \eqref{eq:condition-sym-dmt-no-CSIT} will be valid for all $r\leq \frac{n}{2}$ if it is true for $r= \frac{n}{2}$. Substituting this in equation \eqref{eq:condition-sym-dmt-no-CSIT} we get
\begin{IEEEeqnarray*}{rl}
d_{n,n}\left(\frac{n}{2}\right)\leq & n^2(\alpha-1),\\
\textrm{or,}~\alpha \geq &1+\frac{d_{n,n}(\frac{n}{2})}{n^2}.
\end{IEEEeqnarray*}

\bibliographystyle{IEEETran}
\bibliography{mybibliography}

\end{document}